\documentclass[11pt,a4paper]{article}
\usepackage[utf8]{inputenc}
\usepackage[english]{babel}

\usepackage[margin=1.25in]{geometry}
\usepackage{graphicx}
\usepackage{amsmath,amsthm}
\usepackage{amsfonts,amssymb,mathtools} %,bbm
\usepackage{enumitem}
\usepackage{complexity}
\usepackage{hyperref}
\usepackage{subfig}
\usepackage{authblk}
\usepackage{tikz}
\usetikzlibrary{arrows.meta,shapes}
\usetikzlibrary{shapes.geometric, circuits.logic.US, positioning}
\usetikzlibrary{decorations,decorations.pathreplacing}

\usepackage[color=red!60,textsize=small]{todonotes}

\theoremstyle{plain}
\newtheorem{theorem}{Theorem}[section]

\newtheorem{corollary}  [theorem]{Corollary}
\newtheorem{definition} [theorem]{Definition}

\newtheorem{lemma}      [theorem]{Lemma}
\newtheorem{problem}    [theorem]{Problem}
\newtheorem{proposition}[theorem]{Proposition}

\newtheorem{claim}      [theorem]{Claim}
\newtheorem{remark*}{Remark}

\newenvironment{claimproof}[1]{\par\noindent{\textit{Proof of claim.}}\space#1}{\hfill $\diamond$\vspace{0.3cm}}

\newcommand{\N} {\mathbb{N}}

\newcommand{\BCTP} {{\bf{MBAN-DCT}}}
\newcommand{\ICVP}{{\bf{iter\textnormal{-}CVP}}}
\newcommand{\IMCVP}{{\bf{iter\textnormal{-}MCVP}}}
\newcommand{\SSmC}{{\bf{MBAN\textnormal{-}LC}}}

\newcommand{\OCCM}{{\bf{DCTP-One}}}

\newcommand{\CGlobalFP}{{\bf C\textnormal{-}GlobalFP}}
\newcommand{\MCtwoFP}{{\bf MC\textnormal{-}$2$FP}}
\newcommand{\MCLC}{{\bf MC\textnormal{-}LC}}
\newcommand{\MCExpLC}{{\bf MC\textnormal{-}ExpLC}}
\newcommand{\Czerotoone}{{\bf C\textnormal{-}$0^n$to$1^n$}}
\newcommand{\MCkCC}{{\bf MC\textnormal{-}$k$CC}}
\newcommand{\MCgekCC}{{\bf MC\textnormal{-}$\geq k$CC}}

\newcommand{\PSC}{\PSPACE\textnormal{-complete}}
\newcommand{\PSH}{\PSPACE\textnormal{-hard}}
\newcommand{\NPH}{\NP\textnormal{-hard}}
\renewcommand{\NPC}{\NP\textnormal{-complete}}

\newcommand{\majloc}{\texttt{maj}}
\newcommand{\majglob}[1]{A}
\newcommand{\majority}[1]{\majloc(#1) }

\newcommand{\val}       [2] {#1_{#2} }

\renewenvironment{problem}[4]{
    \medskip
    \noindent
    \fbox{\parbox{.98\textwidth}{
        \vspace*{.1em}
        \textit{#1} (#2).\\[.2em]
        \textit{Input:} #3.\\[.2em]
        \textit{Question:} #4 ?\\[-.9em]
    }}
    \medskip
    }
	
\definecolor{myGreen}{RGB}{0,120,0}
\definecolor{myRed}{RGB}{200,0,0}

\title{Majority Boolean networks classifying density:\\structural characterization and complexity}

\author[1,2]{K{\'e}vin Perrot}
\author[1]{Marius Rolland}
\affil[1]{Aix Marseille Université, CNRS, LIS, Marseille, France}
\affil[2]{Université publique, France}
\date{2026}

\begin{document}

\maketitle

\begin{abstract}
  Given a set of entities each holding a Boolean state,
  the Density Classification Task (DCT) asks them to converge
  to the most represented state.
  Given a directed graph of entities
  where each node synchronously updates to the local majority among its in-neighbors,
  we characterize in terms of three forbidden patterns when it solves DCT,
  and show that discovering these patterns is complete for {\NP} and {\PSPACE}.
  %
  %\keywords{
  %  Majority rule \and
  %  Density classification task \and
  %  Discrete dynamical systems \and
  %  Boolean automata networks \and
  %  Computational complexity.
  %}
\end{abstract}

%%%%%%%%%%%%%%%%%%%%%%%%%%%%%%%%
% CONTENT
%%%%%%%%%%%%%%%%%%%%%%%%%%%%%%%%
	
%%%%%%%%%%%%%%%%%%%%%%%%%%%%%%%%
\section{Introduction}

A finite discrete dynamical system of $n$ interacting entities solves the
Density Classification Tasks (DCT, also called the global majority problem)
when, from any starting configuration $c$ of Boolean states,
it evolves to the homogeneous fixed point configuration consisting in the majority state in $c$~\cite{o14}.
Depending on the precise system at stake, solutions may exist or not.
The DCT is traditionally studied on Cellular Autamata (CAs),
a highly parallel model of computation.
%with surprisingly ``simple'' rules capable of universal computation~\cite{c04,r11,t22}.
Along with the parity (convergence depending on the parity of state $1$)
and synchronization (convergence to a limit cycle of two homogenous configurations) tasks,
it challenges the abilty of distributed information processing systems to reach a non-trivial consensus~\cite{mmog18,prb25,wnbb25}.

The DCT is well defined only when $n$ is odd, to avoid tie cases.
It is known since 1995 and the work of Land and Belew that
the DCT does not admit any solution in CAs with periodic boundary conditions,
regardless of the radius and dimension~\cite{lb95}.
Nevertheless, various relaxations led to solutions, consequently highlighting the locks behind this impossibility.
For example, in one dimension using few rules in temporal or spatial sequence~\cite{mo05,f97},
or with additional tracks or alphabets~\cite{kg12,bmeor13,pbr25},
or with the adjunction of randomness~\cite{f11}.
%See~\cite{o14} for a survey.

% unused references
% -----------------
% - 184 somehow solves density... (1996)
%   https://doi.org/10.1103/PhysRevLett.77.4969
% - Evolutionary algorithm to find not so bad solutions:
%   - (1995) https://melaniemitchell.me/PapersContent/EGSCA.pdf
%   - (2000) https://doi.org/10.7551/mitpress/1432.003.0060
%   - (2007) https://doi.org/10.1016/j.tcs.2007.01.001
%   - (2009) https://doi.org/10.1016/j.entcs.2009.09.018
% - Parity solution FBO (2013, corrected AUTOMATA'2025)
%   https://doi.org/10.1007/s11047-013-9374-9
% - Memory ECA 184 (2019)
%   https://wpmedia.wolfram.com/sites/13/2019/03/18-3-4.pdf
% - Noise experiment on 13 rules (2024)
%   https://doi.org/10.1162/isal_a_00823

Our work is inspired by the recent discovery of Majority Boolean Automata Networks (MBANs) solving the DCT~\cite{ao20}.
An MBAN is defined as a directed graph where each node holds a state, and updates it to the majority state among its in-neighbors
(all in-degrees should be odd, to avoid tie cases).
Of course, the clique (with loops) on $n$ vertices ($n$ being odd) is a trivial example
where the dynamics converges to the majority state in one step.
We dig into the necessary and sufficient conditions characterizing on which graphs does
the local majority rule solves the global majority problem.

After defining MBANs and the DCT (Section~\ref{sec:def}),
we first show that it is {\PSC} to decide whether a single configuration converges to the correct uniform fixed point (Section~\ref{sec:one_conf}).
Then we characterize the graphs of MBANs solving the DCT by three forbidden patterns (Section~\ref{sec:chara_MBAN_sol_DCTP}),
and study the complexity of recognizing them (Sections~\ref{sec:ss_and_lead_NPC}). % \ref{sec:part_ssmc}
%Some proofs are presented in Appendix~\ref{a:proofs}.

%%%%%%%%%%%%%%%%%%%%%%%%%%%%%%%%
\section{Definitions}\label{sec:def}

Let $[n]=\{0,\ldots,n-1\}$ be a set of entities called \emph{automata}.
A \emph{configuration} $x\in\{0,1\}^n$ assigns a Boolean state $\val{x}{v}$ to each $v\in [n]$.
We extend this notation to any subset $W\subseteq V$,
with $\val{x}{W}\in\{0,1\}^{|W|}$ the restriction of $x$ on domain $W$,
and also denote $0^n$ and $1^n$ the uniform configurations of size $n$.
Let $\majloc:\{0,1\}^n\to\{0,1\}$ denote the majority functions, that is:
\[
  \majloc(x)=\begin{cases}
    0 & \text{if } \sum_{v\in [n]} x_v < \frac{n}{2},\\
    1 & \text{if } \sum_{v\in [n]} x_v > \frac{n}{2},\\
    \text{undefined} & \text{otherwise (tie case)}.
  \end{cases}
\]
A \emph{Majority Boolean Automata Network} (MBAN) lets configurations evolve
according to a global map $A_G:\{0,1\}^n\to\{0,1\}^n$
determined by a directed graph $G=(V,E)$ with $V=[n]$, as follows:
\[
  \text{for all } v\in V,~\val{A_G(x)}{v}=\majloc(x_{N_G(v)}).
\]
In words, at each step each automaton takes the most represented state among its in-neighbors
($N_G(v)$ is the in-neighborhood of $v$ in $G$).
When $G$ is clear from the context, we may drop the subscript.
The dynamics (graph of $A_G$) of a finite discrete system has connected components made of
one \emph{limit cycle} (whose \emph{period} is the smallest $t$ such that $A^t(x)=x$ for each $x$ in the cycle)
and upward trees rooted in it (transient configurations);
It is called a \emph{fixed point} if it has period $1$.

We say that an MBAN $A$ solves the \emph{Density Classification Task} (DCT) when,
from any configuration $x$,
all automata converge to the majority state $\majority{x}$:
\[
  \forall x\in\{0,1\}^n,~\exists t\in\N,~A^t(x)=\majority{x}^n
\]
Remark that for any MBAN the configurations $0^n$ and $1^n$ are fixed points.
Moreover, the dynamics is deterministic therefore in any solution to the DCT
the majority state never changes throughout the evolution.
The DCT is well defined only when $n$ is odd, which will always be assumed.
The dynamics of the MBAN is well defined only when $N_G(v)$ is odd for all $v\in V$,
which will be ensured in all our constructions.
We focus on the following problem.

\begin{problem}
{MBAN solves DCT problem}{\BCTP}
{An MBAN $A_G$ of size $n$ given by its directed graph $G$}
{Does $A_G$ solve the DCT}
\end{problem}

The problem \BCTP{} is in \PSPACE{}, because it is answered by sequentially checking the $2^n$ configurations,
each requiring polynomial space to be iterated for $2^n$ time steps
(first recording its majority state to check the correctness of convergence).
An immediate necessary condition for $A_G$ to solve the DCT is that $G$ is strongly connected,
and all our constructions will be so.

%%%%%%%%%%%%%%%%%%%%%%%%%%%%%%%%
\section{For one configuration}\label{sec:one_conf}

Although the global property of solving DCT
is difficult to analyze, deciding whether a single given configuration
converges to the correct uniform configuration is easier to classify in terms of complexity.
We will prove that the problem \OCCM{} is \PSC{} (Theorem~\ref{th:config_converge_PSC}). 

\begin{problem}
{MBAN solves DCT on One configuration}{\OCCM}
{An MBAN $A_G$ of size $n$ given by its graph $G$, and $x\in\{0,1\}^n$}
{Does there exist an integer $t>0$ such that $A_G^t(x)=\majority{x}^n$}
\end{problem}

The membership in \PSPACE{} is given by the proof that \BCTP{} is in \PSPACE{}. 
Thus, it remains to prove that this problem is \PSH{}.
For this, we use the simulation of $\mathrm{AND}$ and $\mathrm{OR}$ gates with the majority local rule (see Figure~\ref{fig:gates}). 
Consequently, we can intuitively model a monotone circuit with an MBAN,
and we will indeed reduce from \emph{Iterated Monotone Circuit Value Problem} (\IMCVP{}) which is known to be \PSC{}~\cite{gmst16}. 
This problem is: given a monotone circuit $C : \{0,1\}^n \to \{0,1\}^n$ (without negation), a configuration $x\in\{0,1\}^n$ and an integer $i$ among $\{0,\dots,n-1\}$, does there exist an integer $t>0$ such that $C^{t}(x)_i = 1$?
That is, does iterating circuit $C$ from $x$ eventually brings the $i$-th component in state $1$?

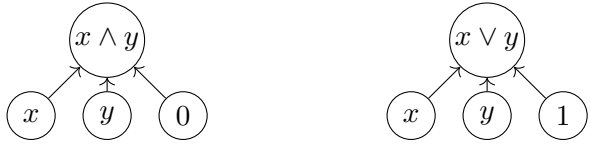
\begin{figure}[t]
    \centering
    \begin{tikzpicture}
	% styles
	\tikzstyle{automate} = [draw,circle,inner sep=1pt,minimum size=18pt]
	\tikzstyle{arc} = [-{>[length=1mm]}]

    \begin{scope}[xshift=-3cm]
        \node[automate] (ax) at (-2,-0.5) {$x$};
	    \node[automate] (ay) at (-1,-0.5) {$y$};
	    \node[automate] (a0) at (0,-0.5) {$0$};
	    \node[automate] (ar) at (-1,0.5) {$x\wedge y$};
        \draw[arc] (ax) to (ar);
        \draw[arc] (ay) to (ar);
        \draw[arc] (a0) to (ar);
    \end{scope}

    \begin{scope}
        \node[automate] (ox) at (0,-0.5) {$x$};
    	\node[automate] (oy) at (1,-0.5) {$y$};
        \node[automate] (o1) at (2,-0.5) {$1$};
    	\node[automate] (or) at (1,0.5) {$x \vee y$};
        
        \draw[arc] (ox) to (or);
        \draw[arc] (oy) to (or);
        \draw[arc] (o1) to (or);
    \end{scope}

\end{tikzpicture}
    \caption{
      Simulation of $\mathrm{AND}$ (left) and $\mathrm{OR}$ (right) gates of two Boolean inputs $x$ and $y$,
      where the top node applies a majority local rule equivalent to the simulated gate
      ($0$ and $1$ are constant Boolean values).
      %\vspace*{-1em} % HACK
    }
    \label{fig:gates}
    \label{fig:enter-label}
\end{figure}

\begin{theorem}\label{th:config_converge_PSC}
    \OCCM{} is \PSC{}. %, even when restricted to strongly connected digraphs.
\end{theorem}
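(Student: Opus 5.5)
Membership in \PSPACE{} is already given, so the plan is to reduce \IMCVP{} to \OCCM{} in polynomial time. Take an instance $(C,x,i)$ with $C:\{0,1\}^n\to\{0,1\}^n$ monotone. First normalize $C$, at polynomial cost, into a layered circuit. All gates are fan-in-two $\mathrm{AND}$/$\mathrm{OR}$ gates, and every path from an input to an output has the same length $d$, obtained by padding with identity gates. An identity gate is realized by a node with a single in-neighbor, which has odd in-degree $1$. Each gate becomes an automaton, using the gadgets of Figure~\ref{fig:gates}. Output $j$ of the last layer feeds back to input $j$ of the first layer. The iteration $C^t(x)$ is then simulated layer by layer: after $dt$ steps the input layer holds $C^t(x)$. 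To make every input position synchronous, $x$ is placed on the input layer and every other layer is initialized with the values it would have if the computation were already in steady state. Concretely, layer $k$ holds the result of the first $k$ layers applied to $x$, so the whole pipeline is consistent.

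The constant sources must stay constant and must decide the global majority. The plan is to add a large clique $Z$ with loops, of odd size, initialized to $0$. Each node of $Z$ has its in-neighborhood inside $Z$, so it stays $0$ forever. Likewise add a clique $U$ of $1$s, with $|Z|>|U|+(\text{number of circuit nodes})$ so that $\majority{x}=0$ for the initial configuration. The $1$-clique $U$ must be large enough to supply all the constant $1$ inputs, but a single node with a loop already suffices as a source, so $|U|$ can be small. The $\mathrm{AND}$ gadgets draw their constant input from $Z$ and the $\mathrm{OR}$ gadgets from $U$.

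The detection step works as follows. If output $i$ ever becomes $1$, the construction should force convergence to a state that is not $0^n$; but $U$ already never becomes $0$, so the configuration never reaches $0^n$ at all. That is wrong, and fixing it is the main obstacle. The answer must be ``converges to $0^n$'' exactly when $C^t(x)_i$ stays $0$ forever (the complement, which is fine since \PSPACE{} is closed under complement). This requires the $1$-sources to die out unless they are triggered. Instead of a permanent clique $U$, use a $1$-source that is ``alive'' but decays to $0$ unless fed, and this is hard with majority.

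The alternative, which is what I would try first, reverses the roles. Make the majority state $1$ by a huge $1$-clique $U'$ whose nodes are not self-sustaining. Each node of $U'$ has in-neighbors consisting of itself plus two copies of a latch node $\ell$, for an in-degree of $3$. The latch $\ell$ is an $\mathrm{OR}$ of $\ell$ itself and output $i$ of the circuit. Keep $Z$ as a $0$-clique of moderate size. Then the overall majority is $1$, and the question becomes whether everything reaches $1^n$. The latch turns on iff output $i$ is ever $1$. Once $\ell=1$ it stays $1$, and in a second phase this $1$ must spread to every node, including $Z$ and all circuit nodes. To get this, give each $Z$ node and each gadget node enough extra in-edges from $\ell$-driven amplifier cliques that, once the amplifiers are all $1$, they outvote the rest. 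Before activation the amplifiers are $0$, so they do not perturb the simulation: they add equal numbers of $0$ votes that must be balanced by paired constant-$1$ votes from $U'$. Conversely, if $\ell$ never fires, the $Z$ nodes stay $0$ and the system never reaches $1^n$.

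The delicate points to verify are the following. First, before activation each gate's majority must still equal its intended $\mathrm{AND}$/$\mathrm{OR}$; the extra inputs come in balanced $0/1$ pairs, so this holds. Second, after activation the amplifier votes must exceed half of every in-degree. Third, every in-degree must be odd. Fourth, the amplified network must then genuinely reach $1^n$. For the fourth point I would propagate along the layers: the amplifier majority forces all nodes to $1$ within a constant number of steps once the amplifier cliques are $1$.
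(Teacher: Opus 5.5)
\textbf{There is a genuine gap.} Your high-level plan for the reversed design is the right one: make the initial majority $1$, add a latch $\ell$ computing $\mathrm{OR}(\ell,\text{output } i)$, and force every node to $1$ once it fires. But the construction as written fails at the first point you list. Every node of $U'$ has in-neighbors itself and $\ell$. The latch $\ell$ must start at $0$, otherwise negative instances become positive. So all of $U'$ is in state $0$ from step $1$ until $\ell$ fires.

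Yet you take the $1$-halves of the ``balanced $0/1$ pairs'' that neutralize the amplifiers from $U'$. From step $2$ on, these pairs are $0/0$. Each gadget node then receives $2K$ extra $0$ votes and no longer computes its gate; for $K\ge 2$, which you need after activation, it is simply stuck at $0$. The simulated circuit collapses, so output $i$ stops reflecting $C^t(x)_i$ after the first iteration. Any instance whose witness $t$ is not tiny is therefore mapped to a negative instance. Your claim that ``the extra inputs come in balanced $0/1$ pairs, so this holds'' is thus not just unverified but false for your construction.

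There are two further local defects. In-neighborhoods are sets, so ``two copies of $\ell$'' gives the $U'$ nodes in-degree $2$, which is a tie. And an amplifier \emph{clique} cannot be switched by the single vote of $\ell$; the amplifiers must instead be nodes whose unique in-neighbor is $\ell$.

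The root cause is that you carried over the requirement of your third paragraph, that $1$-sources must decay, to a setting where it no longer applies. Once the target is $1^n$, a \emph{permanent} $1$-clique is harmless. To repair the construction, make $U'$ a clique with loops, fixed at $1$, and draw all constant-$1$ votes from it. Add to the in-neighborhood of every circuit node $K\ge 2$ distinct amplifier nodes copying $\ell$ and $K$ distinct $U'$-nodes. For the nodes of $Z$, require $2K>|Z|$ with $|Z|$ odd. Then the simulation is exact before $\ell$ fires, and every node is $1$ two steps after it fires.

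This repaired version is a heavier variant of the paper's proof. The paper uses a permanent $1$-clique $P$ for the majority and a persistent $1$ node $b_1$. It merges your latch and your $0$-constant into a single node $b_0$ with in-neighbors $b_0$, output $i$ and $b_1$. Each input node of the circuit reads output $j$, $b_0$ and $b_1$, so the firing of $b_0$ forces the inputs to $1$ and turns every $\mathrm{AND}$ gadget into an $\mathrm{OR}$. The $1$s then propagate through the layered monotone circuit to all of $V_C$ within $d$ steps, with no amplifiers or balancing pairs needed.
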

    
\begin{proof}
    Let $C : \{0,1\}^n \to \{0,1\}^n$ be a monotone Boolean circuit, $x$ a configuration of $C$ and $i \in \{0,\ldots,n-1\}$.
    To construct a directed graph $G$ for the MBAN, we start from the circuit $C$ seen as an acyclic graph $(V_C,E_C)$ where each vertex is either an input (there are $n$ inputs), or a gate AND or OR (the $n$ outputs are specific gates).
    Without loss of generality, we assume that $|V_C|$ is even, that the circuit has fan-in two,
    is layered (the two in-neighbors of a gate in layer $\ell+1$ are taken from layer $\ell$)
    and that after some iterations, the formula describing the value of output $i$ depends of all the inputs.
    First, we convert all the AND and OR gates into majority local rules as described in Figure~\ref{fig:gates},
    using two new vertices $b_0$ and $b_1$ to play the role of the constants $0$ and $1$ respectively. 
    Second, for each input vertex $j\in\{0,\dots,n-1\}$ we add as in-neighbors the output vertex number $j$,
    and also $b_0$ and $b_1$ (having one extra $0$ and one extra $1$ will let the majority
    reproduce the identity, hence the ouputs will be copied to the inputs).
    We add the following in-neighbors to $b_0$ and $b_1$:
    \begin{itemize}[nosep]
        \item $b_0$ has a loop, and also $i$ and $b_1$ as in-neighbors,
        \item $b_1$ has a loop, and also $b_0$ as in-neighbor.
    \end{itemize}
    At this point all vertices have in-degree three, except $b_1$ which has in-degree two, and the graph is strongly connected.
    
    Finally, we add a clique on vertex set $P$ of size $|V_C|+1$ (odd),
    and take one arbitrary vertex $p\in P$, remove its loop, replace it with $b_1$ as in-neighbor, and also add $p$ as a third in-neighbor of $b_1$.
    This concludes the construction of $G'=(V',E')$ on vertex set $V'=V_C\sqcup\{b_0,b_1\}\sqcup P$,
    where $|V'|=2\,|V_C|+3$ is odd. All the vertices have in-degree three, except the vertices from $P$ which have in-degree $|V_C|+1$, and $G'$ is strongly connected. See Figure~\ref{fig:DCTP-One}.
    
    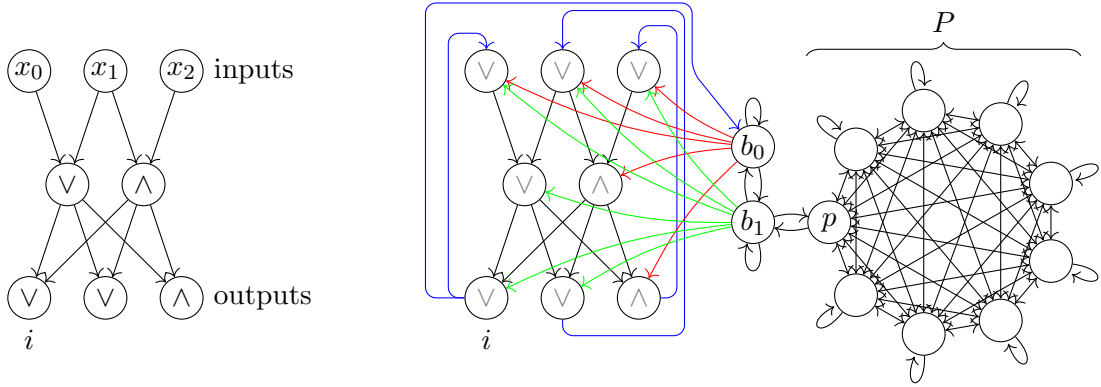
\begin{figure}[t]
        \centering
        \resizebox{\textwidth}{!}{\begin{tikzpicture}
    \tikzstyle{node} = [circle, draw, inner sep=1pt, minimum size=16pt]
    \tikzstyle{mnode} = [circle, draw, inner sep=1pt, minimum size=16pt, text=black!50]
    \tikzstyle{arc} = [-{>[length=1mm]}]
    \tikzstyle{rarc} = [-{>[length=1mm]}, rounded corners=4pt]
    \tikzstyle{darc} = [{<[length=1mm]}-{>[length=1mm]}]
    % circuit
    \begin{scope}
      % input nodes
      \foreach \x in {0,1,2}
        \node[node] (i\x) at (\x,0) {$x_{\x}$};
      % gates
      \node[node] (g0) at (0.5,-1.5) {$\vee$};
      \node[node] (g1) at (1.5,-1.5) {$\wedge$};
      % output gates
      \node[node] (o0) at (0,-3) {$\vee$};
      \node[node] (o1) at (1,-3) {$\vee$};
      \node[node] (o2) at (2,-3) {$\wedge$};
      % arcs
      \draw[arc] (i0) to (g0);
      \draw[arc] (i1) to (g0);
      \draw[arc] (i1) to (g1);
      \draw[arc] (i2) to (g1);
      \draw[arc] (g0) to (o0);
      \draw[arc] (g0) to (o1);
      \draw[arc] (g0) to (o2);
      \draw[arc] (g1) to (o0);
      \draw[arc] (g1) to (o1);
      \draw[arc] (g1) to (o2);
      % labels
      \node[right=0pt of i2] {inputs};
      \node[right=0pt of o2] {outputs};
      \node[below=0pt of o0] {$i$};
    \end{scope}
    % MBAN
    \begin{scope}[shift={(6,0)}]
      % nodes of the circuit
      \foreach \x in {0,1,2}
        \node[mnode] (mi\x) at (\x,0) {$\vee$};
      \node[mnode] (mg0) at (0.5,-1.5) {$\vee$};
      \node[mnode] (mg1) at (1.5,-1.5) {$\wedge$};
      \node[mnode] (mo0) at (0,-3) {$\vee$};
      \node[mnode] (mo1) at (1,-3) {$\vee$};
      \node[mnode] (mo2) at (2,-3) {$\wedge$};
      % label i
      \node[below=0pt of mo0] {$i$};
      % arcs of the circuit
      \draw[arc] (mi0) to (mg0);
      \draw[arc] (mi1) to (mg0);
      \draw[arc] (mi1) to (mg1);
      \draw[arc] (mi2) to (mg1);
      \draw[arc] (mg0) to (mo0);
      \draw[arc] (mg0) to (mo1);
      \draw[arc] (mg0) to (mo2);
      \draw[arc] (mg1) to (mo0);
      \draw[arc] (mg1) to (mo1);
      \draw[arc] (mg1) to (mo2);
      % i and outputs to inputs
      \draw[rarc,blue] (mo0) -- ++(-.5,0) -- ++(0,3.5) -- ++(.5,0) -- (mi0);
      %\draw[rarc,blue] (mo0) -- ++(-.6,0) -- ++(0,3.6) -- ++(1.6,0) -- (mi1);
      %\draw[rarc,blue] (mo0) -- ++(-.7,0) -- ++(0,3.7) -- ++(2.7,0) -- (mi2);
      \draw[rarc,blue] (mo1) -- ++(0,-.5) -- ++(1.6,0) -- ++(0,4.3) -- ++(-1.6,0) -- (mi1);
      \draw[rarc,blue] (mo2) -- ++(.5,0) -- ++(0,3.6) -- ++(-.5,0) -- (mi2);
      % nodes b_0 b_1
      \node[node] (b0) at (3.5,-1) {$b_0$};
      \node[node] (b1) at (3.5,-2) {$b_1$};
      % arcs b_0 b_1 to circuit
      \draw[arc,red] (b0) to[bend left=10] (mi0);
      \draw[arc,red] (b0) to[bend left=10] (mi1);
      \draw[arc,red] (b0) to[bend left=10] (mi2);
      \draw[arc,red] (b0) to[bend right=10] (mg1);
      \draw[arc,red] (b0) to[bend right=10] (mo2);
      \draw[arc,green] (b1) to[bend left=10] (mi0);
      \draw[arc,green] (b1) to[bend left=10] (mi1);
      \draw[arc,green] (b1) to[bend left=10] (mi2);
      \draw[arc,green] (b1) to[bend left=10] (mg0);
      \draw[arc,green] (b1) to[bend right=10] (mo0);
      \draw[arc,green] (b1) to[bend right=10] (mo1);
      % arcs b_0 b_1
      \draw[arc] (b0) to[bend left=15] (b1);
      \draw[arc] (b1) to[bend left=15] (b0);
      \draw[arc,loop above] (b0) to (b0);
      \draw[arc,loop below] (b1) to (b1);
      % i to b_0
      \draw[rarc,blue] (mo0) -- ++(-.8,0) -- ++(0,3.9) -- ++(3.5,0) -- ++(0,-.9) -- (b0);
      % clique (p is p4)
      \foreach \r in {0,1,2,3,5,6,7,8}
        \node[node,shift={(6,-2)}] (p\r) at (\r*40+20:1.5) {} edge [in=40*\r+20-12,out=40*\r+20+12,loop] ();;
      \node[node,shift={(6,-2)}] (p4) at (4*40+20:1.5) {$p$};
      \foreach[count=\i from 0] \r in {1,...,8}{
        \foreach \rr in {0,...,\i}{
          \draw[darc] (p\r) to (p\rr);
        }
      }
      \draw[decorate,decoration={brace,amplitude=5pt}] (4.2,.2) -- node[above,yshift=5pt]{$P$} (7.8,.2);
      % arcs p b_1
      \draw[arc] (p4) to[bend left=15] (b1);
      \draw[arc] (b1) to[bend left=15] (p4);
    \end{scope}
\end{tikzpicture}}
        \caption{
            Illustration of the construction of the digraph $G'$ of an MBAN (right), from a monotone circuit $C$ with $n=3$ and $|V_C|=8$ (left) in the proof of Theorem~\ref{th:config_converge_PSC}.
            The $\wedge$ and $\vee$ labels in the nodes of $G'$ indicate the expected gate simulated when the state of $b_0$ is $0$ and the state of $b_1$ is $1$.
            %\vspace*{-1em} % HACK
        }
        \label{fig:DCTP-One}
    \end{figure}
    
    The configuration $x'$ of $A_{G'}$ is constructed by:
    \begin{itemize}[nosep]
        \item assigning $x$ to the former input vertices of $C$,
        \item assigning $0$ to all other vertices of $V_C$,
        %then evaluating $C$ (in linear time) to assign Boolean values to all other vertices in $V_C$ according to the values computed by their respective gate in $C$ (on input $x$),
        \item assigning state $0$ to $b_0$, and state $1$ to $b_1$ and all the vertices in $P$. 
    \end{itemize}
    Configuration $x'$ has at least $|V_C|+2$ vertices in state $1$ from the last item,
    therefore $\majority{x'}=1$.
    %
    %The proof that $x'$ converges to $1^{|V'|}$ if and only if $C^t(x)_i=1$ for some $t$,
    %a state $1$ appears at output number $i$ in some iteration of $C$ starting from $x$,
    %is in Appendix~\ref{a:proofs}.

    \medskip
    We now prove that $x'$ converges to $1^{|V'|}$ if and only if a state $1$ appears at output number $i$ in some iteration of $C$ starting from $x$.
    First, observe that $b_1$ and all the vertices in $P$ will remain in state $1$ forever.
    Second, remark that as long as output number $i$ is in state $0$, the states of $b_0$ and $b_1$ will remain $0$ and $1$ respectively, hence the iterations of $A_G$ from $x'$ will reproduce on $V_C$ the iterations of $C$ from input $x$ (there is a delay corresponding to the depth of the circuit $C$, where the computed values will advance layer-by-layer and be surrounded by states $0$ in all other vertices of $V_C$ thanks to the second item above).

    $(\Leftarrow)$ Assume that $C,x,i$ is a negative instance of $\IMCVP$.
    Then, the simulation of the iterations of $C$ will go on, and the output vertex number $i$ remains in state $0$, so $A_{G'}$ does not converges to the majority state $\majority{x'}=1$.

    $(\Rightarrow)$ Assume that $C,x,i$ is a positive instance of $\IMCVP$.
    Then, the simulation will eventually produce a state $1$ in the output vertex number $i$, immediately leading vertex $b_0$ to enter state $1$.
    Therefore, after $d$ iterations, with $d$ the depth of $C$, all the vertices in $V_C$ will also go to state $1$. 
    We conclude that all the vertices of $A_{G'}$ are in state $1$.
\end{proof}

From this result, we deduce that the naive algorithm consisting in checking
one by one whether each configuration correctly converges to the majority state,
is inherently slow to solve \BCTP{}.
More generally, all methods including a configuration test would face the \PSPACE{}-completeness of \OCCM{}. 
%At this point one might think that \BCTP{} is \PSC as well. 
Nevertheless, there may exist methods for solving \BCTP{} which are not based on any single configuration test. 
Indeed, the fact that all configurations must converge to a very specific fixed point may imply structural conditions on the MBAN itself. 
And these conditions might have a complexity less than \PSPACE{} to check.
This is what we are going to study in the next section.

%%%%%%%%%%%%%%%%%%%%%%%%%%%%%%%%
\section{Structural characterization}\label{sec:chara_MBAN_sol_DCTP}

Let $A_G$ be an MBAN.
A forbidden pattern is a structure $H$ such that,
if $G$ contains $H$, then $A_G$ cannot solve DCT. 
We prove that there exist three kinds of forbidden patterns:
leader, self-sufficient and self-sufficient $m$-cycle.
Furthermore, we prove that these patterns are sufficient to characterize 
the obstacles for an MBAN to solve DCT (Theorem~\ref{th:charac}).

\medskip
The first pattern captures instabilities in the majority state among iterations.

\begin{definition}\label{def:major}
     Let $G = (V, E)$ be a directed graph and $S$ be a subset of $V$. 
     We call \emph{major of $S$}, the subset $M(S)$ of $V$ such that $u \in M(S)$ if and only if $|N_G(u) \cap S| > \frac{|N_G(u)|}{2}$.
     That is, $u$ has a majority of its in-neighbors in $S$.
\end{definition}

The set-based notations $S\subseteq V$ and $M(S)\subseteq V$ are meant to represent
the set of components in state $1$ in some configuration.
Indeed, observe that if $x_u=1 \iff u\in S$, then $A(x)_u=1 \iff u\in M(S)$.
This property will be applied implicitly in the following.

\begin{definition}
    Let $G = (V, E)$ be a directed graph. 
    A subset of vertices $S$ is called \emph{leader} when
    $|S| < \frac{|V|}{2}$ and $|M(S)| > \frac{|V|}{2}$. 
\end{definition}

Intuitively, a leader subset of $V$ can change for at least one configuration $x$ the value of the majority between $x$ and $A(x)$. 

\begin{lemma}\label{lemma:leader_is_forbidden}
    Let $A$ be an MBAN defined on $G = (V,E)$.
    There is a configuration $x$ of $A$ such that $\majority{x} \neq \majority{A(x)}$
    if and only if there is a leader $S \subseteq V$.
\end{lemma}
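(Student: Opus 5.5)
The plan is to translate configurations into subsets via $S_x=\{v : x_v=1\}$. With this, $A(x)$ corresponds to $M(S_x)$, and $\majority{x}=1$ iff $|S_x|>\frac{|V|}{2}$; since $n$ is odd, there is no tie. The lemma then says that a majority flip between $x$ and $A(x)$ exists iff some set $S$ has $|S|<\frac{|V|}{2}$ and $|M(S)|>\frac{|V|}{2}$. One direction of the flip (from minority $1$ to majority $1$) is exactly the leader condition. The real content is showing that the other direction (from majority $1$ to minority $1$) also yields a leader. For that I will use complementation symmetry of the majority rule.

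$(\Leftarrow)$ Given a leader $S$, I take $x$ to be the indicator of $S$. Then $\majority{x}=0$ because $|S|<\frac{|V|}{2}$. Also $A(x)$ is the indicator of $M(S)$, so $\majority{A(x)}=1$, which is the required flip.

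$(\Rightarrow)$ Let $x$ satisfy $\majority{x}\neq\majority{A(x)}$. If $\majority{x}=0$, then $S_x$ itself is a leader and I am done. Otherwise $\majority{x}=1$ and $\majority{A(x)}=0$. In this case I consider the complement configuration $\bar x$, with $T=V\setminus S_x$. The key step is the identity $M(V\setminus S)=V\setminus M(S)$. It holds because every in-degree $|N_G(u)|$ is odd: $u\notin M(S)$ iff $|N_G(u)\cap S|<\frac{|N_G(u)|}{2}$, iff $|N_G(u)\setminus S|>\frac{|N_G(u)|}{2}$, iff $u\in M(V\setminus S)$. Equivalently, $A(\bar x)=\overline{A(x)}$. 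Now $|T|=|V|-|S_x|<\frac{|V|}{2}$, and $|M(T)|=|V|-|M(S_x)|>\frac{|V|}{2}$ since $|M(S_x)|<\frac{|V|}{2}$ by $n$ odd. So $T$ is a leader.

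The only delicate point is this self-duality step. It needs odd in-degrees to exclude ties locally, and odd $n$ to turn "not more than half" into "less than half" globally; both are standing assumptions of the paper. The rest is direct unfolding of the definitions.
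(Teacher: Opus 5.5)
Your proof is correct and follows the paper's argument. The paper proves ($\Leftarrow$) exactly as you do, and for ($\Rightarrow$) it simply assumes $\majority{x}=0$ ``without loss of generality'' and takes $S$ to be the set of automata in state $1$ in $x$. Your identity $M(V\setminus S)=V\setminus M(S)$, which relies on odd in-degrees and odd $n$, is precisely the justification of that step which the paper leaves implicit, so your write-up is slightly more complete.
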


\begin{proof}
    ($\Rightarrow$) Assume that $A$ admits a configuration $x$ such that $\majority{x} \neq \majority{A(x)}$.
    Without loss of generality, consider that $\majority{x} = 0$. 
    Let $S$ be the set of automata whose value is $1$ in $x$. 
    Thus, $|S| < \frac{|V|}{2}$. 
    Moreover, all automata whose value is $1$ in $A(x)$ have at least half of their predecessors in $S$.
    Consequently, each of them is in $M(S)$. 
    Since $\majority{A(x)} = 1$, it follows that $|M(S)| > \frac{|V|}{2}$ thus $S$ is a leader. 

	($\Leftarrow$) Assume that $G$ admits a leader $S$.
    We consider the configuration $x$ of $A$ such that $\val{x}{v}=1$ if and only if $v\in S$. 
    Then $\majority{x} = 0$. 
    However, since $S$ is a leader, we have $\val{A(x)}{u} = 1$ for all $u \in M(S)$,
    and since $|M(S)| > \frac{|V|}{2}$, it follows that $\majority{A(x)} = 1$. 
    %\qed
\end{proof}

The last two patterns capture cases where an individual automaton may not converge to the majority state, or where the system may not converge.

\begin{definition}\label{def:self-sufficient}
    Let $G=(V, E)$ be a directed graph. 
    A subset of vertices $S$ is called \emph{self-sufficient} when $S \subseteq M(S)$.
    We say that $S$ is a \emph{maximal self-sufficient} when $S = M(S)$. 
\end{definition}   

Intuitively, a self-sufficient subset of vertices may collectively hold and keep a common state,
whether it is the majority state of the whole network or not.

\begin{remark*}
  Self-sufficient subset of vertices appear under the name \emph{alliances} and \emph{defensive sets}
  in the literature, and it arises in various contexts.
  The notion is related to clustering coefficients, see for example~\cite{khh04,cmrs07} and the references therein.
\end{remark*}

We can see the notion of self-sufficiency and leadership as the opposite of each other. 
Indeed, a self-sufficient subset prevents the increase in majority value, and a leader subset causes the expansion in minority value.

\begin{lemma}\label{lemma:dense_SG_is_forbidden}
	Let $A$ be an MBAN defined on $G=(V,E)$ and $S$ be a subset of $V$ such that $|S| < \frac{|V|}{2}$.
    If $S$ is self-sufficient, then $A$ does not solve DCT.
\end{lemma}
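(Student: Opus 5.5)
Consider the configuration $x$ defined by $x_v=1$ if and only if $v\in S$. Since $|S|<\frac{|V|}{2}$, we get $\majority{x}=0$. A correct MBAN would therefore have to reach $0^n$ from $x$. We will show instead that every vertex of $S$ stays in state $1$ forever, so the dynamics never reaches $0^n$. Since $S$ must be nonempty to block convergence, we assume $S\neq\emptyset$; the empty set is trivially self-sufficient and must be excluded, which is presumably implicit in the statement.

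The key step is a monotonicity property of the major operator. If $T\subseteq T'$, then $M(T)\subseteq M(T')$, because $|N_G(u)\cap T|\le |N_G(u)\cap T'|$ for every $u$. Let $S_t$ denote the set of vertices in state $1$ in $A^t(x)$. By the observation after Definition~\ref{def:major}, $S_{t+1}=M(S_t)$.

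We then argue by induction on $t$ that $S\subseteq S_t$. The base case is $S_0=S$. For the inductive step, assume $S\subseteq S_t$. Monotonicity together with self-sufficiency gives
\[
S\subseteq M(S)\subseteq M(S_t)=S_{t+1}.
\]
Hence for all $t$ the configuration $A^t(x)$ has state $1$ on the nonempty set $S$, so $A^t(x)\neq 0^n=\majority{x}^n$. Thus $A$ fails the DCT on $x$.

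There is essentially no obstacle beyond noting the monotonicity of $M$. The only care point is the degenerate case $S=\emptyset$, which must be excluded for the claim to hold.
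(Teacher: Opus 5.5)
Your proof is correct and follows the same route as the paper. You use the same configuration $x$ with $x_v=1$ if and only if $v\in S$, and you prove the invariant $S\subseteq S_t$ via monotonicity of $M$ and induction, whereas the paper just asserts it ``by Definition''. Your remark about $S=\emptyset$ is also right: since $M(\emptyset)=\emptyset$, the empty set is always (even maximally) self-sufficient, so the statement tacitly requires $S\neq\emptyset$, which the paper leaves implicit.
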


\begin{proof}
	If $G$ admits a self-sufficient subset $S\subseteq V$ of size at most $\frac{|V|}{2}$,
	then for the configuration $x$ such that $\val{x}{v}=1$ if and only if $v\in S$,
	by Definition~\ref{def:self-sufficient} we have $A^t(x)_v = 1$ for all $v \in S$ and all $t \in \N$.
	However, the configuration $x$ contains a majority of $0$,
	hence $A$ does not solve DCT.
        %\qed
\end{proof}

Naively, we can think that the presence of self-sufficient is equivalent to
the presence of a non-trivial fixed point (that is, different from $0^n$ and $1^n$) in the dynamics. 
But, this is not always the case. 
Indeed, a self-sufficient can also be a leader. 
And more generally, if $S$ is self-sufficient, we can have $M^t(S) \subsetneq M^{t+1}(S)$,
for all $t$ until $M^t(S) = V$. 
However, the following holds.
%(proof in Appendix~\ref{a:proofs}).
%This is the reason why we define the notion of maximal self-sufficient. 

\begin{lemma}\label{lemma:chara_non_trivial_fixed_point}
    Let $A$ be an MBAN defined on $G=(V,E)$. 
    There exists a maximal self-sufficient $S \subsetneq V$ if and only if $A$ has a non-trivial fixed point.
\end{lemma}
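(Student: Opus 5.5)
The plan is to use the dictionary between configurations and subsets already noted after Definition~\ref{def:major}. To a configuration $x\in\{0,1\}^n$ associate $S_x=\{v\in V : x_v=1\}$, and conversely to a subset $S$ associate the configuration $x^S$ with $x^S_v=1 \iff v\in S$. This is a bijection. The observation following Definition~\ref{def:major} says exactly that $S_{A(x)}=M(S_x)$, i.e.\ $A(x^S)=x^{M(S)}$.

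The first step is to show that $x$ is a fixed point of $A$ if and only if $S_x$ is a maximal self-sufficient set. Indeed, $A(x)=x$ holds iff $S_{A(x)}=S_x$, iff $M(S_x)=S_x$, which is the definition of maximal self-sufficiency (Definition~\ref{def:self-sufficient}). The second step is to match trivial fixed points with trivial subsets: $x=1^n$ iff $S_x=V$, and $x=0^n$ iff $S_x=\emptyset$. Hence non-trivial fixed points correspond bijectively to maximal self-sufficient sets $S$ with $\emptyset\neq S\subsetneq V$. Both directions of the lemma then follow by applying this bijection. Given such an $S$, the configuration $x^S$ is a non-trivial fixed point; given a non-trivial fixed point $x$, the set $S_x$ is a maximal self-sufficient proper subset.

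The only delicate point is the empty set. Since $M(\emptyset)=\emptyset$, the set $\emptyset$ is always maximal self-sufficient and proper, and it corresponds to the trivial fixed point $0^n$. So the statement must be read with $S$ non-empty, as is implicit in the word "subset" used for patterns. Alternatively, by the $0/1$ symmetry of the majority rule, one can read it as "some $S\notin\{\emptyset,V\}$". I would state this convention explicitly at the start of the proof. Everything else is the direct unfolding of definitions above; no step is a genuine obstacle.
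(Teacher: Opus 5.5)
Your proof is correct and is essentially the paper's argument: both rest on the correspondence $S_{A(x)}=M(S_x)$, so that fixed points are exactly the sets with $M(S)=S$. The paper additionally passes to the complement to assume $|S|<\frac{|V|}{2}$, which this statement does not need. Your caveat about $\emptyset$ is justified: $M(\emptyset)=\emptyset$ always holds, and the paper's proof also tacitly needs $S\neq\emptyset$.
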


\begin{proof}
    ($\Leftarrow$) Let us assume that the dynamic of $A$ has a non-trivial fixed point. 
    We call $x$ one of these fixed points, and we consider $V_0$ and $V_1$ the set of automaton whose value is respectively $0$ and $1$ in $x$. 
    Since the length of $x$ is odd, $\min(|V_0|,|V_1|) < \frac{|V|}{2}$. 
    Without loss of generality, we assume that $|V_1| < |V_0|$.
    Therefore, since $x$ is a fixed point, $M(V_1) = V_1$. 
    We conclude that $V_1$ is maximal self-sufficient.
    
    ($\Rightarrow$) Assume that there exists a maximal self-sufficient $S \subsetneq V$.
    First, note that since $S$ is maximal self-sufficient, the subset $V\setminus S$ is also maximal self-sufficient. 
    Indeed, by maximality of $S$, for any $u \notin S$ we have $|N_G(u) \cap S| \le \frac{|N_G(u)|}{2}$ and therefore $|N_G(u) \cap (V \setminus S)| > \frac{|N_G(u)|}{2}$.
    Thus, we can assume, without loss of generality, that $|S| < \frac{|V|}{2}$. 

    Consider $x$ the configuration of $A$ such that $\val{x}{v} = 1$ if and only if $v \in S$. 
    Since $M(S) = S$, it follows that all the automata which have value $1$ in $A(x)$ are in $S$. 
    We conclude that $x$ is a fixed point of the dynamics induced by $A$.
    %\qed
\end{proof}

We generalize self-sufficient subsets of vertices, when they have a similar but periodic behavior involving a sequence of subsets majoring one another cyclically.

\begin{definition}\label{def:self-sufficient-m-cycle}
	Let $G=(V, E)$ be a directed graph.
	An $m$-tuple of \textbf{distinct} subsets $(S_0,\dots,S_{m-1})\subseteq V^m$ is called a \emph{self-sufficient $m$-cycle}
	when $M(S_i) = S_{i+1 \mod m}$ and $S_i \neq V$ for all $0\leq i<m$.
\end{definition}

Note that, in the definition, the sets $S_0,\ldots, S_{m-1}$ are not necessarily disjoint. 

\begin{lemma}\label{lemma:dense_mDCG_is_forbidden}
    Let $A_G$ be an $MBAN$ on $G$, and $m>1$. 
    The dynamic $A_G$ has a limit cycle of length $m$ if and only if $G$ contains a self-sufficient $m$-cycle.
\end{lemma}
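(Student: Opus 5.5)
The proof is a direct translation between configurations and subsets, using the observation stated after Definition~\ref{def:major}. If $x_u=1 \iff u\in S$, then $A(x)_u=1 \iff u\in M(S)$. Write $\chi(S)$ for the configuration whose $1$-set is $S$. Then $\chi$ is a bijection between $2^V$ and $\{0,1\}^n$ satisfying $A(\chi(S))=\chi(M(S))$, so the dynamics of $A_G$ is conjugate to the dynamics of the set map $M$ on $2^V$. The two directions follow from this conjugacy, with some care about the condition $S_i\neq V$ and about distinctness.

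($\Rightarrow$) Assume $A_G$ has a limit cycle $x^0,\dots,x^{m-1}$ of length $m>1$, with $A(x^i)=x^{i+1 \bmod m}$ and $m$ the minimal period. Set $S_i=\{v : x^i_v=1\}$. By the conjugacy, $M(S_i)=S_{i+1 \bmod m}$. The $S_i$ are pairwise distinct: if $S_i=S_j$ with $i\neq j$, then $x^i=x^j$, and the period would be smaller than $m$. It remains to check that $S_i\neq V$. If some $S_i=V$, then $x^i=1^n$, which is a fixed point. The whole cycle would then have period $1$, contradicting $m>1$. Hence $(S_0,\dots,S_{m-1})$ is a self-sufficient $m$-cycle.

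($\Leftarrow$) Given a self-sufficient $m$-cycle $(S_0,\dots,S_{m-1})$, set $x^i=\chi(S_i)$. The conjugacy gives $A(x^i)=x^{i+1 \bmod m}$, so $A^m(x^0)=x^0$ and $x^0$ lies on a limit cycle. Its period $p$ divides $m$, and the orbit $x^0,\dots,x^{m-1}$ consists of distinct configurations because the $S_i$ are distinct. The smallest $t>0$ with $A^t(x^0)=x^0$ is therefore exactly $m$, so the cycle has length $m$. The condition $S_i\neq V$ is not needed in this direction, beyond being consistent with $m>1$.

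There is no real obstacle; the proof is essentially bookkeeping. The delicate points are the correspondence between minimal period and distinctness of the sets, and excluding $S_i=V$ in the forward direction using the fact that $1^n$ is a fixed point. One subtlety should also be stated: in the ($\Leftarrow$) direction, some $S_i=\emptyset$ would likewise give the fixed point $0^n$ and would contradict distinctness with $m>1$. So both uniform configurations are excluded automatically, and the argument is consistent.
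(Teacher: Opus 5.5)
Your proof is correct and follows the same route as the paper: in both directions you pass between a configuration and its set of $1$-states via $A(\chi(S))=\chi(M(S))$, and in the backward direction you use the distinctness of the $S_i$ to show the minimal period is exactly $m$, as the paper does. Your forward direction is slightly more careful than the paper's, since you explicitly check that the $S_i$ are pairwise distinct and different from $V$ (using that $1^n$ is a fixed point), which the paper leaves implicit.
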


\begin{proof}
        $(\Rightarrow)$ Let us assume that the dynamics induced by $A$ contains a limit cycle $(u_1, \ldots, u_m)$ of length $m$, then $(S_0,\ldots, S_{m-1})$, with $S_i$ is the set of automaton with value $1$ in $u_i$, is a self-sufficient $m$-cycle.
        Indeed, the automaton with value $1$ at step $i$ have a majority of in-neighbors in state $1$ at step $i-1$, which exactly corresponds to the definition of major (Definition~\ref{def:major}).

        ($\Leftarrow$) Assume that $G$ contains a self-sufficient $m$-cycle $(S_0,\ldots,S_{m-1})$ for some $m > 1$.
        Consider the configuration $x$ such that $x_v=1$ if and only if $v\in S_0$.
        It follows from Definition~\ref{def:self-sufficient-m-cycle}
        that, for all $t \in \N$ :
        \[
            A^{t}(x)_{v}=\begin{cases}
                1 &\text{if } v \in S_{t \mod m},\\
                0 &\text{otherwise.}
            \end{cases}
        \]
        Thus, the dynamics induced by $A$ contains a cycle of length $m$.
        Indeed, $A^m(x) = x$ and for all integer $t \in \{1, \ldots, m-1\}$, all the automata whose value is $1$ in the configuration $A^t(x)$ are in $S_t$, implying that $A^t(x) \neq x$. 
        %\qed
\end{proof}

Observe that a self-sufficient $1$-cycle (for $m=1$) is simply a self-sufficient subset
(Definition~\ref{def:self-sufficient}). 
Moreover, if we consider $(S_0,\dots,S_{m-1})$ a self-sufficient $m$-cycle then $S_0 \cup \cdots \cup S_{m-1}$ is self-sufficient.  
Thus, if we assume that $S_0 \cup \cdots \cup S_{m-1}$ is maximal and does not contain all the nodes of $V$ ($S_0 \cup \cdots \cup S_{m-1} \neq V$), then $V \backslash (S_0 \cup \cdots \cup S_{m-1})$ is also self-sufficient and we go back to the previous pattern. 
For this reason, we can consider only the self-sufficient $m$-cycles which contain at least half of the nodes of the graph. 

\medskip
We now prove that these three patterns characterize exactly
whether or not an MBAN solves DCT (remark that they are not mutually exclusive).

\begin{theorem}\label{th:charac}
    Let $A$ be an MBAN  and $G = (V, E)$ its graph.
    Then, $A$ solves DCT if and only if $G$ does not contain any of the following patterns:
    \begin{itemize}[nosep]
        \item a leader subset $S$ (of size $|S| < \frac{|V|}{2}$),
        \item a maximal self-sufficient subset $S$ distinct from $V$,
        \item a self-sufficient-$m$-cycle $(S_0,\ldots,S_{m-1})$ (with $S_i \neq V$ for all $0\leq i<m$).
    \end{itemize} 
\end{theorem}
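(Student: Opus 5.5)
We prove the two directions separately, relying on Lemmas~\ref{lemma:leader_is_forbidden}, \ref{lemma:chara_non_trivial_fixed_point} and~\ref{lemma:dense_mDCG_is_forbidden}, which translate each pattern into a dynamical property.

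For the direction ``$A$ solves DCT $\Rightarrow$ no pattern'', we argue by contraposition, one pattern at a time.

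\emph{Leader.} If $G$ contains a leader $S$, Lemma~\ref{lemma:leader_is_forbidden} gives a configuration $x$ with $\majority{x}\neq\majority{A(x)}$. Suppose $A$ solved DCT. Then $x$ and $A(x)$ lie on the same trajectory and hence reach the same uniform fixed point. But they would have to reach two different ones, $\majority{x}^n$ and $\majority{A(x)}^n$, which is a contradiction.

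\emph{Maximal self-sufficient.} If $G$ contains a maximal self-sufficient $S\neq V$, Lemma~\ref{lemma:chara_non_trivial_fixed_point} gives a non-trivial fixed point. A fixed point never converges to a uniform configuration, so $A$ does not solve DCT. (The case $S=\emptyset$ needs a small check: $M(\emptyset)=\emptyset$ always holds, so ``maximal self-sufficient distinct from $V$'' must be read as a nonempty proper subset. This matches the lemma, whose fixed point $x$ is non-trivial exactly when $\emptyset\neq S\neq V$.)

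\emph{Self-sufficient $m$-cycle.} If $G$ contains a self-sufficient $m$-cycle with $m>1$, Lemma~\ref{lemma:dense_mDCG_is_forbidden} gives a limit cycle of length $m>1$. A configuration on such a cycle never reaches a fixed point, so $A$ does not solve DCT. For $m=1$ the cycle is a single set with $M(S_0)=S_0$, which is the previous case.

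For the converse direction, assume no pattern is present and fix any $x$. Since the dynamics is finite, the trajectory of $x$ enters a limit cycle of some length $m$.

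\emph{Step 1: the limit is uniform.} If $m>1$, Lemma~\ref{lemma:dense_mDCG_is_forbidden} yields a self-sufficient $m$-cycle; its sets are distinct and none equals $V$, since $V$ is fixed by $M$ and would force $m=1$. This contradicts the absence of the third pattern. So $m=1$, and the limit is a fixed point. A non-trivial fixed point would give, by Lemma~\ref{lemma:chara_non_trivial_fixed_point}, a maximal self-sufficient $S\subsetneq V$ (nonempty), contradicting the absence of the second pattern. Hence $A^t(x)$ is $0^n$ or $1^n$ for some $t$.

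\emph{Step 2: the limit is the correct one.} Since there is no leader, Lemma~\ref{lemma:leader_is_forbidden} gives $\majority{y}=\majority{A(y)}$ for every $y$. By induction, $\majority{A^t(x)}=\majority{x}$ for all $t$. The uniform limit must therefore be $\majority{x}^n$, so $A$ solves DCT.

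The only delicate step is the bookkeeping around degenerate sets, especially the empty set as a trivially maximal self-sufficient set. The pattern conditions must be read so that they correspond exactly to non-trivial fixed points and to cycles of length greater than one.
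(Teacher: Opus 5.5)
Your proof is correct and follows the paper's argument. Both directions rest on Lemmas~\ref{lemma:leader_is_forbidden}, \ref{lemma:chara_non_trivial_fixed_point} and~\ref{lemma:dense_mDCG_is_forbidden}, and the converse classifies how a configuration can fail: a limit cycle of length $\geq 2$, a non-trivial fixed point, or a change of majority. Your version is more careful than the paper's on two points: you apply the lemmas to the limit cycle actually reached from $x$, and you rightly note that $\emptyset = M(\emptyset)$ must be excluded from the second pattern (and from $1$-cycles), since otherwise every graph would contain that pattern.
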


\begin{proof}
    One direction is given by Lemmas~\ref{lemma:leader_is_forbidden}, \ref{lemma:chara_non_trivial_fixed_point} and~\ref{lemma:dense_mDCG_is_forbidden}.
    For the other direction, suppose that $A$ does not solve DCT. 
    Thus, there exists a configuration $x$ of $A$ which is in one of the following three cases:
    \begin{enumerate}[nosep]
        \item $x$ has period at least $2$ in the dynamics induced by $A$,
        \item $x$ is a non-trivial fixed point of the dynamics induced by $A$,
        \item $\majority{x} \neq \majority{A(x)}$.
    \end{enumerate}
    And the three cases are also given by Lemmas~\ref{lemma:dense_mDCG_is_forbidden}, \ref{lemma:chara_non_trivial_fixed_point} and~\ref{lemma:leader_is_forbidden} respectively.
    %\qed
\end{proof}

%%%%%%%%%%%%%%%%%%%%%%%%%%%%%%%%
\section{Complexity of self-sufficient, leader and $m$-cycle}\label{sec:ss_and_lead_NPC}

We begin by proving that the existence of a self-sufficient pattern is an {\NPC} decision problem
(even when restricted to strongly connected digraphs).

\begin{theorem}\label{th:1_2_NPH}
  Given $G=(V,A)$ a digraph, deciding whether there exists a self-sufficient subset $S \subseteq V$ such that
  $|S| < \frac{|V|}{2}$ is {\NPC}. %, even when restricted to strongly connected digraphs.
\end{theorem}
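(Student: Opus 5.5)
Membership in \NP{} is immediate: a certificate is the set $S$ itself, and checking $|S|<\frac{|V|}{2}$ and $S\subseteq M(S)$ (one pass over the in-neighborhoods) takes polynomial time. I would require $S\neq\emptyset$ in the problem, since $\emptyset$ is trivially self-sufficient and otherwise the question is trivial. For \NP{}-hardness I would not build a gadget reduction from scratch. Instead I would reduce from the \emph{defensive alliance} problem mentioned in the Remark after Definition~\ref{def:self-sufficient}: given an undirected graph $H$ and an integer $k$, decide whether there is a nonempty $S$ with $|S|\le k$ such that every $v\in S$ has at least as many closed neighbors in $S$ as outside. This problem is known to be \NPC{} (Jamieson, Hedetniemi et al.).

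The reduction has two steps, and I would carry them out in this order.

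\emph{Step 1: encode the alliance inequality as the strict majority condition.} Replace each undirected edge by two opposite arcs and give each vertex a loop. Then adjust each in-degree so that the condition $|N(v)\cap S|>\frac{|N(v)|}{2}$ becomes equivalent to the (non-strict) alliance condition $|N[v]\cap S|\ge |N[v]\setminus S|$. To do this, add to $v$ one extra in-neighbor drawn from a set of ``always-outside'' vertices whenever $|N[v]|$ is even; this makes the in-degree odd, as the MBAN setting requires, and turns a tie into a strict loss for $S$. I will double-check that the inequality shifts in the intended direction in the even and odd cases, and if necessary switch to the variant with a strict inequality, which is also known to be \NPC{}.

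\emph{Step 2: control the size bound $k$.} Add a large padding clique $P$ with loops, of size $q$, whose vertices serve as the ``always-outside'' in-neighbors from Step 1. Choose $q$ and $|V|$ (with $|V|$ odd) so that $\frac{|V|}{2}>k$, so that the self-sufficient subsets of the original vertices are exactly the alliances. The remaining gap, namely that alliances of size larger than $k$ but smaller than $\frac{|V|}{2}$ would also be accepted, is closed by a variant rather than by padding alone. Either I reduce from the version of defensive alliance in which $k$ is essentially $\frac{|V(H)|}{2}$, or I add, for each original vertex, enough private out-of-$S$ in-neighbors that any $S$ larger than $k$ violates the condition. This second option is the part I trust least.

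The next point is that padding vertices must be useless in a small $S$. A vertex $p\in P$ has in-neighborhood essentially $P$ (plus perhaps a few arcs), so $p\in S\subseteq M(S)$ forces $|S\cap P|>\frac{q}{2}$. Taking $q$ large enough makes any $S$ meeting $P$ have size at least $\frac{|V|}{2}$. Finally, adding a few arcs from $H$ into $P$ makes the digraph strongly connected without changing this analysis.

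The main obstacle is this calibration of thresholds. I must simultaneously keep every in-degree odd, keep the correspondence between alliances of size at most $k$ in $H$ and self-sufficient sets of size less than $\frac{|V|}{2}$ exact in both directions, and keep the global parity of $|V|$ odd. If the size bound $k$ cannot be made to coincide with $\frac{|V|}{2}$ this way, my fallback is a direct reduction from \textbf{Clique} on regular graphs. There $k$ is forced structurally: the degree and padding are tuned so that a self-sufficient set must be a clique of size exactly $k$, and filler gadgets bring the total size so that $k<\frac{|V|}{2}$ is the only constraint.
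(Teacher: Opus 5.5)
Two things in your proposal are right. You are correct that $S\neq\emptyset$ must be required: the paper's own case analysis tacitly needs this too, since $\emptyset\subseteq V$ passes its case ``$S'\subseteq V$'' vacuously. Your Step~1 also encodes exactly the \emph{strong} defensive alliance condition once each vertex with $|N[v]|$ even gets one always-outside in-neighbor.

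Your padding step, however, is wrong as stated. Suppose $N(p)$ is $P$ plus $a$ extra arcs. Then any $\lfloor\frac{q+a}{2}\rfloor+1$ vertices of $P$ already form a self-sufficient set. This set is smaller than $\frac{|V'|}{2}\geq\frac{q+|V(H)|}{2}$ for every $q$, unless $a$ is close to $|V(H)|$, because the requirement scales with $|N(p)|$ and not with $|V'|$. Every instance would become positive. Padding vertices need essentially all of $V'$ as in-neighbors, as the paper's set $Y$ has.

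Even after that fix, the real gap remains, and it is the crux of the theorem. A vertex that never belongs to $S$ still raises the budget $\frac{|V'|-1}{2}$ by $\frac12$. So padding, and any always-outside vertices you add to set thresholds or parities, only loosens the bound. It matches $k$ only when $k\geq\frac{|V(H)|-1}{2}$, and you have no hardness result to cite for that regime.

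Your second option cannot work. For $v\in S$ the count $|N(v)\cap S|$ is nondecreasing in $S$, so extra always-outside in-neighbors raise every threshold uniformly. They can yield lower bounds on $|S|$, never upper bounds. Your Clique fallback hits the same wall: a set of more than $k$ original vertices, each with $k-1$ chosen neighbors, would be accepted without containing a $k$-clique.

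The missing idea is a block that every solution is \emph{forced} to contain, balancing the outside fillers. The paper reduces from half-clique on $G=(V,E)$, using $X$ with $|X|=2|V|$ and $Y$ with $|Y|=|X|+1$:
\begin{itemize}
\item Each $u\in X\setminus\{x\}$ has in-neighbors $X$ plus $2|V|-1$ vertices of $Y$, so selecting it forces all of $X$.
\item Each $v\in V$ has in-degree $3|V|-3$, with at most $|V|-3$ in-neighbors in $V$. So it needs selected vertices of $X$, hence all of $X$ (the case $S\cap X=\{x\}$ is impossible, since $x$ has in-degree $5|V|-1$).
\item The distinguished $x$ receives arcs from all of $V$, so $X$ alone is not self-sufficient and needs at least $\frac{|V|}{2}$ vertices of $V$.
\end{itemize}
Because $|Y|=|X|+1$, the budget $\frac{|V'|-1}{2}=2|V|+\frac{|V|}{2}$ leaves exactly $\frac{|V|}{2}$ vertices for $V$. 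The threshold then forces each chosen vertex to be adjacent to the $\frac{|V|}{2}-1$ others, i.e., the chosen vertices form a clique. Without such a forcing gadget, your reduction does not go through.
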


\begin{remark*}
  The hardness of finding small self-sufficient subsets appears in the literature under the name
  \emph{alliances}, see for example~\cite{cmrs07,jhmr09}, and
  our construction is rooted in the same dual relationship to the hardness of finding large cliques.
  Nevertheless, our precise statement requires an original rework of known constructions,
  because it asks for the size of the subset to be strictly less than half of the vertices
  in the graph (in order to contradict the majority state).
\end{remark*}

\begin{proof}
  It belongs to {\NP}, because checking whether a potential subset $S\subseteq V$, with  $|S| < \frac{|V|}{2}$, is self-sufficient can straightforwardly be performed in polytime.

  \medskip
  For he \NP-hardness, we reduce from deciding whether an undirected graph $G=(V,E)$
  has a clique of size (at least) $\frac{|V|}{2}$, which is known to be {\NPH}~\cite{gj79}.
  Without loss of generality, assume $|V|$ is even
  and the degree of each vertex $v\in V$ is at most $|V|-3$.
  
  We construct a directed graph $G'=(V',A')$ with $V'=V\sqcup X\sqcup Y$ where $|X|=2|V|$ and $|Y|=2|V|+1$
  are sets of new vertices, so that $|V'|=5|V|+1$ is odd.
  The arcs of $A'$ are (see Figure~\ref{fig:1_2_NPH}):
  \begin{itemize}[nosep]
    \item all arcs toward $Y$ (including loops), \emph{i.e.} $\forall v\in Y,~|N_{G'}(v)|=5|V|+1$ is odd,
    \item all arcs from $X$ to $X$ (including loops) and $2|V|-1$ arcs from $Y$ to each vertex of $X$,
      and for a distinguished vertex $x\in X$ all arcs from $V$ to $x$,
      \emph{i.e.} $\forall v\in X\setminus\{x\},~|N_{G'}(v)|=4|V|-1$
      and $|N_{G'}(x)|=5|V|-1$ are odd,
    \item two arcs $(u,v),(v,u)$ for each $\{u,v\}\in E$ of $G$,
      for each $v\in V$ we add $|V|-3-|N_G(v)|$ arcs from $Y$ to $v$,
      plus $|V|$ additional arcs from $X$ to $v$,
      and $|V|$ additional arcs from $Y$ to $v$,
      \emph{i.e.} $\forall v\in V,~|N_{G'}(v)|=3|V|-3$ is odd.
  \end{itemize}
  When no precision is given, the in-neighbors can be chosen arbitrarily among the designated set of vertices.
  Observe that $X$ and $Y$ contain enough vertices for this construction,
  that it can be performed in polynomial time,
  and that the resulting digraph $G'$ is strongly connected,

  \begin{figure}[t]
    \centering
    \begin{tikzpicture}
      % sets and x
      \node[ellipse,minimum height=.8cm,minimum width=3cm,draw] (X) at (0,1.2) {$X$};
      \node[circle,minimum size=1.2cm,draw] (V) at (0,0) {$V$};
      \node[ellipse,minimum height=.8cm,minimum width=3cm,draw] (Y) at (0,-1.2) {$Y$};
      \node[circle,fill,inner sep=1pt,label=right:$x$] (x) at (-.6,1) {};
      % arcs
      \draw[-stealth,out=-40,in=50] (X) to node[right]{\scriptsize $|V|$} (V);
      \draw[-stealth,out=40,in=-50] (Y) to node[right]{\scriptsize $2|V|-3-|N_G(v)|$} (V);
      \draw[-stealth,out=220,in=140] (V) to node[left]{\scriptsize $|V|$} (Y);
      \draw[-stealth,out=140,in=-90] (V) to node[left]{\scriptsize $|V|$} (x);
      \draw[-stealth,out=178,in=183,pos=.6] (Y) to node[left]{\scriptsize $2|V|-1$} (X);
      \draw[-stealth,out=185,in=175,pos=.6] (X) to node[right]{\scriptsize $2|V|$} (Y);
      \draw[-stealth,loop right,looseness=6,out=4,in=-4] (X) to node[right]{\scriptsize $2|V|$} (X);
      \draw[-stealth,loop right,looseness=6,out=4,in=-4] (Y) to node[right]{\scriptsize $2|V|+1$} (Y);
      % in-degrees
      \draw[dotted] (X) -- (-3.5,1.2) node[left]{$4|V|-1$};
      \draw[dotted] (x) -- (-3.5,.8) node[left]{$5|V|-1$};
      \draw[dotted] (V) -- (-3.5,0) node[left]{$3|V|-3$};
      \draw[dotted] (Y) -- (-3.5,-1.2) node[left]{$5|V|+1$};
\end{tikzpicture}
    \caption{
      Construction of $G'$ in the proof of Theorem~\ref{th:1_2_NPH}.
      Arcs between the sets $X$, $V$ and $Y$ are labeled by the number of in-neighbors
      taken from the source set to all the vertices of the destination set.
      The total number of in-neighbors is given on the left, for each set and the distinguished vertex $x\in X$.
      %\vspace*{-1em} % HACK
    }
    \label{fig:1_2_NPH}
  \end{figure}
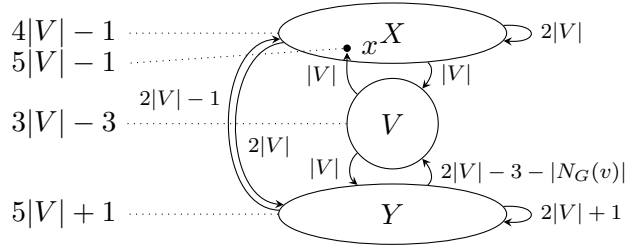

  We now show that $G$ has a clique of size at least $\frac{|V|}{2}$
  if and only if $G'$ has a self-sufficient subset of size at most $\frac{|V'|}{2}$.
  
  \medskip
  $(\Rightarrow)$
  If $G$ has a clique $S\subseteq V$ of size $|S|=\frac{|V|}{2}$
  (or more, but then it also has a clique of size exactly $\frac{|V|}{2}$),
  we consider $S'=S\cup X$ of size $|S'|=\frac{|V|}{2}+2|V|=\frac{|V'|-1}{2}$.
  For any $v\in X\setminus\{x\}$ we have
  $|N_{G'}(v)|=4|V|-1$ and $|N_{G'}(v) \cap S'|=2|V|$ (in-neighbors from $X$),
  and for the vertex $x$ we have $|N_{G'}(x)|=5|V|-1$ and $|N_{G'}(v) \cap S'|=2|V|+\frac{|V|}{2}$
  (in-neighbors from $X$ and from $S$, respectively).
  Moreover, for any $v\in S$ we have $|N_{G'}(v)|=3|V|-3$ and
  $|N_{G'}(v) \cap S'|=|V|+(\frac{|V|}{2}-1)=\frac{3|V|-2}{2}$
  (in-neighbors from $X$ and the half-clique, respectively).
  We deduce that $S'$ is a self-sufficient subset in $G'$.

  $(\Leftarrow)$
  If $G$ has no clique of size $\frac{|V|}{2}$ or more,
  let us consider a case analysis on any $S'\subseteq V'$ of size at most $\frac{|V'|-1}{2}$,
  to prove that $S'$ cannot be self-sufficient.
  \begin{itemize}[nosep]
    \item If $S'\cap Y\neq\emptyset$, then $S'$ is not self-sufficient because $|N_{G'}(v)|=|V'|$
      is too high for any $v\in Y$.
    \item If $S'\subseteq V$, then $S'$ is not self-sufficient because
      $|N_{G'}(v)|=3|V|-3$ whereas $|N_{G'}(v) \cap S'|\leq |V|-3$ for any $v\in V$.
  \end{itemize}
  The remaining case has $S'\subseteq V\cup X$ and $S'\cap X\neq\emptyset$.
  If $S'\cap X=\{x\}$ then $S'$ is not self-sufficient because the in-neighborhood of $x$ is too large
  ($5|V|-1$) compared to the size of $S'$ (at most $|V|+1$), hence
  there is another vertex $u\in S'\cap (X\setminus\{x\})$.
  Since $|N_{G'}(u)|=4|V|-1$, to be self-sufficient $S'$ requires $|N_{G'}(u) \cap S'|\geq 2|V|$,
  and this is possible only with $X\subseteq S'$ (there is no arc from $V$ to $u$).
  Given that $X\subseteq S'$, we deduce that $S'$ must additionally contain at least half of the vertices from $V$,
  to fulfill the condition of self-sufficient for the vertex $x$
  which has a total of $5|V|-1$ in-neighbors in $G'$.
  From the upper bound on the size of $S'$, we conclude that $S'$ must contain
  exactly half of the vertices from $V$, \emph{i.e.}~have size $\frac{|V'|-1}{2}$.
  Since $G$ has no clique of size $\frac{|V|}{2}$, it follows that for some $v\in V$ we have
  $|N_{G'}(v) \cap S'|\leq |V|+(\frac{|V|}{2}-2)$ (from $X$ and from $V\cap S'$, respectively)
  whereas $|N_{G'}(v)|=3|V|-3$, thus $S'$ is not self-sufficient.
  %\qed
\end{proof}

Remark that the proof also holds for maximal self-sufficient, hence by Lemma~\ref{lemma:chara_non_trivial_fixed_point} we obtain the following Corollary.

\begin{corollary}    
    Given a digraph $G$, 
    deciding whether the MBAN $A_G$ admits a non-trivial fixed point is {\NPC}. 
\end{corollary}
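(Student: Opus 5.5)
Membership in \NP{} is immediate. A certificate is a configuration $y\in\{0,1\}^n$ different from $0^n$ and $1^n$, and one checks in polynomial time that $A_G(y)=y$, that is, that the set $S=\{v : y_v=1\}$ satisfies $M(S)=S$.

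For \NP-hardness, I would reuse the reduction from the proof of Theorem~\ref{th:1_2_NPH}, from half-clique to $G'$. The goal is to show that $G$ has a clique of size $\frac{|V|}{2}$ if and only if $G'$ has a maximal self-sufficient subset distinct from $V'$. Lemma~\ref{lemma:chara_non_trivial_fixed_point} then turns this into the existence of a non-trivial fixed point of $A_{G'}$.

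\emph{Direction from a clique.} Take $S'=S\cup X$ as before; we already know that $S'\subseteq M(S')$. It remains to show $M(S')\subseteq S'$.
\begin{itemize}[nosep]
\item A vertex $y\in Y$ has all of $V'$ as in-neighbors, of which $|S'|<\frac{|V'|}{2}$ lie in $S'$, so $y\notin M(S')$.
\item A vertex $v\in V\setminus S$ has at most $|V|$ in-neighbors from $X$ and at most $\frac{|V|}{2}$ from $S$. Its in-degree is $3|V|-3$, so it would need at least $\frac{3|V|-2}{2}$ in-neighbors in $S'$. This requires the arcs from $S$ to $v$ to be exactly $\frac{|V|}{2}-1$ or more, which is possible.
\end{itemize}
This second case is the main obstacle. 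I would handle it by checking whether $S'$ may grow under $M$. The cleaner route avoids needing $M(S')=S'$ directly. Since $S'$ is self-sufficient, the iterates $M^t(S')$ form an increasing chain, which stabilizes at some $T$ with $M(T)=T$. The question is whether $T\ne V'$.

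Once a single vertex of $V$ is added, every vertex of $T\cap Y$ needs more than $\frac{|V'|}{2}$ in-neighbors in $T$. The vertices of $Y$ are reached only if $|T|>\frac{|V'|}{2}$. So I would bound $T\subseteq X\cup V$, which has size $3|V|<\frac{|V'|}{2}=\frac{5|V|+1}{2}$ is false for large $|V|$; hence this naive bound fails. Instead I would argue as follows. Vertices of $Y$ never enter $T$, because $T\cap Y=\emptyset$ at each stage and $|T|\le|X\cup V|=3|V|$. For a vertex $y\in Y$, its in-neighbors in $T$ then number at most $3|V|$ out of $5|V|+1$, which is a strict minority. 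Therefore $T\subseteq X\cup V\neq V'$, and $T$ is a maximal self-sufficient subset distinct from $V'$.

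\emph{Converse direction.} Suppose there is a maximal self-sufficient $T\ne V'$. By the symmetry argument in Lemma~\ref{lemma:chara_non_trivial_fixed_point}, one of $T$ and $V'\setminus T$ is a self-sufficient subset of size less than $\frac{|V'|}{2}$. The case analysis of Theorem~\ref{th:1_2_NPH} then yields a clique of size $\frac{|V|}{2}$ in $G$.

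Both directions hold, so deciding whether $A_{G'}$ has a non-trivial fixed point is \NP-hard, and together with membership it is \NPC{}.
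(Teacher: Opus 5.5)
Your NP-membership argument is correct. Your converse direction is also correct, provided you add that $T\neq\emptyset$: the empty set is trivially maximal self-sufficient and corresponds to $0^n$, and the case analysis of Theorem~\ref{th:1_2_NPH} implicitly assumes a nonempty set.

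The genuine gap is in the direction from a clique. You claim that a vertex of $Y$ sees at most $3|V|$ in-neighbors in $T$ out of $5|V|+1$, ``a strict minority''. This is false, because $3|V|>\frac{5|V|+1}{2}$ once $|V|\ge 2$, as you had noticed one sentence earlier. The conclusion fails too, not only its justification. Since $|S'|=\frac{|V'|-1}{2}$, suppose a single vertex $w\in V\setminus S$ enters $M(S')$, which happens exactly when $w$ has at least $\frac{|V|}{2}-1$ neighbors in $S$. Then the following cascade occurs:
\begin{enumerate}[nosep]
\item $|M(S')|>\frac{|V'|}{2}$, so all of $Y$ enters $M^2(S')$.
\item Every $v\in V$ then has at least $|V|+(2|V|-3-|N_G(v)|)\ge 2|V|$ in-neighbors in the current set, so $M^3(S')=V'$.
\end{enumerate}
In that case your increasing chain stabilizes at the trivial fixed point $1^{|V'|}$.

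No argument can rescue this direction for this particular $G'$. By the $(\Leftarrow)$ analysis of Theorem~\ref{th:1_2_NPH}, the only nonempty self-sufficient subsets of size $<\frac{|V'|}{2}$ are the sets $X\cup S$ with $S$ a clique of size $\frac{|V|}{2}$. Such a set is maximal if and only if every vertex of $V\setminus S$ misses at least two vertices of $S$. Here is a counterexample:
\begin{itemize}[nosep]
\item Take $|V|=8$ and let $G$ be a $K_5$ plus three isolated vertices; all degrees are at most $|V|-3$.
\item Every $4$-clique $S$ lies inside the $K_5$, and the fifth vertex of the $K_5$ joins $M(X\cup S)$.
\item Hence $A_{G'}$ has no non-trivial fixed point, although $G$ is a positive instance.
\end{itemize}
The same failure occurs with $\omega(G)=\frac{|V|}{2}$, for example a $K_4$ plus one vertex adjacent to three of its vertices.

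The paper itself only remarks that the proof of Theorem~\ref{th:1_2_NPH} ``also holds for maximal self-sufficient'', which passes over exactly this point. To actually prove the corollary, you need to modify the source problem or the gadget. The requirement is that every positive instance has a half-clique $S$ such that each vertex outside $S$ misses at least two vertices of $S$; then $X\cup S$ is a genuine non-trivial fixed point.
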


Deciding the existence of a leader forbidden pattern is also {\NPC}
(even when restricted to strongly connected digraphs). %, the proof is in Appendix~\ref{a:proofs}).

\begin{theorem}\label{th:leader_NPC}
    Given $G=(V,A)$ a digraph, deciding whether there exists a leader subset $S\subseteq V$ such that  $|S| < \frac{|V|}{2}$ is {\NPC}.
    %, even when restricted to strongly connected digraphs.
\end{theorem}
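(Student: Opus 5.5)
Membership in \NP{} is immediate: a certificate is the subset $S$ itself, and checking $|S|<\frac{|V|}{2}$ and $|M(S)|>\frac{|V|}{2}$ takes polynomial time, since $M(S)$ is obtained by counting in-neighbors in $S$ for each vertex.

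For hardness, I would again reduce from the half-clique problem used in Theorem~\ref{th:1_2_NPH}. The input is an undirected $G=(V,E)$ with $|V|$ even, and the question is whether it has a clique of size $\frac{|V|}{2}$. The leader condition is a covering condition. A small set $S$ must make many vertices have a majority of in-neighbors inside $S$. I would therefore use \emph{edge gadgets} to test that $S\cap V$ is a clique.

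Concretely, I would build $G'$ on $V'=V\sqcup W\sqcup Z$ as follows.
\begin{itemize}[nosep]
\item For each pair $\{u,v\}\in E$ there is an edge vertex $w_{uv}\in W$ with in-neighborhood $\{u,v,z\}$, where $z\in Z$ is a fixed filler vertex. Then $w_{uv}\in M(S)$ exactly when $u,v\in S$.
\item Set $k=\frac{|V|}{2}$. If $S\cap V$ is a $k$-clique, then exactly $\binom{k}{2}$ edge vertices are majored. Conversely, a $k$-subset of $V$ majors $\binom{k}{2}$ edge vertices only if it is a clique.
\item Each edge vertex is replicated into $r$ copies, with $r$ polynomial, so that the count of majored vertices is dominated by $r\binom{k}{2}$.
\item Padding vertices in $Z$ have huge in-neighborhoods drawn from $Z$, or from all of $V'$, so that they are never in $M(S)$ for small $S$. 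Their number is tuned so that $|V'|$ is odd and the threshold $\frac{|V'|}{2}$ falls exactly between $r\binom{k}{2}+(\text{small slack})$ and $r(\binom{k}{2}-1)+(\text{slack})$.
\item Vertices of $V$ receive in-degree $1$ from $Z$ or some dummy structure, so that they contribute only a bounded amount. Strong connectivity is obtained by adding arcs from $W$ into $Z$-vertices whose in-neighborhoods are large anyway.
\end{itemize}

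The delicate point is that $S$ may also contain vertices outside $V$, or more than $k$ vertices of $V$. The size bound $|S|<\frac{|V'|}{2}$ is very loose, because $|V'|$ is huge after padding. I would therefore invert the scales. The $V$-vertices and the edge copies form the bulk, and the constraint $|S|<\frac{|V'|}{2}$ is made tight by adding a large block $Y$ of vertices that are easily majored. For instance, $Y$ could be a clique with loops of odd size $q$, joined by arcs from a designated core. This forces $S$ to spend almost its whole budget on $Y$ in order to major $Y$, leaving room for only $k$ vertices of $V$. The leftover budget then forces $S\cap V$ to have size at most $k$, and reaching the threshold requires $\binom{k}{2}$ majored edge vertices, that is, a clique. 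Rather than a monotone counting argument, I would settle the exact parameters by an inequality comparing the payoffs of the two kinds of choices.

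I expect the main obstacle to be calibrating these parameters. The goal is that any leader must put exactly $k$ vertices of $V$ into $S$ and no useless vertices, so that the only way to exceed $\frac{|V'|}{2}$ majored vertices is a clique. The first step I would try is to make the budget exactly tight, as in the proof of Theorem~\ref{th:1_2_NPH}, where $|S'|=\frac{|V'|-1}{2}$ forced the structure. I would then carry out the case analysis on $S\cap Y$ and on $|S\cap V|$, exactly as in that proof.
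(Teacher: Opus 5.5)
\textbf{There is a genuine gap: the edge gadget computes OR rather than AND, and the parameter calibration, which is the whole hardness argument, is not carried out.}

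\emph{The edge gadget.} With $N_{G'}(w_{uv})=\{u,v,z\}$ for a single shared filler $z$, the equivalence $w_{uv}\in M(S)\iff u,v\in S$ holds only when $z\notin S$. Nothing prevents a leader from containing $z$. Giving $z$ a huge in-neighborhood guarantees $z\notin M(S)$, not $z\notin S$. Once $z\in S$, every copy of $w_{uv}$ with at least \emph{one} endpoint in $S$ is majored. Now take a dense $k$-clique-free graph, for instance the Tur\'an graph $T(2k,k-1)$ with $k\ge 3$. The set $\{z\}$ together with $k-1$ suitable vertices of $V$ fits in the budget you reserve for $k$ vertices of $V$. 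It majors $r$ times the number of edges touching those $k-1$ vertices, which exceeds $r\binom{k}{2}$. So any calibration that lets a $k$-clique reach the threshold also creates leaders on negative instances. Using a private filler for each copy of each gadget removes this global effect. However, each filler then becomes a one-for-one trade between budget and majored vertices, and your counting would have to absorb these trades.

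\emph{The calibration.} The step you defer as ``calibrating these parameters'' is where the reduction has to be proved, and the sketch gives no mechanism that provably forces a leader to spend all but $k$ units of budget on a fixed part. Your first design fails outright, as you half-acknowledge. Padding that raises $\frac{|V'|}{2}$ to about $r\binom{k}{2}$ raises the budget by the same amount, so $S\supseteq V$ becomes affordable and majors all $r|E|$ copies. The block $Y$ is only suggested. You do not show that declining to major $Y$ is infeasible. Nor do you show that no mix of $Y$, core, filler and $V$ vertices reaches the threshold without a clique.

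\emph{The paper's device.} The paper reduces from vertex cover rather than half-clique. It gives every vertex outside a set $D$ the complete in-neighborhood $V'$, so such vertices are never majored by a set of size $<\frac{|V'|}{2}$. It then makes $|D|=\frac{|V'|+1}{2}$ exactly, so a leader must major all of $D$. This turns the threshold condition into an exact covering condition. The cycle $X$, fed by the pairs $\{\ell_i,r_i\}$, then forces $X\cup R\subseteq S$ and leaves exactly $\frac{n}{2}$ of budget for $V$.

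Be aware that the paper's vertex gadget does not encode vertex cover. There $N_{G'}(i)=N_G(i)\sqcup Y_i$ with $|Y_i|=|N_G(i)|-1$, so once $S\cap Y=\emptyset$ we have $i\in M(S)$ iff $N_G(i)\subseteq S$. Requiring this for every $i$ forces $V\subseteq S$ for connected $G$. A vertex cover needs instead ``$i\in S$ or $N_G(i)\subseteq S$''. So if you borrow that route, the vertex gadget must also be redesigned.
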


\begin{proof}%[Proof of Theorem~\ref{th:leader_NPC}]
    It belongs to $\NP$ because checking whether a potential subset $S \subset V$, with $|S|<\frac{|V|}{2}$, is leader can straightforwardly be performed in polynomial time. 

    \medskip
    We reduce from the problem of the existence of a vertex cover containing at most half the vertices, which is known to be {\NPC} (indeed, vertex covers and independent sets are closely related)~\cite{gj79}.  
    Let $G = (V,E)$ be an undirected connected graph with $V = \{0, \ldots, n-1\}$ where $n$ is a multiple of $4$. From the connectivity, we have that each vertex of $G$ has at least one neighbor.
    
    We construct the digraph $G'=(V',E')$ on vertex set $V' = V \sqcup Y \sqcup X \sqcup Z \sqcup Z'$ where:
    \begin{itemize}[nosep]
        \item $Y = Y_0 \sqcup \cdots \sqcup Y_{n-1}$ is such that $|Y_i| = |N_G(i)| - 1$ for all $i \in V$,
        \item $X = \{x_0, \ldots, x_{|Y|-1}\}$ \emph{i.e.}~$|X|=|Y|=2\,|E|-n \ge n$,
        \item $Z = \{\ell_0, r_0\} \sqcup \cdots \sqcup \{\ell_{3n/4-1}, r_{3n/4-1}\} $,
        \item $Z' = \{\ell_{3n/4},r_{3n/4}\} \sqcup \cdots \sqcup \{\ell_{n-1}, r_{n-1}\} \sqcup \{t\}$.
    \end{itemize}
    We have $|Z| = \frac{3n}{2}$, $|Z'| = \frac{n}{2} + 1$, and $|V'|=n+2\,(2\,|E|-n)+2\,n+1=n+4\,|E|+1$.
    
    The arcs of $E'$ are (See Figure~\ref{fig:leader_NPC}):
    \begin{itemize}[nosep]
        \item $E'$ contains $(i,j)$ and $(j,i)$ for each edge $\{i,j\}\in E$, 
        \item $N_{G'}(i)$ also contains all the nodes in $Y_i$ for each $i\in V$,
        \item $N_{G'}(x_i)=\{ x_{i-1}, \ell_i, r_i \}$ for each $i\in V$, where $x_{-1}$ is interpreted as $x_{n-1}$ (the vertices of $X$ form a cycle),
        \item $N_{G'}(u) = V'$ for all $u \in Y \sqcup Z$ \emph{i.e.}~they have a complete in-neighborhood,
        \item $N_{G'}(z') = X$ for all $z'\in Z'$.
    \end{itemize}
    Remark that $G'$ is strongly connected.

    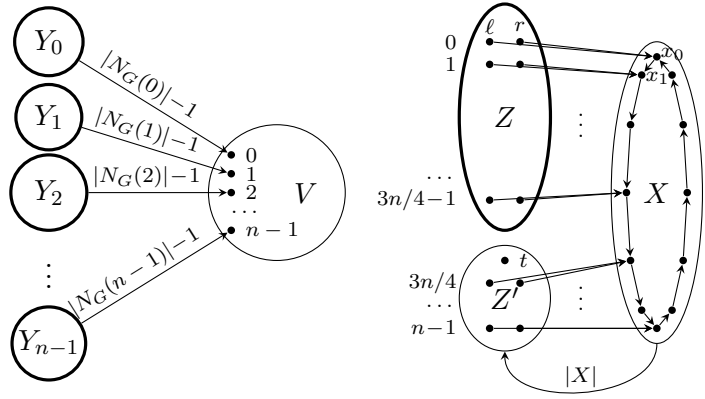
\begin{figure}[t]
        \centering
        \begin{tikzpicture}
  \def\dY{-3}
  \def\dZ{3}
  \def\dX{5}
  % sets Y_i
  \node[circle,minimum size=.9cm,draw,very thick] (Y0) at (\dY,2) {$Y_0$};
  \node[circle,minimum size=.6cm,draw,very thick] (Y1) at (\dY,1) {$Y_1$};
  \node[circle,minimum size=1cm,draw,very thick] (Y2) at (\dY,0) {$Y_2$};
  \node (Y) at (\dY,-1) {$\vdots$};
  \node[circle,minimum size=.6cm, inner sep=1pt,draw,very thick] (Yn-1) at (\dY,-2) {$Y_{n-1}$};
  % set V
  \node[circle,minimum size=1.8cm,draw] (V) at (0,0) {\qquad $V$};
  \foreach \y/\v in {1/0,.5/1,0/2,-1/n-1}
    \node[circle,fill,inner sep=1pt,label=right:\scriptsize$\v$] (\v) at (-.6,.5*\y) {};
  \node at (-.4,-.25) {\scriptsize \dots};
  % sets Z Z'
  \node[ellipse,minimum height=3cm,minimum width=1.2cm,draw,very thick] (Z) at (\dZ,1) {$Z$};
  \node[ellipse,minimum height=1.4cm,minimum width=1.2cm,draw] (ZZ) at (\dZ,-1.4) {$Z'$};
  \foreach \y/\v in {2/0,1.7/1,-.1/3nb4-1,-1.2/3nb4,-1.8/n-1}{
    \node[circle,fill,inner sep=1pt] (l\v) at (\dZ-.2,\y) {};
    \node[circle,fill,inner sep=1pt] (r\v) at (\dZ+.2,\y) {};
  }
  \node[circle,fill,inner sep=1pt,label=right:\scriptsize$t$] (t) at (\dZ,-.9) {};
  \node[left] at (\dZ-.5,2) {\scriptsize $0$};
  \node[left] at (\dZ-.5,1.7) {\scriptsize $1$};
  \node[left] at (\dZ-.5,.2) {\scriptsize \dots};
  \node[left] at (\dZ-.5,-.1) {\scriptsize $3n/4\!-\!1$};
  \node[left] at (\dZ-.5,-1.2) {\scriptsize $3n/4$};
  \node[left] at (\dZ-.5,-1.5) {\scriptsize \dots};
  \node[left] at (\dZ-.5,-1.8) {\scriptsize $n\!-\!1$};
  \node at (\dZ-.2,2.2) {\scriptsize $\ell$};
  \node at (\dZ+.2,2.2) {\scriptsize $r$};
  % set X
  \node[ellipse,minimum height=4cm,minimum width=1.2cm,draw] (X) at (\dX,0) {$X$};
  \foreach \r in {0,...,11}
    \node[circle,fill,inner sep=1pt] (x\r) at ($(\dX,0)+(30*\r+90:.4cm and 1.8cm)$) {};
  \node[xshift=6pt,yshift=1pt] at (x0) {\scriptsize $x_0$};
  \node[xshift=6pt,yshift=-1pt] at (x1) {\scriptsize $x_1$};
  % arcs from Y_i to V
  \foreach \v in {0,1,2,n-1}
    \draw[-stealth] (Y\v) to node[above,yshift=-2pt,sloped,pos=.42]{\scriptsize $|N_G(\v)|\!-\!1$} (\v);
  % arcs within X
  \foreach[count=\i from 0] \r in {1,2,3,4,5,6,7,8,9,10,11,0}
    \draw[-stealth] (x\i) -- (x\r);
  % arcs from Z Z' to X
  \foreach \z/\x in {0/0,1/1,3nb4-1/3,3nb4/4,n-1/6}{
    \draw[-stealth] (l\z) -- (x\x);
    \draw[-stealth] (r\z) -- (x\x);
  }
  \foreach \y in {1,-1.3}
    \node at (\dX-1,\y) {\scriptsize \vdots};
  % arcs from X to Z'
  \draw[-stealth,out=-90,in=-90] (X) to node[above,yshift=-2pt]{\scriptsize$|X|$} (ZZ);
\end{tikzpicture}
        \caption{
            Construction of $G'$ in the proof of Theorem~\ref{th:leader_NPC}.
            Arcs between the sets $Y_i$ and $V$ are labeled by the number of arcs.
            All $X$ is in neighbor of all $Z'$.
            The sets $Y_i$ and $Z$ have all the vertices in $V'$ as in-neighbors, hence there cannot belong to $M(S)$ for any leader subset $S$.
      %\vspace*{-1em} % HACK
        }
        \label{fig:leader_NPC}
  \end{figure}

    We now show that $G$ contains a vertex cover of size at most $\frac{|V|}{2}$  if and only if $G'$ contains a leader subset of size at most $\frac{|V'|}{2}$. 

    \medskip
    $(\Rightarrow)$ Let $C$ be a vertex cover of $G$ of size at most $\frac{|V|}{2}$. 
    Let $R$ be the set of nodes $r_i$ in $Z \cup Z'$ and $S = C \sqcup X \sqcup R$. 
    We have that $|S| \leq \frac{n}{2}+2\,|E|-n+n < \frac{|V'|}{2}$.  
    Since $S$ contains $X$ and $R$, we deduce that $X \subseteq M(S)$ and $Z' \subseteq M(S)$.
    Moreover, $C$ is a vertex cover of $G$ therefore $N_{G'}(i) \cap C = N_{G}(i)$ for all $i \in V$,
    and we have $|N_{G}(i)| = \frac{|N_{G'}(i)|+1}{2}$ for all $i \in V$, hence we deduce that $V \subseteq M(S)$. 
    Consequently, $|M(S)|\ge |X|+|Z'|+|V| = 2\,|E|-n+1+\frac{n}{2}+n > \frac{|V'|}{2}$ and $S$ is a leader subset of $V'$.

    $(\Leftarrow)$ Let $S$ be a leader subset of $V'$, hence of size $S<\frac{|V'|}{2}$
    and with major size $|M(S)|>\frac{|V'|}{2}$.
    We first prove that there exists a leader $S'$ avoiding $Y$ and using nodes from $R$ in $Z$ and $Z'$ (the set of nodes $r_i$ in $Z \sqcup Z'$). 

    \begin{claim}
        There exists a leader subset $S'\subseteq V'$ such that $S' \cap Y = \emptyset$ and $S' \cap (Z \sqcup Z') \subseteq R$. 
    \end{claim}
    
    \begin{claimproof}
        Firstly, we prove that it is possible to replace in $S$ each node which belongs to some $Y_i$ by some node belonging to $V$, or remove it, without changing $M(S)$.
        Assume that there exists $i \in V$ such that $S \cap Y_i \neq \emptyset$, and let $y$ be a node in this intersection.
        Two cases are possible.
        \begin{itemize}[nosep]
            \item If there exists $v \in N_{G}(i)$ such that $v \notin S$.
            Let $S'$ be the set $S$ where we replace $y$ by $v$. 
            Since no node in $Y$ nor $Z$ is in $M(S)$ (because their in-degrees are $|V'|$), we can consider only the out-neighborhood of $y$ included in $V$, namely the node $i$.
            We have $i\in N_{G'}(v)$, therefore $v$ compensates for the loss of $y$ and $M(S) \subseteq M(S')$.
            \item If $S \cap N_{G}(i) = N_{G}(i)$, 
            then $i \in M(S \setminus \{y\})$, 
            and for similar reasons as the previous item, $M(S) = M(S \setminus \{y\})$.
        \end{itemize}

        Secondly, we prove that $t$ is not in $S$, and that whenever both $\ell_i$ and $r_i$ are in $S$ then $\ell_i$ can be removed. Since in the case $\ell_i \in S$ and $r_i \notin S$, node $\ell_i$ can be replace by $r_i$, we conclude that $S' \cap (Z \sqcup Z')$ can be restricted to $R$.
        \begin{itemize}[nosep]
            \item If $t \in S$. 
            The out-neighborhood of $s$ is $Y \sqcup Z$ which cannot be in $M(S)$, thus we conclude that $M(S) = M(S \setminus \{s\})$.
            \item If there exists $i \in V$ such that $\ell_i, r_i \in S$ and $x_{i-1} \notin S$. 
            Let $S'$ be the set $S$ where we replace $\ell_i$ by $x_{i-1}$.
            For similar reasons as previously, we have that $M(S) = M(S')$.
            \item If there exists $i \in V$ such that $\ell_i, r_i \in S$ and $x_{i-1} \in S$.
            For similar reasons as previously, we have that $M(S) = M(S \setminus \{\ell_i\})$.
        \end{itemize}
        This concludes the proof of the claim.
    \end{claimproof}

    From the previous Claim, we assume without loss of generality that $S \cap Y = \emptyset$ and $S' \cap (Z \sqcup Z') \subseteq R$. 
    As already explained, $Y, Z \notin M(S)$, hence $M(S)$ contains only nodes of $X, Z'$ and $V$. 
    Moreover $|X| + |Z'| + |V| = \frac{|V'|+1}{2}$, hence it follows that $M(S) = X \sqcup Z' \sqcup V$ otherwise $S$ is not a leader. 
    
    For each $i\in\{n,\ldots,|X|-1\}$, in order to have $x_i \in M(S)$ we must have $x_{i-1}\in S$ because $N_{G'}(x_i)=\{x_{i-1}\}$.
    Furthermore we have $S \cap (Z \sqcup Z') \subseteq R$, meaning that no node $\ell_i$ is in $S$.
    As a consequence, for each $i\in\{0,\ldots,n-1\}$, in order to have $x_i \in M(S)$ we must have $x_{i-1},r_i \in S$. In total we must have $X,R \subseteq S$.

    To verify $|S|<\frac{|V'|}{2}$, at most $\frac{|V|}{2}$ node of $V$ are in $S$.
    Since no node of $Y$ is in $S$, in order to have $V\subseteq M(S)$, for each $i \in V$ we must have $N_{G}(i) \subseteq S$. 
    We conclude that $S \cap V$ is a vertex cover of $G$ whose size is at most $\frac{|V|}{2}$.
    %\qed
\end{proof}

We continue with studies on the complexity of deciding whether $G$ admits
a self-sufficient $m$-cycle pattern for some $m \geq 2$
(recall that $m=1$ corresponds to self-sufficient subset). %of Definition~\ref{def:self-sufficient}). 
Thanks to Lemma~\ref{lemma:dense_mDCG_is_forbidden}, this problem is equivalent to asking whether the dynamic of $A_G$ admits a limit cycle of length at least $2$. 

\begin{problem}
    {MBAN Limit Cycle problem}{$\SSmC$}{An MBAN $A_G$ given by its directed graph $G$}
    {Does the dynamics of $A_G$ contains a limit cycle of length $\geq 2$}
\end{problem}

To prove that $\SSmC$ is {\PSC}, we reduce from the \emph{Iterated Circuit Value Problem} ($\ICVP$), which is known to be {\PSC}~\cite{gmst16}.
An instance of $\ICVP$ is a circuit $C : \{0,1\}^n \to \{0,1\}^n$, a configuration $x\in\{0,1\}^n$ and a positive integer $i$ between $0$ and $n-1$, and the question is whether there is some integer $t>0$ such that $C^t(x)_i=1$, that is, the $t$-th iteration of the circuit $C$ starting from $x$ (feeding back the output to the input of the circuit at each iteration) outputs a bit $1$ on position $i$.
Without loss of generality, we assume that the circuit $C$ is \emph{layered}, that is to say the gates of depth $\ell$ take their inputs from gates at depth $\ell-1$ (depth $0$ being the inputs), and all the outputs are gates at the same depth $d$, which is the \emph{depth} of the circuit $C$ (this is achieved simply by adding $OR$ gates to copy values from one layer to the next). 
Our reduction towards an instance $G$ (a directed graph defining an MBAN) of $\SSmC$ is a construction in three steps.

First, we transform an instance of $\ICVP$, into a circuits $C_1$,
such that the dynamics induced by $C_1$ (as a function on state space $\{0,1\}^n$)
admits a limit cycle of length $2^n$ if and only if there exists no integer $t$ verifying that $C^{t}(x)_i = 1$,
otherwise $C_1$ has a unique attractor which is the fixed point $1^n$.
This is the purpose of Definition~\ref{def:global_atractor} and Proposition~\ref{prop:coICVP_reduce_EXPC}. 

Second, we give a method for transforming $C_1$ into a monotone Boolean circuit $C_2$
such that the dynamics induced by $C_2$ contains a limit cycle of length at least $2^n$
if and only if it is also the case for the dynamics of $C_1$,
otherwise both have only fixed points (two for the monotone Boolean circuit $C_2$, namely $0^n$ and $1^n$). 
This is done in Definition~\ref{def:circuit_to_monotone} and Proposition~\ref{prop:EXPCDO_reduce_MEXPC}.

Third, we explain how to transform a monotone Boolean circuit into an MBAN
(based on the gadgets in Figure~\ref{fig:gates}). 
Furthermore, we show that the dynamics induced by the MBAN,
resulting from the application of this transformation of $C_2$,
contains a limit cycle of length at least $2^n$ if and only if
it is also the case for the dynamics of $C_2$, otherwise both have only fixed points. 
We explain this last step in Definition~\ref{def:depth_one_to_MBAN} and Proposition~\ref{prop:CDOLC_reduce_SSmC}.

We also argue that all of the constructions can be performed in polynomial time. 
The proofs of Propositions~\ref{prop:coICVP_reduce_EXPC} and~\ref{prop:EXPCDO_reduce_MEXPC},
and of Lemmas~\ref{lemma:behavior_of_negated}, \ref{lemma:behavior_of_monotone} and~\ref{lemma:MBAN_TF}
are not particularly difficult from the definitions.
%they are presented in Appendix~\ref{a:proofs}.

Let us present the first step, constructing a circuit $L(C,x,i)$ with either a global fixed point attractor
if the $\ICVP$ instance is positive, or with a long cycle otherwise. %if the $\ICVP$ instance is negative.
This is achieved by a circuit incrementing a counter to obain a long cycle,
and sending the whole configuraiton to some fixed configuration whenever the application of circuit $C$
outputs a $1$ on its $i$-th bit.

\begin{definition}\label{def:global_atractor}
    %We define the transformation $L$, which constructs a Boolean circuit $C'$, from a Boolean circuit $C$, such that if an element of depth $j$ has for input a gate of depth $i$ with $i < j-1$ in $C$, then we add some $OR$ gates to copy the value in $i$ until the depth $j-1$, and also we add some $OR$ gates so that all the outputs of $C$ have depth $d$. 
    We define the transformation $L$ which, given $C,x,i$ an instance of $\ICVP$, constructs a Boolean circuits $L(C,x,i) : \{0,1\}^{2n} \to \{0,1\}^{2n}$ where its inputs and outputs are separated in two blocks each of $n$ bits, namely $y$ and $counter$, such that: 
    \begin{enumerate}[nosep]
        \item $1^{2n}$ is a fixed point,
        \item if $C(y)_i = 1$ then the output of $L(C,x,i)$ is $1^{2n}$,
        \item else, the $counter$ of $L(C,x,i)$ is incremented by $1$ modulo $2^n$,
        \item when the $counter$ equals $0^n$ then the output $y$ of $L(C,x,i)$ is $C(x)$, otherwise it is $C(y)$ (iterate $C$ on the input $y$). 
    \end{enumerate}
    We can construct $L(C,x,i)$ in polytime, as described in Figure~\ref{fig:fig_conv}. %Appendix~\ref{a:proofs} (Figure~\ref{fig:fig_conv}).
\end{definition}

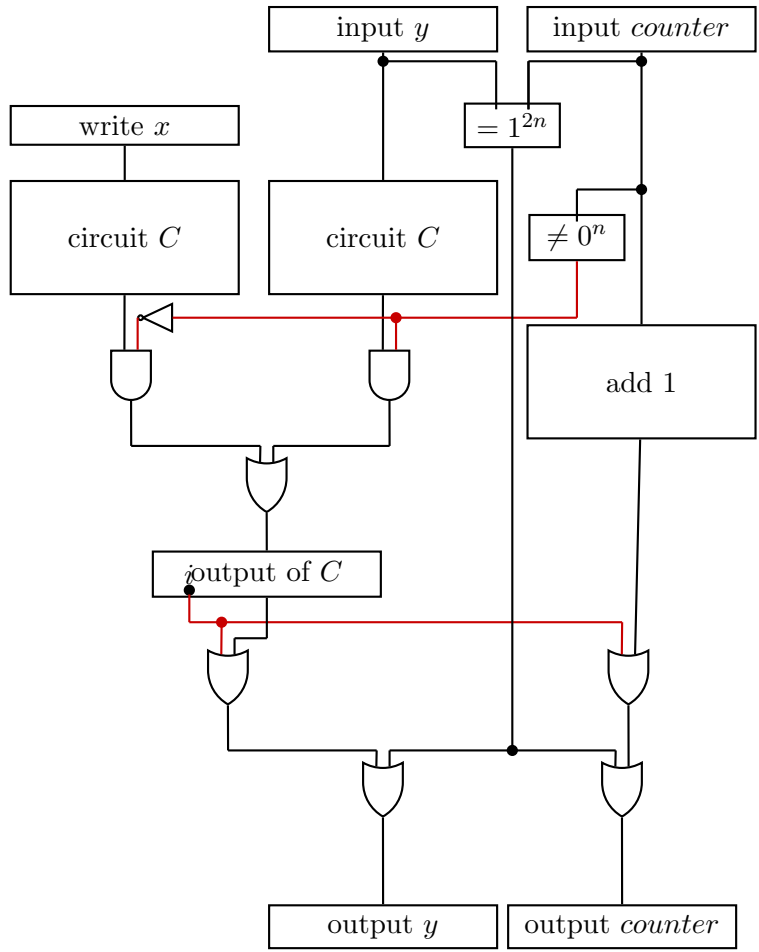
\begin{figure}
        \centering
        \begin{tikzpicture}[circuit logic US, thick, node distance=1cm and 2cm, scale=0.85]

    \tikzstyle{threadB} = [-, black]
    \tikzstyle{threadR} = [-, myRed]
    \tikzstyle{threadG} = [-, myGreen]
    \tikzstyle{inputs} = [draw, minimum width=3cm, minimum height=0.5cm]
    \tikzstyle{value} = [draw, minimum width=0.5cm, minimum height=0.5cm]
    \tikzstyle{module} = [draw, minimum width=1.25cm, minimum height=0.5cm]
    \tikzstyle{circuit} = [draw, minimum width=3cm, minimum height=1.5cm]
    \tikzstyle{AND} = [and gate, anchor=input 1, rotate=270]
    \tikzstyle{OR} = [or gate, anchor=input 1, rotate=270]
    \tikzstyle{XOR} = [xor gate, anchor=input 1, rotate=270]
    \tikzstyle{NOTV} = [not gate, anchor=input, rotate=270, scale=0.6]
    \tikzstyle{NOTH} = [not gate, anchor=output, rotate=180, scale=0.6]
    
    %lv 0
    \node[inputs] (A) at (0,8) {input $y$};
    \node[inputs] (counter) at (4,8) {input $counter$};
    \node[inputs] (W) at (-4,6.5) {write $x$};
    \node[module] (Una1) at (2,6.5) {$= 1^{2n}$};
    
    %lv 1
    \node[circuit] (C1) at (-4,4.75) {circuit $C$};
    \node[circuit] (C2) at (0,4.75) {circuit $C$};
    \node[module] (NUna0) at (3,4.75) {$\neq 0^n$};
    
    %lv 2
    \node[NOTH] (N1) at (-3.8,3.5) {};
    \node[AND] (A1) at (-3.8,3) {};
    \node[AND] (A2) at (0.2,3) {};
    \node[circuit] (Add) at (4,2.5) {add $1$};
    
    %lv 3
    \node[OR] (O1) at (-1.7,1.25) {};
    
    %lv 4
    \node[inputs] (U) at (-1.8, -0.5) {output of $C$};
    \node (I) at (-3, -0.5) {$i$};
    \node (IO) at (-3,-0.75) {};
    \filldraw (IO) circle (2pt);
    
    %lv 5
    \node[OR] (O2) at (-2.3,-1.75) {};
    \node[OR] (O5) at (3.9,-1.75) {};
    
    %lv 6
    \node[OR] (O3) at (0.1,-3.5) {};
    \node[OR] (O4) at (3.8,-3.5) {};

    %lv 7
    \node[inputs] (AO) at (0,-6) {output $y$};
    \node[inputs] (counterO) at (3.7,-6) {output $counter$};
    
    %thread inputs
    \draw[threadB] (W) to (C1);

    \coordinate (AP1) at (0,7.5);
    \coordinate (AP2) at (1.75,7.5);
    \coordinate (AP3) at (1.75,6.75);
    \filldraw   (AP1) circle (2pt);
    \draw[threadB] (A) to (AP1);
    \draw[threadB] (AP1) to (AP2);
    \draw[threadB] (AP1) to (C2);
    \draw[threadB] (AP2) to (AP3);

    \coordinate (counterP1) at (4,7.5);
    \coordinate (counterP2) at (2.25,7.5);
    \coordinate (counterP3) at (2.25,6.75);
    \coordinate (counterP4) at (4,5.5);
    \coordinate (counterP5) at (4,5.5);
    \coordinate (counterP5) at (3,5.5);
    \coordinate (counterP6) at (3,5);
    \filldraw   (counterP1) circle (2pt);
    \filldraw   (counterP4) circle (2pt);
    \draw[threadB] (counter) to (counterP1);
    \draw[threadB] (counterP1) to (counterP2);
    \draw[threadB] (counterP1) to (counterP4);
    \draw[threadB] (counterP2) to (counterP3);
    \draw[threadB] (counterP2) to (counterP3);
    \draw[threadB] (counterP4) to (counterP5);
    \draw[threadB] (counterP5) to (counterP6);
    \draw[threadB] (counterP4) to (Add);
    
    %thread circuits
    \draw[threadB] (C1) to (A1.input 2);
    \draw[threadB] (C2) to (A2.input 2);
    
    %thread B
    \coordinate (BP1) at (3,3.5);
    \coordinate (BP2) at (0.2,3.5);
    \coordinate (BP3) at (3.3,3.5);
    \filldraw[myRed] (BP2) circle (2pt);
    \draw[threadR] (NUna0) to (BP1);
    \draw[threadR] (BP1) to (BP2);
    \draw[threadR] (BP2) to (N1.input);
    \draw[threadR] (BP2) to (A2.input 1);
    \draw[threadR] (N1.output) to (A1.input 1);

    %thread AND lv 2
    \coordinate (A1P1) at (-3.9,1.5);
    \coordinate (A1P2) at (-1.9,1.5);
    \draw[threadB] (A1) to (A1P1);
    \draw[threadB] (A1P1) to (A1P2);
    \draw[threadB] (A1P2) to (O1.input 2);
    
    \coordinate (A2P1) at (0.1,1.5);
    \coordinate (A2P2) at (-1.7,1.5);
    \draw[threadB] (A2) to (A2P1);
    \draw[threadB] (A2P1) to (A2P2);
    \draw[threadB] (A2P2) to (O1.input 1);
    
    %thread OR lv 3
    \draw[threadB] (O1.output) to (U);
    
    %thread lv 4
    \coordinate (IP1) at (-3,-1.25);
    \coordinate (IP2) at (-2.5,-1.25);
    \coordinate (IP3) at (3.7,-1.25);
    \filldraw[myRed] (IP2) circle (2pt);
    \draw[threadR] (I) to (IP1);
    \draw[threadR] (IP1) to (IP2);
    \draw[threadR] (IP2) to (O2.input 2);
    \draw[threadR] (IP2) to (IP3);
    \draw[threadR] (IP3) to (O5.input 2);
    
    \coordinate (UP1) at (-1.8,-1.5);
    \coordinate (UP2) at (-2.3,-1.5);
    \draw[threadB] (U) to (UP1);
    \draw[threadB] (UP1) to (UP2);
    \draw[threadB] (UP2) to (O2.input 1);
    
    %thread OR lv 5
    \coordinate (O2P1) at (-2.4,-3.25);
    \coordinate (O2P2) at (-0.1,-3.25);
    \draw[threadB] (O2.output) to (O2P1);
    \draw[threadB] (O2P1) to (O2P2);
    \draw[threadB] (O2P2) to (O3.input 2);

    \draw[threadB] (Add) to (O5.input 1);
    \draw[threadB] (O5.output) to (O4.input 1);

    \coordinate (All1P1) at (2,-3.25);
    \coordinate (All1P2) at (0.1,-3.25);
    \coordinate (All1P3) at (3.6,-3.25);
    \filldraw (All1P1) circle (2pt);
    \draw[threadB] (Una1) to (All1P1);
    \draw[threadB] (All1P1) to (All1P2);
    \draw[threadB] (All1P1) to (All1P3);
    \draw[threadB] (All1P2) to (O3.input 1);
    \draw[threadB] (All1P3) to (O4.input 2);
    
    %thread AND lv6 
    \draw[threadB] (O3.output) to (AO);
    \draw[threadB] (O4.output) to (counterO);
    
\end{tikzpicture}
        \caption{
          Effective construction of $L(C,x,i)$ in Definition~\ref{def:global_atractor}.
          The test that the input is $1^{2n}$ is a circuit of depth $\log_2(2n)$
          with at most $2n$ $AND$ gates of fan-in two,
          and the test that $counter$ is not $0^n$ is analogous with $OR$ gates.
          Write $x$ is composed of $n$ NOT gates, $k$ $AND$ gates (for the $0$s in $x$)
          and $n-k$ $OR$ gates (for the $1$s in $x$).
          Add $1$ computes $counter + 1 \mod 2^n$ using $n$ $XOR$ gates and $n$ $AND$ gates.
          All the $OR$ and $AND$ gates drawn correspond to one gate per bit.
        }
        \label{fig:fig_conv}
\end{figure}

\begin{proposition}\label{prop:coICVP_reduce_EXPC}
     Let $C : \{0,1\}^n \to \{0,1\}^n$ be a Boolean circuit,
     $x\in\{0,1\}^n$ a configuration and $i \in \{0,\ldots, n-1\}$ a position.
     Then, $C,x,i$ is a negative instance of $\ICVP$ if and only if
     the iterations of $L(C,x,i)$ contains a limit cycle of length $2^n$. 
     Moreover, $C,x,i$ is a positive instance of $\ICVP$ if and only if
     the iterations of $L(C,x,i)$ has the global fixed point attractor $1^{2n}$. 
\end{proposition}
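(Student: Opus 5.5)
The plan is to describe the map $F=L(C,x,i)$ exactly on every configuration $(y,c)\in\{0,1\}^n\times\{0,1\}^n$ and then read off its attractors. From Definition~\ref{def:global_atractor} (and Figure~\ref{fig:fig_conv}):
\begin{itemize}[nosep]
  \item $F(1^{2n})=1^{2n}$;
  \item if $C(y)_i=1$, then $F(y,c)=1^{2n}$;
  \item otherwise $F(y,c)=(y',c+1 \bmod 2^n)$, where $y'=C(x)$ when $c=0^n$ and $y'=C(y)$ otherwise.
\end{itemize}
There is one subtlety about whether the test ``$\neq 0^n$'' is applied to the input counter; I will follow the figure and take it on the input $c$. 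The consequence is that the $y$-block is reset to $C(x)$ exactly when the counter wraps through $0^n$.

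\emph{Negative instance.} Assume $C^t(x)_i=0$ for all $t>0$, and start from $(x,0^n)$. By induction on $k\in\{1,\dots,2^n\}$, I will show $F^k(x,0^n)=(C^k(x),k \bmod 2^n)$. The induction uses that $C(C^{k-1}(x))_i=C^k(x)_i=0$, so the reset-to-$1^{2n}$ branch never fires. The base case uses the reset: at $c=0^n$ the $y$-block becomes $C(x)$. At $k=2^n$ the counter is back to $0^n$. The next step then gives $(C(x),1)$ again, because the reset rewrites $y$ to $C(x)$ regardless of the current $y$, and this branch is safe since $C(x)_i=0$.

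Hence the orbit of $(C(x),1)$ is periodic, with period a multiple of $2^n$ because the counter strictly increments. It is also at most $2^n$, since $F^{2^n}(C(x),1)=(C(x),1)$ by the same computation. So there is a limit cycle of length exactly $2^n$. For the converse of the first equivalence, a limit cycle of length $2^n\geq 2$ differs from the fixed point $1^{2n}$, so the attractor is not global. Together with the second equivalence, this forces the instance to be negative.

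\emph{Positive instance.} Assume $C^t(x)_i=1$ for some minimal $t>0$. From an arbitrary $(y,c)$, either the $1^{2n}$ branch fires, or the counter increments. Within at most $2^n$ steps the counter hits $0^n$, after which $y$ becomes $C(x)$ and the subsequent $y$-values follow $C^k(x)$. After fewer than $t$ more steps, the computed $C(y)_i$ equals $C^t(x)_i=1$, and the configuration jumps to $1^{2n}$. This is the step I expect to be the main obstacle: the counter may wrap to $0^n$ before the $t$-th iterate is reached, which would reset the simulation. I would handle it by noting that $t\leq 2^n$ can be assumed, because $C$ acts on $2^n$ states, so a first hit occurs within $2^n$ iterates. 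Since a full counter period from $0^n$ gives $2^n$ consecutive iterates $C^1(x),\dots,C^{2^n}(x)$, the hit happens before the next wrap. If needed, I would adjust the indexing by one to match exactly which iterate is tested at each counter value.

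Thus every orbit reaches $1^{2n}$, which is therefore the global fixed-point attractor and excludes any cycle of length $2^n$. The two dichotomies (negative $\Rightarrow$ cycle, positive $\Rightarrow$ global attractor) are mutually exclusive and exhaustive, so both stated equivalences follow.
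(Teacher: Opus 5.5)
Your argument follows the paper's proof. In the negative case you exhibit the explicit $2^n$-cycle through $(C(x),1)$. In the positive case you use the dichotomy ``jump to $1^{2n}$ before the counter wraps, or wrap to $(C(x),1)$ and then simulate $C$ from $x$''. The two converse implications then follow because the two cases are exclusive and exhaustive.

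On the wrap-around you are more careful than the paper. Its ``after $t-1$ more iterations'' tacitly assumes the counter does not pass through $0^n$ a second time, which would reset the $y$-block. Your choice of a minimal $t$ with the pigeonhole bound $t\le 2^n$ is exactly what justifies this.

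One point must be corrected, and ``adjusting the indexing by one'' does not fix it. Your bulleted description of $F$ tests $C(y)_i$ even when $c=0^n$, where the $y$-block is overwritten by $C(x)$. With that map, the bit $C(x)_i$ is never examined after a reset. So a positive instance whose only hit is $t=1$ (for example $C(x)_i=1$ but $C^k(x)_i=0$ for all $k\ge 2$) has the cycle $(C(x),1)\to(C^2(x),2)\to\cdots\to(C^{2^n}(x),0)\to(C(x),1)$ of length $2^n$, and the statement would fail.

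In Figure~\ref{fig:fig_conv} the bit $i$ is read from the \emph{selected} output. Let $z=C(x)$ if $c=0^n$ and $z=C(y)$ otherwise. Then $F(y,c)=1^{2n}$ if $z_i=1$ or $(y,c)=1^{2n}$, and $F(y,c)=(z,c+1 \bmod 2^n)$ otherwise. Your later steps already use this semantics: the reset is ``safe since $C(x)_i=0$'', and one counter period from $0^n$ tests exactly $C^1(x),\dots,C^{2^n}(x)$. Once the map is stated this way, your proof is complete.
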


\begin{proof}
    %First remark that $L(C),x,i$ is a positive instance of $\ICVP$ if and only if it is %also the case for $C,x,i$. 
    %Let us consider $C,x,i$ be an instance of $\ICVP$. 
    For convenience we denote $C_1 := F(C,x,i)$.
    We prove one direction for each part of the statement, the other directions come from the conjunction of the two alternatives.

    \medskip
    Assume that $C,x,i$ is a negative instance of $\ICVP$,
    and consider the starting configuration of $C_1$ such that $y = C(x)$ and the value of $counter$ is the binary encoding of $1$.
    The iterations of $C_1$ will iterate $C$ on $y$ and increment $counter$ because it will never encounter a $1$ at position $i$ in $y$, until the $counter$ reaches the binary encoding of $0$, and then the output $y$ will be $C(x)$ and the output $counter$ will be the binary encoding of $1$, which is our starting configuration after $2^n$ steps.
    
    %We show by induction that $y = C^{counter + 1}(x)$.
    %At the end of the first iteration, since the value of $counter$ is $0$, then $y = C(x)$. 
    
    %Let $t \in \N$.
    %We assume that the properties are true for $t$, in other words, the output of $C_1$ restricted to $y$ is $C^{c+1}(x)$ and restricted to $counter$ is $c+1 \mod n$ with $c$ the value of $counter$ at the beginning of the iteration. Three cases are possible. 
    %\begin{itemize}[nosep]
    %    \item The value of $counter$ is $0$. 
    %    This case is similar to the base cases.
    %    \item  Secondly, the $t$-th output of $C_1$ restricted to $y$ is $C^{c+1}$ and restricted to $counter$ is in $\{1, \ldots, 2^n - 2\}$.
    %    Then, by construction, since $C^{c + 1}(x)_i \neq 1$ and $c + 1 \mod 2^n \neq 0$, the output of $C_1$ restricted to $y$ is $C^{c + 1}(x)$ and restricted to $counter$ is $c + 1$. 
    %    \item Thirdly, the $t$-th output of $C_1$ restricted to $y$ is $C^{c+1}$ and restricted to $counter = 2^n - 1$. 
    %Then, by construction, since $C^{c + 1}(x)_i \neq 1$ and $c + 1 \mod 2^n = 0$, the output of $C_1$ restricted to $y$ is $C^{c + 1}(x)$ and restricted to $counter$ is $0$. 
    %\end{itemize}
    %We conclude that the dynamic induced by $C_1$ admits a cycle with length $2^n$. 

    \medskip
    Assume that $C,x,i$ is a positive instance of $\ICVP$, with $C^t(x)_i=1$ for some integer $t$.
    Let $x'\in\{0,1\}^{2n}$ be any configuration of $C_1$, decomposed as $y'$ and $counter'$. 
    Two cases are possible. 
    First, if there exists $t'$ with $counter' + t' < 2^n$ such that $C^{t'}(y')_i = 1$, 
    then the value of $counter$ will not reset to $0$ during these iterations
    and $C_1^{t'}(x')$ is the fixed point $1^{2n}$.
    Second, if $C^{t'}(y')_i = 0$ for all $t'$ such that $counter' + t' < 2^n$, 
    then the output of $C_1^{k}(x')$ restricted to $y$ is $C(x)$ and restricted to $counter$ is $1$, with $k = 2^n - counter$. 
    Since $C^{t}(x)_i = 1$, then after $t-1$ more iterations we will have that $C_1^{k+t-1}(x')$ is the fixed point $1^{2n}$.
    %\qed
\end{proof}

The second step constructs a monotone circuit $M(C)$ with a behavior analogous to $C$.
This is achieved by employing the original circuit $C$ and its symmetric $N(C)$
according to the flip of $0$ and $1$ values.

\begin{definition}\label{def:circuit_negated}
    Let $C: \{0,1\}^n \to \{0,1\}^n$ be a Boolean circuit.
    We define the transformation $N$ which builds $N(C)$ from $C$ by replacing each $OR$ gate by an $AND$ gate and conversely.
    For any configuration $x\in\{0,1\}^n$, let $\overline{x}$ denote the configuration where each bit $x_i\in\{0,1\}$ of $x$ is flipped into $\overline{x}_i=\neg x_i$.
\end{definition}

\begin{lemma}\label{lemma:behavior_of_negated}
    Let $C: \{0,1\}^n \to \{0,1\}^n$ be a Boolean circuit.
    For any configuration $x\in\{0,1\}^n$, we have $N(C)(\overline{x})=\overline{C(x)}$.
\end{lemma}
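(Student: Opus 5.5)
We argue by structural induction on the circuit $C$, viewed as a directed acyclic graph, proving the stronger statement that for every gate $g$ of $C$, the value computed by the corresponding gate of $N(C)$ on input $\overline{x}$ is the negation of the value computed by $g$ in $C$ on input $x$. The statement of Lemma~\ref{lemma:behavior_of_negated} then follows by applying this to the $n$ output gates, since $N(C)$ has the same underlying graph as $C$ and the same designated outputs.

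We process the gates in topological order; the layered assumption lets us induct on depth. At depth $0$ the gates are the inputs. Input $j$ carries $\overline{x}_j=\neg x_j$ in $N(C)$ and $x_j$ in $C$, so the claim holds. For a gate at depth $\ell+1$, its in-neighbors are identical in both circuits, and by the induction hypothesis they carry complemented values. We then use De Morgan's laws. If $g$ is an $OR$ gate in $C$ with inputs $a,b$, it becomes an $AND$ gate in $N(C)$ with inputs $\neg a,\neg b$, so it outputs $\neg a\wedge\neg b=\neg(a\vee b)$. The case where $g$ is an $AND$ gate is symmetric.

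A NOT gate, if present (the circuits of Figure~\ref{fig:fig_conv} use them), is left unchanged by $N$. With input $\neg a$ it outputs $a=\neg(\neg a)$, so complementation is preserved. An $XOR$ gate is not swapped by $N$ as written, so we assume it has been expanded into $AND$/$OR$/NOT gates beforehand. Alternatively, one can observe that $\neg a\oplus\neg b=a\oplus b$ does \emph{not} complement, so the expansion is genuinely needed. Constants, if any, must be swapped $0\leftrightarrow 1$; we would note this explicitly, or note that the Boolean circuits here have none.

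There is no real obstacle. The only point needing care is making sure the gate basis is exactly the one the transformation $N$ handles: $AND$, $OR$, and possibly NOT. We would state this assumption at the start of the proof, so that the De Morgan step covers every gate type.
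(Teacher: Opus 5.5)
Your proof is correct and follows essentially the same route as the paper: a gate-by-gate induction along a topological order, using De Morgan's laws for the swapped $AND$/$OR$ gates and noting that unchanged $NOT$ gates still preserve complementation. Your extra caveat is a real point that the paper's one-line proof glosses over. $XOR$ gates satisfy $\neg a\oplus\neg b=a\oplus b$, so they must be expanded into $AND$/$OR$/$NOT$ (or swapped with $XNOR$), and constants must be swapped. This matters because the ``add $1$'' block of $L(C,x,i)$ is described with $XOR$ gates before $M$ (and hence $N$) is applied to it.
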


\begin{proof}
    It simply follows by induction, from De Morgan's laws that for any two bits $b_1,b_2\in\{0,1\}$ we have
    $\neg b_1\wedge\neg b_2=\neg(b_1\vee b_2)$
    and
    $\neg b_1\vee\neg b_2=\neg(b_1\wedge b_2)$:
    at every gate the Boolean value computed in $N(C)$ is the negation of the Boolean value computed by the corresponding gate in $C$.
    Remark that $NOT$ gates also verify this (they are not modified in the construction of $N(C)$ presented in Definition~\ref{def:circuit_negated}).
    %\qed
\end{proof}

\begin{definition}\label{def:circuit_to_monotone}
    Let $C: \{0,1\}^n \to \{0,1\}^n$ be a Boolean circuit.
    We define the transformation $M$ which constructs from $C$ the monotone (without $NOT$ gate) Boolean circuit $M(C) : \{0,1\}^{2n} \to \{0,1\}^{2n}$ as follows.
    First, we delete any two consecutive $NOT$ gates in $C$.
    Second, we append a copy of $C$ and a copy of $N(C)$, therefore with $2n$ inputs and $2n$ outputs.
    Third, we remove all $NOT$ gates, and rewire them as follows:
    in $C$ (resp.~$N(C)$), if some input of gate $g$ was the output of a $NOT$ gate whose input was another gate $g'$, then we replace it with the gate corresponding to $g'$ in $N(C)$ (resp.~$C$).
    
    An example of the construction so far is given on Figure~\ref{fig:monotonize}.
    Note that this is a classical method to monotonize a Boolean circuit,
    using a double wire logic~\cite{bg92,gm96}.
    
    Furthermore, denoting $x^\ell\in\{0,1\}^n$ and $x^r\in\{0,1\}^n$ its two blocks of input bits, we set the following additional rules for the output of $M(C)$:
    \begin{enumerate}[nosep]
        \item\label{cons:test1} if $x^\ell$ is $1^n$ then the output is $1^{2n}$,
        \item\label{cons:test2} else, if $x^\ell_i=x^r_i=1$ for some $0\leq i<n$ then the output is $1^{2n}$,
        \item\label{cons:test3} else, if $x^\ell_i=x^r_i=0$ for some $0\leq i<n$ then the output is $0^{2n}$.
    \end{enumerate}
    An additional technical requirement for the final MBAN is that part $C$ (and $N(C)$) is layered,
    and that the part $C$ and the parts implementing Rules~\ref{cons:test1}--\ref{cons:test3}
    have coprime depths (the longer being $C$, so that the depth of $M(C)$ is the depth of $C$).
    A procedure to construct $M(C)$ is given in %Appendix~\ref{a:proofs}
    Lemma~\ref{lemma:monotone_construc_poly}.
\end{definition}

\begin{figure}[t]
    \centering
    \resizebox{.9\textwidth}{!}{\begin{tikzpicture}
    \tikzstyle{node} = [circle, draw, inner sep=1pt, minimum size=16pt]
    \tikzstyle{arc} = [-{>[length=1mm]}]
    % circuit
    \begin{scope}
      % input nodes
      \foreach \x in {0,1,2}
        \node[node] (i\x) at (\x,0) {$x_{\x}$};
      % not gates
      \node[node] (n0) at (0,-1) {$\neg$};
      \node[node] (n1) at (2,-1) {$\neg$};
      % output gates
      \node[node] (o0) at (0,-2) {$\wedge$};
      \node[node] (o1) at (1,-2) {$\vee$};
      \node[node] (o2) at (2,-2) {$\vee$};
      % arcs
      \draw[arc] (i0) to (n0);
      \draw[arc] (i0) to (o1);
      \draw[arc] (i1) to (o0);
      \draw[arc] (i1) to (o2);
      \draw[arc] (i2) to (o1);
      \draw[arc] (i2) to (n1);
      \draw[arc] (n0) to (o0);
      \draw[arc] (n1) to (o2);
      % labels
      \node[right=0pt of i2] {inputs};
      \node[right=0pt of o2] {outputs};
    \end{scope}
    % monotone circuit
    \begin{scope}[shift={(5,0)}]
      % input nodes
      \foreach \x in {0,1,2}
        \node[node] (il\x) at (\x,0) {$x^\ell_{\x}$};
      \foreach \x in {0,1,2}
        \node[node] (ir\x) at (3.5+\x,0) {$x^r_{\x}$};
      % output gates
      \node[node] (ol0) at (0,-2) {$\wedge$};
      \node[node] (ol1) at (1,-2) {$\vee$};
      \node[node] (ol2) at (2,-2) {$\vee$};
      \node[node] (or0) at (3.5+0,-2) {$\vee$};
      \node[node] (or1) at (3.5+1,-2) {$\wedge$};
      \node[node] (or2) at (3.5+2,-2) {$\wedge$};
      % arcs
      \draw[arc] (il0) to (ol1);
      \draw[arc] (il1) to (ol0);
      \draw[arc] (il1) to (ol2);
      \draw[arc] (il2) to (ol1);
      \draw[arc] (ir0) to (or1);
      \draw[arc] (ir1) to (or0);
      \draw[arc] (ir1) to (or2);
      \draw[arc] (ir2) to (or1);
      \draw[arc,dashed] (il0) to (or0);
      \draw[arc,dashed] (ir0) to (ol0);
      \draw[arc,dashed] (il2) to (or2);
      \draw[arc,dashed] (ir2) to (ol2);
      % labels
      \node[right=0pt of ir2] {inputs};
      \node[right=0pt of or2] {outputs};
      % brace C N(C)
      \draw[decorate,decoration={brace,amplitude=5pt}] (-.3,.5) -- node[above,yshift=5pt]{$C$} (2.3,.5);
      \draw[decorate,decoration={brace,amplitude=5pt}] (3.2,.5) -- node[above,yshift=5pt]{$N(C)$} (5.8,.5);
    \end{scope}
\end{tikzpicture}}
    \caption{
        First step of the construction of $M(C)$ in Definition~\ref{def:circuit_to_monotone}.
        From $C$ of size $n=3$ (left), one constructs a monotone circuit of size $2n=6$ simulating it (right). The final $M(C)$ has additional rules, but the following holds both for $M(C)$ and the pictured circuit: we have $C(010)=101$ and $M(C)(010\overline{010})=M(C)(010101)=101010=101\overline{101}$ (this illustrates Lemma~\ref{lemma:behavior_of_monotone}).
      %\vspace*{-1em} % HACK
    }
    \label{fig:monotonize}
\end{figure}
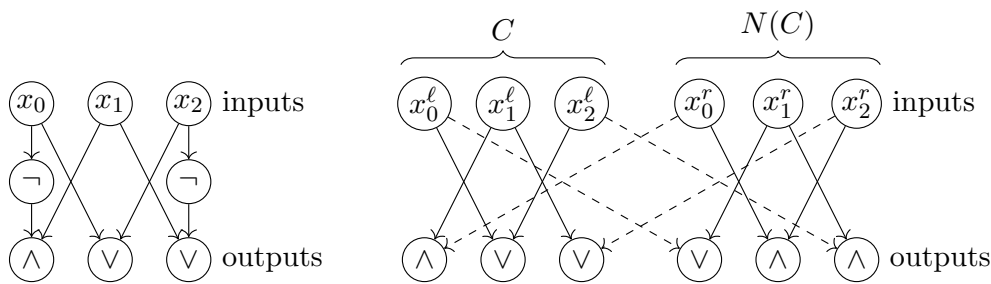

\begin{lemma} \label{lemma:monotone_construc_poly}
    Let $C: \{0,1\}^n \to \{0,1\}^n$ be a Boolean circuit. 
    The monotone Boolean circuit $M(C)$ can be constructed in polynomial time.
\end{lemma}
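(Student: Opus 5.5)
We construct $M(C)$ explicitly, following the order of Definition~\ref{def:circuit_to_monotone}, and check at each stage that the size grows only polynomially in $|C|$ and $n$.

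\emph{Normalization and double-rail core.}
First we remove pairs of consecutive $NOT$ gates in one linear pass. Then we make $C$ layered by inserting copy gates (an $OR$ gate with both inputs equal) along every wire that skips layers. This multiplies the size by at most the depth $d\le |C|$, so the layered circuit has $O(|C|^2)$ gates.

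Next we build $N(C)$ by swapping $AND$ and $OR$ gates, which takes linear time. We take the disjoint union of $C$ and $N(C)$ and rewire each $NOT$ gate. Suppose a gate $g$ reads a $NOT$ whose input is $g'$. We redirect $g$ to read the copy of $g'$ in the other rail, and then delete the $NOT$ gate.

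Deleting $NOT$ gates could break layering. We avoid this by treating each $NOT$ as a copy gate occupying one layer before rewiring, so the copy of $g'$ in the other rail sits exactly one layer below $g$, as required. The size of this core is at most $2|C|$ plus the copy gates.

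\emph{Rules~\ref{cons:test1}--\ref{cons:test3}.}
These tests are monotone and can be built from fan-in-two trees.
\begin{itemize}[nosep]
\item For $T_1 = [x^\ell=1^n]$, we take an $AND$ tree over the $x^\ell$ inputs.
\item For $T_2 = [\exists i,\ x^\ell_i=x^r_i=1]$, we take the $OR$ of $n$ $AND$ gates.
\item Rule~\ref{cons:test3} is $\exists i,\ x^\ell_i=x^r_i=0$. Monotonically, its negation is $P=\bigwedge_i (x^\ell_i\vee x^r_i)$, which is $1$ exactly when no such $i$ exists.
\end{itemize}

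Each output bit $o$ of the core is then replaced by $(o\wedge P)\vee T_1\vee T_2$. Under this formula, $T_1$ or $T_2$ forces the output to $1$. Otherwise, if Rule~\ref{cons:test3} applies then $P=0$ and the output is $0$. In the remaining case the output is $o$. This matches the priority order of the rules, and every gate used is monotone. Each tree has depth $O(\log n)$ and size $O(n)$, and the combination adds $O(n)$ gates.

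\emph{Depth synchronization.}
The remaining requirement is that the test parts have a depth coprime to, and smaller than, the depth $D$ of the core. We first pad the core by a few copy layers so that $D$ exceeds the test depth $d_T=O(\log n)$. We then pad the test branches with copy gates up to a depth $d'$ with $d_T\le d'<D$ and $\gcd(d',D)=1$. Taking $d'=D-1$ always works, since consecutive integers are coprime.

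The combining gates must receive inputs from a single layer. So we also delay either the test values or the core outputs with copy chains until they arrive at the combining layer simultaneously. All of this padding costs $O(D\cdot n)$ extra gates.

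\emph{Main obstacle.}
The delicate step is this bookkeeping of layering and coprimality, because the definition leaves it loosely specified. We resolve it by the explicit choice $d'=D-1$ with copy-gate padding, which keeps everything polynomial.

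The total size is $O(|C|^2 + |C|\cdot n)$, and each step is a straightforward local rewriting. Hence $M(C)$ is constructible in polynomial time.
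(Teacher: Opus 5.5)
Your construction is the paper's own. It uses the double-rail core $C\sqcup N(C)$ with each $NOT$ rewired to the other rail. It builds an $AND$ tree for Rule~\ref{cons:test1}, $AND$ gates plus an $OR$ tree for Rule~\ref{cons:test2}, and $OR$ gates plus an $AND$ tree for Rule~\ref{cons:test3} (your $P$ is the paper's $r_3$). Outputs become $((o\wedge P)\vee T_2)\vee T_1$, with copy-gate padding throughout. The polynomial bound is right. Turning each $NOT$ into a one-layer copy gate and fixing $d'=D-1$ usefully sharpen what the paper leaves vague.

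However, one step undoes the property you are padding for: ``the combining gates must receive inputs from a single layer, so we delay \dots{} until they arrive at the combining layer simultaneously.'' Definition~\ref{def:circuit_to_monotone} only requires the part $C$ (and $N(C)$) to be layered; the test parts are deliberately out of sync with it. The reason is that in $B(M(C))$ each input is identified with the corresponding output, so input-to-output paths of $M(C)$ become cycles of the MBAN. The coprimality is there so that cycles through the test roots have lengths coprime to those through the core. If you synchronize the test values with the core outputs, all of $M(C)$ becomes layered and every input-to-output path has the same length $L$. Every cycle among the gate vertices of $B(M(C))$ then has length a multiple of $L$. In the cases of Proposition~\ref{prop:CDOLC_reduce_SSmC} where all gates are simulated as $AND$ (or all as $OR$), a single layer of $0$s among $1$s (or conversely) rotates forever. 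This gives a limit cycle of length $L\geq 2$ whatever the original instance, which is exactly the kind of artificial cycle the paper says it must avoid. So after this step, ``$d'=D-1$ is coprime to $D$'' holds only nominally: the path lengths that actually matter are all equal.

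The fix is to drop the synchronization and let the roots, padded to depth $d'=D-1$, feed the combining gates directly, as the paper does. Counting the feedback arc, the cycle from output $i$ through $P$ back to output $i$ then has length $d'+3=D+2$. The cycle through $T_2$ has length $d'+2=D+1$. These are consecutive integers, so the gate graph becomes aperiodic and the rotating waves disappear. With this change your argument proves the lemma, and the size bound is unaffected.
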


\begin{proof}
    First, a simple traversal allows to delete any two consecutive $NOT$ gates, construct $N(C)$, and rewire the removed $NOT$ gates.
    For Rules~\ref{cons:test1}--\ref{cons:test3}, we use additional gates organized as follows.

    To implement Rule~\ref{cons:test1} and verify whether $x^\ell$ is $1^n$, we use a binary tree of $AND$ gates of depth $\lceil \log_2 n \rceil$, and denote $r_1$ its root which has value $1$ if and only if $x^\ell=1^n$.

    We implement Rule~\ref{cons:test2} using $n$ $AND$ gates for the tests for each $0\leq i<n$, then a binary tree of $OR$ gates of depth $\lceil \log_2 n \rceil$ to propagate the finding of some $i$ such that $x^\ell_i=x^r_i=1$ with a value $1$ at the root gate, which we denote $r_2$.
    Rule~\ref{cons:test3} is implemented similarly, with $n$ $OR$ gates for the tests and a tree of $AND$ gates with a value $0$ at the root gate (in case $x^\ell_i=x^r_i=0$ for some $i$), which we denote $r_3$.
     
    The gates $r_1$, $r_2$ and $r_3$ are wired so the that the priority among Rules~\ref{cons:test1}--\ref{cons:test3} is respected: the $i$-th output gate $g$ of the circuit is replaced by the gate $g'$ which is computed as $g'=((g\wedge r_3)\vee r_2)\vee r_1$.
    Indeed, for all of these new output gates $g'$, if $r_1=1$ then $g'=1$, else if $r_2=1$ then $g'=1$, else if $r_3=1$ then $g'=0$. And if $r_1=r_2=0$ and $r_3$ (no rule applies) we have $g'=g$. An illustration is given on Figure~\ref{fig:monotone_construc}.
    
    \begin{figure}[t]
        \centering
        \resizebox{\textwidth}{!}{\begin{tikzpicture}
    \tikzstyle{node} = [circle, draw, inner sep=1pt, minimum size=14pt]
    \tikzstyle{arc} = [-{>[length=1mm]}]
    % monotone circuit C
    \begin{scope}
      % input nodes
      \foreach \x in {0,...,3}
        \node[node] (i\x) at (\x,0) {};
      % output nodes
      \foreach \x in {0,...,3}
        \node[node] (o\x) at (\x,-5) {};
      % boxes
      \draw (-.3,.3) rectangle node{$C$} (3.3,-5.3);
      % labels
      \node[left=6pt of i0] {inputs};
      \node[left=6pt of o0] {outputs};
      % depth
      \draw[stealth-stealth] (-.45,0) -- node[left]{$d$} (-.45,-5);
    \end{scope}
    % monotone circuit N(C)
    \begin{scope}[shift={(9,0)}]
      % input nodes
      \foreach \x in {4,...,7}
        \node[node] (i\x) at (\x,0) {};
      % output nodes
      \foreach \x in {4,...,7}
        \node[node] (o\x) at (\x,-5) {};
      % boxes
      \draw (4-.3,.3) rectangle node{$N(C)$} (7.3,-5.3);
    \end{scope}
    % ̈́Rule 1
    \begin{scope}[shift={(4,-3.5)},red]
      % tree of AND with root r1
      \foreach \x in {0,1,2,3}
        \node[node] (a\x) at (.75*\x,0) {$\wedge$};
      \foreach \x in {0,1}
        \node[node] (aa\x) at (1.5*\x+.75*.5,-.75) {$\wedge$};
      \node[node] (r1) at (.75*1.5,-1.5) {$\wedge$};
      \node[right=0pt of r1,black] {$r_1$};
      % arcs
      \foreach \g/\gg/\ggg in {%
        i0/i1/a0,i2/i3/a1,i4/i5/a2,i6/i7/a3,%
        a0/a1/aa0,a2/a3/aa1,aa0/aa1/r1%
      }{
        \draw[arc] (\g) -- (\ggg);
        \draw[arc] (\gg) -- (\ggg);
      }
    \end{scope}
    % ̈́Rule 2
    \begin{scope}[shift={(7,-3.5)},green]
      % n AND plus tree of OR with root r2
      \foreach \x in {0,1,2,3}
        \node[node] (e1\x) at (.75*\x,0) {$\wedge$};
      \foreach \x in {0,1}
        \node[node] (ee1\x) at (1.5*\x+.75*.5,-.75) {$\vee$};
      \node[node] (r2) at (.75*1.5,-1.5) {$\vee$};
      \node[right=0pt of r2,black] {$r_2$};
      % arcs
      \foreach \g/\gg/\ggg in {%
        i0/i4/e10,i1/i5/e11,i2/i6/e12,i3/i7/e13,%
        e10/e11/ee10,e12/e13/ee11,ee10/ee11/r2%
      }{
        \draw[arc] (\g) -- (\ggg);
        \draw[arc] (\gg) -- (\ggg);
      }
    \end{scope}
    % ̈́Rule 3
    \begin{scope}[shift={(10,-3.5)},blue]
      % n OR plus tree of AND with root r3
      \foreach \x in {0,1,2,3}
        \node[node] (e0\x) at (.75*\x,0) {$\vee$};
      \foreach \x in {0,1}
        \node[node] (ee0\x) at (1.5*\x+.75*.5,-.75) {$\wedge$};
      \node[node] (r3) at (.75*1.5,-1.5) {$\wedge$};
      \node[right=0pt of r3,black] {$r_3$};
      % arcs
      \foreach \g/\gg/\ggg in {%
        i0/i4/e00,i1/i5/e01,i2/i6/e02,i3/i7/e03,%
        e00/e01/ee00,e02/e03/ee01,ee00/ee01/r3%
      }{
        \draw[arc] (\g) -- (\ggg);
        \draw[arc] (\gg) -- (\ggg);
      }
    \end{scope}
    % new outputs
    \begin{scope}[shift={(0,-6)}]
      % nodes
      \foreach \y/\v in {0/\wedge,1/\vee,2/\vee}{
        \foreach \x in {0,...,3}
          \node[node] (o\x_\y) at (\x,-.75*\y) {$\v$};
        \foreach \x in {4,...,7}
          \node[node] (o\x_\y) at (9+\x,-.75*\y) {$\v$};
      }
      % arcs
      \foreach \x in {0,...,7}{
        \draw[arc] (o\x) -- (o\x_0);
        \draw[arc,blue] (r3) -- (o\x_0);
        \draw[arc] (o\x_0) -- (o\x_1);
        \draw[arc,green] (r2) -- (o\x_1);
        \draw[arc] (o\x_1) -- (o\x_2);
        \draw[arc,red] (r1) -- (o\x_2);
      }
      % label
      \node[left=6pt of o0_2] {outputs$'$};
    \end{scope}
\end{tikzpicture}}
        \caption{
            Construction of $M(C)$ described in Lemma~\ref{lemma:monotone_construc_poly}.
            The parts $C$ and $N(C)$ are split for clarity.
            The implementation of Rule~\ref{cons:test1} is in red,
            of Rule~\ref{cons:test2} in green,
            and of Rule~\ref{cons:test3} in blue.
            The depth $d$ of $C$ should be coprime with the depth of the tests until the roots $r_1$, $r_2$ and $r_3$, which equals $3$ in this picture.
            %\vspace*{-1em} % HACK
        }
        \label{fig:monotone_construc}
    \end{figure}
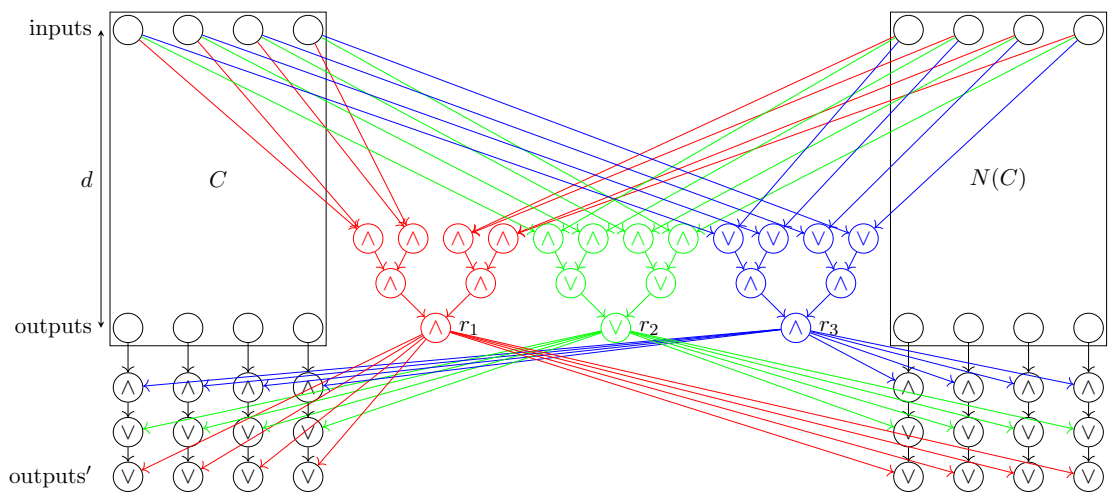

    If the initial circuit $C$ had depth $d$, it has now depth $d+3$. And the depth of the tests are $\lceil \log_2 n \rceil$ (binary tree for Rule~\ref{cons:test1}) or $\lceil \log_2 n \rceil+1$ (gates then binary tree for Rules~\ref{cons:test2} and~\ref{cons:test3}).
    Finally we add dummy $OR$ gates of arity one so that part $C$ is layered (split wires into layered bit propagations), and so that the depth of the tests and the depth of $C$ are coprime.
    %Finally, we add $p$ $OR$ gates, which simply copy the value outputted by the tests above, with $p$ chosen such that $\lceil \log_2 n \rceil + 3 + p$ is coprime with the depth of $C$.
    
    On overall, a polynomial number of gates are added, without any difficulty in the polynomial time computation of their placement.    
    %\qed
\end{proof}

To prove the second step of our construction (Proposition~\ref{prop:EXPCDO_reduce_MEXPC}), we need the following Corollary of Lemma~\ref{lemma:behavior_of_negated} and Definition~\ref{def:circuit_to_monotone}.

\begin{lemma}\label{lemma:behavior_of_monotone}
    Let $C: \{0,1\}^n \to \{0,1\}^n$ be a Boolean circuit.
    For any $x^\ell,x^r\in\{0,1\}^n$,
    if $x^r = \overline{x^\ell}$ then $M(C)(x^\ell x^r) = C(x^\ell)N(C)(\overline{x^\ell})= C(x^\ell)\overline{C(x^\ell)}$, 
    otherwise $M(C)(x^\ell x^r)$ is $0^{2n}$ or $1^{2n}$.
\end{lemma}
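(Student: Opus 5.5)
We split on whether $x^r=\overline{x^\ell}$, since the three extra rules of Definition~\ref{def:circuit_to_monotone} only look at the inputs. The proof tracks how the double-wire core of $M(C)$, built from $C$ and $N(C)$ after removing the $NOT$ gates, interacts with the root gates $r_1,r_2,r_3$ introduced in Lemma~\ref{lemma:monotone_construc_poly}.

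\emph{Case $x^r\neq\overline{x^\ell}$.} Some position $i$ has $x^\ell_i=x^r_i$. If $x^\ell=1^n$, Rule~\ref{cons:test1} fires and the output is $1^{2n}$. Otherwise, if $x^\ell_i=x^r_i=1$ for some $i$, Rule~\ref{cons:test2} gives $1^{2n}$. If not, the equal pair must satisfy $x^\ell_i=x^r_i=0$, and Rule~\ref{cons:test3} gives $0^{2n}$. To make this rigorous we use the output formula $g'=((g\wedge r_3)\vee r_2)\vee r_1$ and read off the values of the roots: $r_1=1$ iff $x^\ell=1^n$; $r_2=1$ iff some $i$ has both bits equal to $1$; $r_3=0$ iff some $i$ has both bits equal to $0$. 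In each subcase every output $g'$ is forced to the same constant, independently of the core value $g$.

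\emph{Case $x^r=\overline{x^\ell}$.} No position has equal bits, so $r_2=0$ and $r_3=1$. If moreover $x^\ell\neq1^n$, then $r_1=0$, every $g'$ equals $g$, and it remains to show the core computes $C(x^\ell)$ on the left block and $N(C)(\overline{x^\ell})$ on the right block. We prove, by induction along a topological order of the gates of $C$, the invariant: the copy of a gate of $C$ in the left block carries the value $v$ that gate takes in $C$ on input $x^\ell$, and its copy in the right block carries $\overline{v}$. At the inputs this holds by the hypothesis $x^r=\overline{x^\ell}$. For an $AND$/$OR$ gate fed directly by gates, the invariant propagates by Lemma~\ref{lemma:behavior_of_negated} (De Morgan). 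For a wire that went through a $NOT$ gate from $g'$, the rewiring feeds the left copy with the right copy of $g'$, which carries $\neg v(g')$, exactly what the $NOT$ gate produced; symmetrically on the right. Deleting double $NOT$ gates beforehand ensures every $NOT$ sits between two real gates, so this rewiring is well defined. At the outputs this gives $C(x^\ell)\,\overline{C(x^\ell)}=C(x^\ell)\,N(C)(\overline{x^\ell})$, the last equality again by Lemma~\ref{lemma:behavior_of_negated}.

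The main obstacle is the subcase $x^\ell=1^n$, $x^r=0^n$. Here Rule~\ref{cons:test1} forces the output $1^{2n}$, whereas $C(1^n)\overline{C(1^n)}$ is generally not $1^{2n}$. So the statement as written needs either the convention that this input falls under the ``otherwise'' clause, or the observation that in the intended application $C=L(C,x,i)$ has fixed point $1^{2n}$ and the relevant complementary pairs avoid $x^\ell=1^n$. I would make this exception explicit: first state the formula for $x^r=\overline{x^\ell}$ with $x^\ell\neq1^n$, then note that the remaining complementary input yields the constant $1^{2n}$, which is harmless for Proposition~\ref{prop:EXPCDO_reduce_MEXPC}.
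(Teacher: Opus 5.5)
Your proof is correct and follows the same route as the paper, in more detail. The paper's whole argument is that Rule~\ref{cons:test1} never fires on a complementary input, after which the output is the double-wire core evaluated gate by gate via Lemma~\ref{lemma:behavior_of_negated}. That is exactly your root analysis ($r_2=0$, $r_3=1$) followed by your topological induction through the rewired $NOT$ gates.

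The exception you flag is genuine, and it sits exactly where the paper's one-line proof is loose. The paper justifies the first clause by ``we never encounter the configuration $1^{2n}$''. That fits the implementation drawn in Figure~\ref{fig:monotone_construc}, where the red $AND$ tree reads all $2n$ inputs, so $r_1=1$ iff the input is $1^{2n}$. It does not fit Rule~\ref{cons:test1} as written ($x^\ell=1^n$), which is the version the proof of Proposition~\ref{prop:EXPCDO_reduce_MEXPC} relies on. With that version $M(C)(1^n0^n)=1^{2n}$, while $C(1^n)\overline{C(1^n)}$ has exactly $n$ ones. So the first clause fails at this input for \emph{every} $C$, not merely ``generally'', and your restricted formulation is the right one.

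One caveat on your closing remark. The exception is harmless for the first point of Proposition~\ref{prop:EXPCDO_reduce_MEXPC} only if the all-ones configuration does not lie on a cycle of length $>1$ of $C$. This holds for $L(C,x,i)$, where it is a fixed point (item~1 of Definition~\ref{def:global_atractor}), but not for an arbitrary circuit.
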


\begin{proof}
    We never encounter the configuration $1^{2n}$,
    therefore the result follows by induction over Lemma~\ref{lemma:behavior_of_negated}.
    %\qed
\end{proof}

\begin{proposition}\label{prop:EXPCDO_reduce_MEXPC}
    Let $C : \{0,1\}^n \to \{0,1\}^n$ be a Boolean circuit.
    For any $\ell>1$, there exists a limit cycle of length $\ell$ in the iterations of $C$
    if and only if there exists a limit cycle of length $\ell$ in the iterations of $M(C)$.  
    Moreover, the iterations of $C$ admits $1^n$ as a global attractor if and only if
    the only limit cycles in the iterations of $M(C)$ are the two fixed points $1^{2n}$ and $0^{2n}$.
\end{proposition}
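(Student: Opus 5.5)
The plan is to study the dynamics of $M(C)$ on $\{0,1\}^{2n}$ through the ``diagonal'' map $\phi:x\mapsto x\overline{x}$. By Lemma~\ref{lemma:behavior_of_monotone}, if $x^r=\overline{x^\ell}$ then $M(C)(x\overline{x})=C(x)\overline{C(x)}$. So $\phi$ conjugates $C$ to the restriction of $M(C)$ to the diagonal $D=\{x\overline{x}\}$, with one caveat: Rule~\ref{cons:test1} fires when $x^\ell=1^n$. We handle it by noting that if $x^\ell=1^n$ and $x^r=0^n$, the output is $1^{2n}$ rather than $1^n0^n$. This discrepancy only concerns the configuration $1^n$ of $C$, and it affects only whether $1^n$ is a fixed point. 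In the intended application $1^n$ is a fixed point of $C$, and $1^n$ lies on no cycle of length $>1$ in the case of interest. More precisely, a cycle of $C$ of length $\ell>1$ through $1^n$ would need $C(1^n)\neq 1^n$; I would either assume $C(1^n)=1^n$, as holds for $C_1=L(C,x,i)$ by item~1 of Definition~\ref{def:global_atractor}, or check that this is harmless.

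Next I would classify the orbits of $M(C)$. Any configuration off $D$ has $x^\ell_i=x^r_i$ for some $i$, so Rules~\ref{cons:test2}--\ref{cons:test3} send it to $0^{2n}$ or $1^{2n}$ (the off-diagonal part of Lemma~\ref{lemma:behavior_of_monotone}). Both of these are fixed points: $1^{2n}$ by Rule~\ref{cons:test1}, and $0^{2n}$ by Rule~\ref{cons:test3}, since $x^\ell=0^n\neq1^n$ and no coordinate has both bits equal to $1$. Hence every limit cycle of $M(C)$ other than these two fixed points lies entirely in $D$, because any point leaving $D$ is absorbed in one step. On $D$, away from $1^n\overline{1^n}$, $M(C)$ is conjugate to $C$ via $\phi$. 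Limit cycles of length $\ell>1$ therefore correspond bijectively: a cycle $(u_1,\dots,u_\ell)$ of $C$ avoiding $1^n$ maps to $(\phi(u_1),\dots,\phi(u_\ell))$, and conversely the $\ell^{\text{-}}$projection onto the left block recovers a cycle of $C$.

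For the second statement, I would first assume $1^n$ is a global attractor of $C$. Then every orbit in $D$ reaches $1^n\overline{1^n}=1^n0^n$, which Rule~\ref{cons:test1} sends to $1^{2n}$. Every orbit off $D$ goes to $0^{2n}$ or $1^{2n}$ in one step. So the only limit cycles of $M(C)$ are the two fixed points. Conversely, suppose the only limit cycles of $M(C)$ are $0^{2n}$ and $1^{2n}$. Then $C$ has no cycle of length $>1$ by the first part. $C$ also has no fixed point $u\neq1^n$: otherwise $\phi(u)$ would be a fixed point of $M(C)$ distinct from both $0^{2n}$ and $1^{2n}$, since it has exactly $n$ ones. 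Thus every orbit of $C$ ends in $1^n$, provided $1^n$ is a fixed point of $C$. If $C(1^n)\ne1^n$, then $C$ has some other limit cycle, and that cycle yields a nontrivial cycle of $M(C)$, a contradiction. The direct direction needs a slight adjustment in the same spirit.

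The main obstacle is the boundary configuration $1^n$, where Rule~\ref{cons:test1} breaks the exact conjugacy. I would handle it by the case distinction above, using that the states reached from $1^n$ in $M(C)$ are absorbed into $1^{2n}$. All other steps are direct consequences of Lemma~\ref{lemma:behavior_of_monotone}.
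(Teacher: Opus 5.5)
Your route is the paper's: embed $C$ on the diagonal via $x\mapsto x\overline{x}$ using Lemma~\ref{lemma:behavior_of_monotone}, and note that every off-diagonal configuration falls in one step onto the fixed points $0^{2n}$ or $1^{2n}$. You are right that Rule~\ref{cons:test1} breaks the conjugacy at $x^\ell=1^n$, and here you are more careful than the paper. We have $M(C)(1^n0^n)=1^{2n}\neq C(1^n)\overline{C(1^n)}$, so Lemma~\ref{lemma:behavior_of_monotone} as stated is false at that point, and the paper's proof uses it there silently. Under the hypothesis $C(1^n)=1^n$ your argument is complete. No cycle of length $>1$ meets $1^n$, so nontrivial cycles correspond bijectively through $x\mapsto x\overline{x}$. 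In the converse of the second part, $1^n$ is then the only limit cycle of $C$.

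The gap is in your fallbacks for the case $C(1^n)\neq 1^n$. This case is not ``harmless'', and no adjustment can fix it, because the proposition is false there. Take $n=1$ and $C(x)=\neg x$. Then $M(C)$ acts as follows: $00\mapsto 00$ by Rule~\ref{cons:test3}, $11\mapsto 11$ and $10\mapsto 11$ by Rule~\ref{cons:test1}, and $01\mapsto 10$. So $M(C)$ has only the two fixed points, while $C$ has a $2$-cycle and $1^n$ is not a global attractor. Both equivalences fail.

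In particular, your sentence ``if $C(1^n)\neq 1^n$, then $C$ has some other limit cycle, and that cycle yields a nontrivial cycle of $M(C)$'' is wrong. Once you have excluded fixed points other than $1^n$ and cycles avoiding $1^n$, every remaining limit cycle of $C$ passes through $1^n$. Those are exactly the cycles that Rule~\ref{cons:test1} destroys. Likewise, the discrepancy does not ``only affect whether $1^n$ is a fixed point'': it kills every cycle through $1^n$.

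So drop the hedges and state $C(1^n)=1^n$ as an explicit hypothesis. It holds for $C_1=L(C,x,i)$ by item~1 of Definition~\ref{def:global_atractor}, so the downstream use in Theorem~\ref{th:SSmC_PSC} is unaffected.
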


\begin{proof}
    For convenience we denote $C_2 := M(C)$.
    We first prove one direction of the two equivalences, then the other directions.

    \medskip
    $(\Rightarrow)$
    For the first point, assume that there is a limit cycle of length $\ell>1$ in the dynamics induced by $C$.
    Let $x\in\{0,1\}^n$ be a configuration of this cycle.
    From Lemma~\ref{lemma:behavior_of_monotone}, we have $C_2^{t}(x\overline{x}) = C^{t}(x) \overline{C^{t}(x)}$ for all integer $t \ge 0$. 
    Consequently, the dynamics induced by $C_2$ admits a limit cycle of length $\ell$.  

    For the second point, assume that the dynamics induced by $C$ admits $1^n$ as a global attractor, and let us consider three cases.
    \begin{itemize}[nosep]
        \item If $x^\ell$ is $1^n$, then by construction the output of $C_2$ is $1^{2n}$.
        \item If $x^r = \overline{x^\ell}$, then we use the hypothesis that there exists $t$ such that $C^t(x^\ell)=1^n$,
        and an induction on Lemma~\ref{lemma:behavior_of_monotone} to deduce that $C_2^{t}(x\overline{x}) = C^{t}(x)\overline{C^{t}(x)} = 1^n 0^n$.
        Then the previous case applies.
        \item If $x^r \neq \overline{x^\ell}$, then by construction the output of $C_2(x^\ell x^r)$ is $0^{2n}$ or $1^{2n}$.
    \end{itemize}

    %\medskip
    $(\Leftarrow)$
    For the first point, assume that there is a limit cycle of $\ell>1$ in the dynamics induced by $C_2$.
    Let $y^\ell y^r$ be a configuration of this cycle.
    From Lemma~\ref{lemma:behavior_of_monotone} we deduce that $y^r = \overline{y^\ell}$, since otherwise the image of this configuration is a fixed point $1^{2n}$ or $0^{2n}$ (hence it is not in a cycle). 
    From Lemma~\ref{lemma:behavior_of_monotone} again, we have $C_2^{t}(y^\ell y^r) = C_2^{t}(y^\ell \overline{y^\ell}) = C^t(y) \overline{C^{t}(y)}$ for all integer $t\geq 0$.
    This implies that the first block of $n$ bits is always distinct throughout the iterations of $C_2$ for $\ell$ steps, hence the dynamics of $C$ admits a limit cycle of length $\ell$ as well.

    For the second point, assume that the iterations of $C_2$ admits as limit cycles only the two fixed points $1^{2n}$s and $0^{2n}$. 
    We can directly remark that all the configurations $x^\ell x^r$ that converge to $0^{2n}$ are such that $x^r \neq \overline{x^\ell}$.
    Then, for the configurations $x^\ell x^r$ such that $x^r = \overline{x^\ell}$, by hypothesis there exists $t \ge 0$ such that $C_2^t(x^\ell x^r)$ is $1^{2n}$.
    By a similar reasoning as the previous point, this implies that $C^t(x)$ is $1^n$, 
    and consequently $1^n$ is a global attractor of the iterations of $C$.
    %\qed
\end{proof}

We now construct the graph of an MBAN from a monotone circuit having only $OR$ and $AND$ gates,
using the simulation presented on Figure~\ref{fig:gates}.
Additionnal vertices $T$ and $F$ are meant to play the role of constants $1$ and $0$ respectively,
and we will start with the study of their behavior.

\begin{definition} \label{def:depth_one_to_MBAN}
    We define $B$ the transformation of a monotone Boolean circuit $C:\{0,1\}^n\to\{0,1\}^n$ into
    a graph $G =(V,E)$ with $V = C_0 \sqcup T \sqcup F$, where:
    \begin{itemize}[nosep]
        \item $C_0 =\{0, \ldots, s-1\}$ where $s$ is the number of gates in $C$ (excluding inputs),
        \item $T = \{s,s+1,s+2,s+3,s+4\}$ and $F = \{s+5,s+6,s+7,s+8,s+9\}$.
    \end{itemize}
    $E$ reproduces the adjacency among the gates of $C$ (without multiplicities), plus:
    \begin{itemize}[nosep]
        \item in-neighbor $s$ for any $OR$ gate with $2$ distinct input gates,
        \item in-neighbor $s+5$ for any $AND$ gate with $2$ distinct input gates,
        \item in-neighbor the $i$-th output gate of $C$, for any gate having as input the $i$-th input gate of $C$, with $0\leq i<n$.
    \end{itemize}
    Furthermore, the nodes $T$ and $F$ are two cliques (with loops). 
    Finally, $s+4$ and $s+9$ also have as in-neighbors the nodes $0$ and $1$. 
    Remark that $G$ is strongly connected, and all its nodes have in-degree $1$, $3$, $5$ or $7$. 
\end{definition}

Remark that $T$ and $F$ are self-sufficient subsets,
and we now prove that their states are fixed and uniform after two iterations.
This is important because the value of $s$ (from $T$) and $s+5$ (from $S$)
determine the simulation of Boolean gates by majority local functions.

\begin{lemma}\label{lemma:MBAN_TF}
    For any monotone circuit $C$, the states of $T$ and $F$ are uniform and fixed after two iterations of the MBAN $A_G$ on the digraph $G=B(C)$.
    More precisely, $\val{A_G^t(x)}{s}=\val{A_G^t(x)}{s+1}=\val{A_G^t(x)}{s+2}=\val{A_G^t(x)}{s+3}=\val{A_G^t(x)}{s+4}=\majority{x_T}$ and $\val{A_G^t(x)}{s+5}=\val{A_G^t(x)}{s+6}=\val{A_G^t(x)}{s+7}=\val{A_G^t(x)}{s+8}=\val{A_G^t(x)}{s+9}=\majority{x_F}$ for any $x\in\{0,1\}^n$ and any $t\geq 2$.
\end{lemma}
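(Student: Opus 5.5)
The plan is to read off the in-neighborhoods of $T$ and $F$ from Definition~\ref{def:depth_one_to_MBAN} and show that these ten nodes form an almost closed subsystem. Each node $s,\dots,s+3$ of $T$ has in-neighborhood exactly $T$ (clique with loops, size $5$). Node $s+4$ has in-neighborhood $T\cup\{0,1\}$ (size $7$). The same holds for $F$ with $s+5,\dots,s+9$. No other arcs enter $T$ or $F$.

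The first step handles the four ``pure'' nodes of $T$. For every configuration $x$ and each $u\in\{s,\dots,s+3\}$, we have $\val{A_G(x)}{u}=\majority{x_T}$, because the in-neighborhood of $u$ is exactly $T$. The node $s+4$ updates to $\majloc(x_{T\cup\{0,1\}})$, which may differ from $\majority{x_T}$.

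The second step shows that the value $\majority{x_T}$ is preserved and spreads. Write $b=\majority{x_T}$. At time $1$, the four nodes $s,\dots,s+3$ all hold $b$, so $A_G(x)_T$ contains at least $4$ copies of $b$ out of $5$. Hence $\majority{A_G(x)_T}=b$, and again the four pure nodes hold $b$ at time $2$. For $s+4$ at time $2$, its $7$ in-neighbors at time $1$ include at least $4$ nodes in state $b$ (namely $s,\dots,s+3$), a strict majority of $7$. So $\val{A_G^2(x)}{s+4}=b$, and $T$ is uniform in state $b$ at time $2$.

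The third step shows the uniform state is then fixed. Once $T$ is uniform $b$, the pure nodes keep $b$. Node $s+4$ sees at least $5$ of $7$ in state $b$ whatever nodes $0$ and $1$ hold. Induction on $t\ge 2$ then gives $\val{A_G^t(x)}{u}=b$ for all $u\in T$. The identical argument with $F$ and $s+5,\dots,s+9$ gives $\majority{x_F}$.

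There is no real obstacle here. The only point needing care is that $s+4$ (resp.\ $s+9$) can disagree at time $1$, and the argument above shows this causes no harm. One should note that the statement's value must be $\majority{x_T}$ computed at time $0$. This is consistent because the majority of $T$ never changes after step one.
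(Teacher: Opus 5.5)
Your proof is correct and follows the paper's route: the four nodes $s,\dots,s+3$, whose in-neighborhood is exactly $T$, take $\majority{x_T}$ after one step and keep it because they form a self-sufficient set, and $s+4$ joins them at the second step. You simply make the counting explicit (at least $4$ of $5$, then $4$ of $7$, then $5$ of $7$) and cover the case $b=0$, which the paper's appeal to self-sufficiency leaves implicit.
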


\begin{proof}
    We prove the lemma for $T$ (for $F$ it is analogous).
    The in-neighborhoods of $s,s+1,s+2,s+3$ are identical,
    and their value after one step of the MBAN is the majority state among $T$,
    \emph{i.e.}~$\majority{x_T}$. Since $\{s,s+1,s+2,s+3\}$ is self-sufficient, these values are fixed forever.
    At the second step, $s+4$ will also take this value,
    and $T$ is self-sufficient thus it keeps this uniform value forever.
    %\qed
\end{proof}

For the construction of Definition~\ref{def:depth_one_to_MBAN} to work properly, the vertices of $T$ (in particular $s$) are expected to be in state $1$, and the vertices of $F$ (in particular $s+5$) are expected to be in state $0$.
However, not all configurations of the MBAN verify this, hence we must also consider the other alternatives. %in our reasoning.
Intuitively, when the state of $s$ is $0$ then the $OR$ gates are simulated as $AND$ gates,
and when the state of $s+3$ is $1$ then the $AND$ gates are simulated as $OR$ gates.

To understand the dynamics of the MBAN on $M(C)$, it is easier when $C$ is layered.
However, in order to avoid creating artificial limit cycles,
we also need to break the fact that $C$ is layered,
which is done at the end of Definition~\ref{def:circuit_to_monotone}.
Consequently, for the next proposition we compose transformations $M$ and $B$.

\begin{proposition}\label{prop:CDOLC_reduce_SSmC}
    Let $C:\{0,1\}^n\to\{0,1\}^n$ be a Boolean circuit, $s$ be the number of gates in $M(C)$ and $d$ its depth,
    and $A_G$ be the MBAN on $G=B(M(C))$ with $s+9$ vertices.
    If there is a limit cycle of length $\ell>1$ in the dynamics induced by $C$ then there is a limit cycle of length $d\,\ell$ in the dynamics of $A_G$, and if there is a cycle of length $\ell'>1$ in the dynamics of $A_G$ then $\ell'=d\,\ell$ with $\ell>1$ and there is a cycle of length $\ell$ in the dynamics induced by $C$.
    Moreover, the dynamics induced by $C$ has the fixed point $1^n$ as a global attractor if and only if the only limits cycles in the dynamics of $A_G$ are fixed points.
\end{proposition}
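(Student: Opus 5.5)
Our plan is to reduce the analysis of $A_G$ to that of $M(C)$, and then to that of $C$ through Proposition~\ref{prop:EXPCDO_reduce_MEXPC}. By Lemma~\ref{lemma:MBAN_TF}, after two steps $T$ and $F$ are uniform and frozen, so every configuration on a limit cycle has $T$ and $F$ constant. We then distinguish the four regimes $(x_T,x_F)\in\{(1,0),(0,1),(1,1),(0,0)\}$.

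\textbf{Regime $(1,0)$.} This is the intended regime: each vertex of $C_0$ computes exactly its gate as in Figure~\ref{fig:gates}. Input gates are removed and replaced by arcs from the corresponding outputs, so the vertices of $C_0$ form a layered ring of $d$ layers; an arity-one copy gate is a single in-neighbour. Hence one step of $A_G$ advances every layer by one gate level. Since $M(C)$ is layered (Definition~\ref{def:circuit_to_monotone}), the state of $C_0$ decomposes into $d$ interleaved ``threads''. Thread $k$ observed every $d$ steps evolves as an iteration of $M(C)$ from some input $z_k\in\{0,1\}^{2n}$, reconstructed from the layer contents. If $C$ has a cycle of length $\ell>1$, take $x$ on it and load $x\overline{x}$ together with its partial evaluations into all layers consistently, i.e.\ a single thread. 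Lemma~\ref{lemma:behavior_of_monotone} then gives a cycle of length exactly $d\ell$, since the wavefront travels around the ring and returns after $d$ steps shifted by one iteration of $M(C)$.

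\textbf{Converse in regime $(1,0)$.} Let the cycle have length $\ell'>1$. Each thread's projection lies on a limit cycle of $M(C)$, so by Proposition~\ref{prop:EXPCDO_reduce_MEXPC} it is either a fixed point $0^{2n}$, $1^{2n}$ or a cycle of $C$ via $x\overline{x}$. Here the obstacle appears. Different threads are independent computations that may be mixed by the non-layered test parts (Rules~\ref{cons:test1}--\ref{cons:test3}) and the final output gates $g'=((g\wedge r_3)\vee r_2)\vee r_1$, since those tests read inputs at a different depth. This is exactly why the paper asks for coprime depths. The plan here is to argue that within $\mathrm{lcm}$ of the two depths, i.e.\ their product, the test subcircuit compares values from every thread against each other. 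Then any two threads that are not identical (in the sense $y^r=\overline{y^\ell}$ consistently across threads) trigger Rule~\ref{cons:test2} or~\ref{cons:test3} and drive everything to a uniform fixed point. Hence a nontrivial cycle must consist of a single coherent thread, which forces $\ell'=d\ell$ with $\ell$ the length of a cycle of $C$, and $\ell>1$ since $\ell'>1$ and uniform thread values are fixed. I expect this coprimality/mixing argument to be the main difficulty. I would first try tracking a configuration restricted to the input layer and proving that a mismatched pair propagates to a test within $d\cdot d'$ steps.

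\textbf{Degenerate regimes.} In $(1,1)$ and $(0,0)$ all gates become OR (respectively AND), so the network restricted to $C_0$ is a monotone system with every gate an OR (resp.\ AND) of its predecessors along a strongly connected structure. We would show it converges: the set of $1$s is nondecreasing after a transient (resp.\ nonincreasing), using the coprime cycle lengths to exclude periodic rotation, so only fixed points remain. Regime $(0,1)$ swaps AND/OR, which by Lemma~\ref{lemma:behavior_of_negated} is the complemented dynamics of the $(1,0)$ regime, i.e.\ the dynamics of $N(M(C))$ up to relabelling. Its cycles correspond to cycles of $C$ through $\overline{x}x$ in the swapped blocks, with the same lengths $d\ell$. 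Finally, the ``moreover'' part follows from Proposition~\ref{prop:EXPCDO_reduce_MEXPC}: $1^n$ is a global attractor of $C$ iff $C$ has no cycle of length $>1$, since Proposition~\ref{prop:coICVP_reduce_EXPC}'s setting ensures only cycles or that attractor occur. By the above, this holds iff $A_G$ has only fixed points.
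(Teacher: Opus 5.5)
Your split into the four regimes via Lemma~\ref{lemma:MBAN_TF}, the wavefront construction for the forward direction, and the complement symmetry for regime $(0,1)$ agree with the paper. \textbf{The gap is at the step you flag in the converse of regime $(1,0)$, and the proposed repair cannot work.} The gadgets for Rules~\ref{cons:test1}--\ref{cons:test3} read only the output gates. Both halves $x^\ell$ and $x^r$ sit in that single layer, so at each time $t$ they see one snapshot $X_t\in\{0,1\}^{2n}$, i.e.\ one thread. Rules~\ref{cons:test2} and~\ref{cons:test3} therefore compare $x^\ell_i$ with $x^r_i$ inside a thread and never compare two threads.

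The only coupling between threads is that the roots $r_1,r_2,r_3$ computed from $X_t$ act on the output gates of another thread (this is what the coprime depths arrange). This coupling is silent as long as every thread carries a legitimate word $y\overline{y}$ with $y\neq 1^n$. So distinct legitimate threads coexist forever, and ``a nontrivial cycle consists of a single coherent thread'' is false. For instance:
\begin{itemize}
\item Let $a_0\to a_1\to\cdots\to a_{\ell-1}\to a_0$ be a cycle of $C$ avoiding $1^n$ with $\gcd(d,\ell)=1$, and pick $c$ with $cd\equiv 1 \pmod{\ell}$.
\item Set $X_t=a_{ct}\overline{a_{ct}}$ (indices mod $\ell$), and give each gate of depth $j$ its value on input $X_{-j}$.
\item Since $X_{t+d}=M(C)(X_t)$ by Lemma~\ref{lemma:behavior_of_monotone}, this is consistent: the roots stay at $r_1=r_2=0$, $r_3=1$, and the outputs follow $X_t$.
\end{itemize}
This gives a limit cycle of $A_G$ of length $\ell$, which is not of the form $d\ell''$. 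Similarly, put one thread on $y\overline{y}$ and the others on $z\overline{z}$, for distinct fixed points $y,z\neq 1^n$ of $C$. This gives a cycle of length $d$ in $A_G$ although $C$ has no cycle of length $>1$. Hence no mixing argument can deliver the exact clause $\ell'=d\ell$. The paper does not establish it either: after Claim~\ref{claim:test_fail} it simply asserts that a cycle with frozen tests is a cycle of $M(C)$ slowed down by $d$, overlooking the $d$ independent threads.

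What coprimality is used for in the paper (Claim~\ref{claim:test_fail}) is the reverse flow of information. Once a test fires on some thread, the uniform output it forces is pushed into the thread shifted by the length of the test path. Iterating this shift, which is coprime to $d$, is meant to reach every thread, so $C_0$ becomes uniform and the orbit ends at a fixed point. Therefore, on a limit cycle of length $>1$ the tests are silent forever, and each thread is an independent periodic legitimate orbit $y\overline{y}$ of $C$ avoiding $1^n$. That suffices for the qualitative statement, which is all Theorem~\ref{th:SSmC_PSC} needs:
\begin{itemize}
\item if $1^n$ is a global attractor of $C$, there is no such orbit, so $A_G$ has only fixed points;
\item if $C$ has a cycle of length $\ell>1$, your wavefront gives one of length $d\ell$;
\item the dichotomy of Proposition~\ref{prop:coICVP_reduce_EXPC}, which you already invoke, closes the equivalence.
\end{itemize}

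Finally, in the degenerate regimes, ``the set of $1$s is eventually nondecreasing'' is not an argument by itself, since a single $1$ can rotate around a cycle. The clean reason is primitivity: on a strongly connected digraph whose cycle lengths have gcd $1$, an OR network sends every nonzero configuration to $1^s$ in a bounded number of steps, and dually for AND.
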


\begin{proof}
    We split the study of the dynamics in $A_G$ for any $x\in\{0,1\}^{s+9}$ into four cases, depending on the majority state in $x$ of the vertices belonging to $T$ and $F$. Indeed, by Lemma~\ref{lemma:MBAN_TF} all the vertices of $T$ and $F$ converge after two steps of $A_G$, therefore we assume that it is the case in $x$
    (formally, for any starting $x'\in\{0,1\}$ we consider $x=A_G^2(x')$ in order to study the limit dynamics of $x'$).

    \medskip
    \textbf{Case $\majority{x_T}=\majority{x_F}=0$.}\quad
    In this case all the gates of $M(C)$ are simulated as $OR$ gates in $A_G$.
    Because $G$ is strongly connected and has cycles with coprime lengths (Definition~\ref{def:circuit_to_monotone}), it has only two fixed points on vertex set $C_0$
    ($T$ and $F$ are fixed): %(the only vertices whose state may evolve after two time steps):
    configuration $1^s$ if there exists $v\in C_0$ such that $x_v=1$, and configuration $0^s$ otherwise
    (see for example~\cite{gh00,g21}).

    \medskip
    \textbf{Case $\majority{x_T}=\majority{x_F}=1$.}\quad
    This case is analogous with the simulation of $AND$ gates only, and leads to the same conclusion (two fixed points).

    \medskip
    \textbf{Case $\majority{x_T}=1$ and $\majority{x_F}=0$.}\quad
    This is the expected behavior of $T$ (all its vertices are in state $1$) and $F$ (all its vertices are in state $0$) for the correct simulation of the $OR$ and $AND$ gates of $M(C)$ by $A_G$.
    Let us denote $r_1$, $r_2$ and $r_3$ the gates of circuit $M(C)$ which compute whether
    Rules~\ref{cons:test1}--\ref{cons:test3} of Definition~\ref{def:circuit_to_monotone} hold or not (respectively),
    as on Figure~\ref{fig:monotone_construc} from the polytime construction of Lemma~\ref{lemma:monotone_construc_poly}.

    \medskip
    If $C$ has a cycle of length $\ell>1$, then so does $M(C)$ by Proposition~\ref{prop:EXPCDO_reduce_MEXPC}. Let $\tilde{x}\in\{0,1\}^{2n}$ be a configuration of $M(C)$ on this cycle. We set the state of gate vertices $C_0$ in $x$ to the Boolean value they take on input $\tilde{x}$ in the circuit $M(C)$.
    
    It follows that the dynamics of $A_G$ simulates the computation of circuit $M(C)$ without applying any of the Rules~\ref{cons:test1}--\ref{cons:test3} of Definition~\ref{def:circuit_to_monotone}, meaning that $r_1=r_2=0$ and $r_3=1$ all along the computation, and we set the state of all the gates related to the test of Rules~\ref{cons:test1} and~\ref{cons:test2} to $0$, and the state of all the gates related to the test of Rule~\ref{cons:test3} to $1$.
    These vertices related to the tests keep their value all along the evolution of $A_G$
    and have no impact on the simulation of $M(C)$ by $A_G$.
    It takes $d$ steps to simulate one step of $M(C)$,
    so every $d$ steps the Boolean states of the vertices corresponding to the outputs of $M(C)$
    are a different configuration, meaning that the limit cycle in $A_G$ has length exactly $d\,\ell$.

    Now we prove that if $A_G$ has a limit cycle of length $\ell'>1$, then $\ell'$ is a multiple of $d$ and $M(C)$ has a limit cycle of length $\frac{\ell'}{d}>1$.

    \begin{claim}\label{claim:test_fail} % llncs gives no number ot claims...
        If there exists $t$ such that, in $A_G^t(x)$, the state of $r_1$ or of $r_2$ is $1$,
        then the dynamics of $A_G$ on $C_0$ converges to the fixed point $1^s$. 
        Else if the state of $r_3$ is $0$, then the dynamics of $A_G$ on $C_0$ converges to the fixed point $0^s$.
    \end{claim}

    \begin{claimproof}
      We analyse the step by step behavior of all the gates within circuit $M(C)$,
      as presented in the construction from Lemma~\ref{lemma:monotone_construc_poly} (see Figure~\ref{fig:monotone_construc}).
      If $r_1$ or $r_2$ are is state $1$, then it will eventually lead to a uniform
      configuration $1^{2n}$ on the vertices of $C_0$ corresponding to the output gates of $C$.
      This triggers the next test of Rule~\ref{cons:test1} to succeed, hence $r_1$ in state $1$,
      leading to another uniform configuration $1^{2n}$ on tge output gates of $C$, etc.
      Moreover, any $1^{2n}$ output configuration travels back into the circuit
      (as an MBAN, an input vertex of $C$ now has the corresponding ouput vertex of $C$ as in-neighbor),
      and by monotonicity it goes through each layer as a uniform configuration of $1$.
      Since the depth of the tests and the depth of the simulation of $C$ are coprime,
      this ultimately leads to the uniform value $1$ invading the entire set of vertices $C_0$,
      hence leading to a fixed point configuration $1^s$ on $C_0$.
    
      The reasoning is analogous when $r_1=r_2=r_3=0$, with uniform value $0$
      invading the entire set of vertices $C_0$.
      %
      %Indeed, this value leads to a uniform configuration on the output gates,
      %which triggers another test of Rules~\ref{cons:test1}--\ref{cons:test3} to fail,
      %which in turn triggers a uniform configuration on the output gates, etc.
      %A uniform input configuration to the part of $M(C)$ simulating circuit $C$ leads,
      %by monotonicity, to the same uniform output configuration,
      %and the applied rule also leads to the same uniform output configuration.
      %Since the depth of the test and the depth of the simulation of $C$ are coprime,
      %this ultimately leads to the uniform value invading the entire set of vertices $C_0$,
      %hence leading to a fixed point configuration $1^s$ or $0^s$ on $C_0$.    
    \end{claimproof}

    %The Claim is proven in Appendix~\ref{a:proofs}.
    From Claim~\ref{claim:test_fail},
    we deduce that in any configuration of a limit cycle of length $\ell'>1$ in $A_G$ the state of all the test vertices are fixed to $0$ for Rules~\ref{cons:test1} and~\ref{cons:test2}, and to $1$ for Rule~\ref{cons:test3}.
    As a consequence, the limit cycle occurs on the simulation of circuit $M(C)$, which is layered by Definition~\ref{def:circuit_to_monotone}, hence it corresponds to a limit cycle in the dynamics induced by $M(C)$, dividing by $d$ (the number of steps to simulate one step) the length $\ell'$ of the limit cycle in $A_G$. The resulting length $\frac{\ell'}{d}$ cannot be $1$, because it would actually be a fixed point in $A_G$.

    \medskip
    If $C$ has only the fixed point $1^n$ as a global attractor,
    then we consider two possibilities regarding the dynamics of $x$ in $A_G$.
    If some test of Rules~\ref{cons:test1}--\ref{cons:test3} fails, then by the Claim %Claim~\ref{claim:test_fail}
    the configuration $x$ converges to a fixed point.
    Otherwise, the dynamics of $A_G$ simulates the iterations of circuit $C$, and leads to the fixed point $1^n$ which triggers Rule~\ref{cons:test1}, hence it converges to a fixed point.

    \medskip
    If $A_G$ has only fixed points, then we consider only the configurations $x$ such that no test of Rules~\ref{cons:test1}--\ref{cons:test3} fails (otherwise the configuration is not legitimate to study the dynamics induced by $M(C)$ and $C$). All these configurations converge, meaning that the circuit simulation converges to a fixed point, thus $M(C)$ converge to a fixed point (which can only be $0^{2n}$ or $1^{2n}$), and by Proposition~\ref{prop:EXPCDO_reduce_MEXPC} the circuit $C$ admits $1^n$ as a global attractor.

    \medskip
    \textbf{Case $\majority{x_T}=0$ and $\majority{x_F}=1$.}\quad
    In this case all the $OR$ gates of $M(C)$ are simulated as $AND$ gates, and all the $AND$ gates of $M(C)$ are simulated as $OR$ gates, which produced a dynamics isomorphic to the previous case. 
    %\qed
\end{proof}

Combining Propositions~\ref{prop:coICVP_reduce_EXPC}, \ref{prop:EXPCDO_reduce_MEXPC} and~\ref{prop:CDOLC_reduce_SSmC}, we obtain that $\SSmC$ is {\PSC}, the reduction from {\ICVP} being the composition of transformations $G=B(M(L(C,x,i)))$, with $G$ of maximum in-degree $7$.
%From our constructions, we also have interesting corollaries. %(proof in Appendix~\ref{a:proofs}).

\begin{theorem}\label{th:SSmC_PSC}
    $\SSmC$ is {\PSC}.
\end{theorem}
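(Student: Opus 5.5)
Membership and hardness are handled separately. For membership, a nondeterministic polynomial-space procedure suffices, and then Savitch's theorem gives $\NPSPACE=\PSPACE$.

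\emph{Membership.} Guess a configuration $x\in\{0,1\}^n$. Iterate $A_G$ from $x$ for $2^n$ steps, storing only the current configuration and a binary counter, so that we reach a configuration $y=A_G^{2^n}(x)$ lying on a limit cycle. Then iterate from $y$, for at most $2^n$ further steps, until $y$ reappears. Accept if and only if the first return time is at least $2$, i.e.\ $A_G(y)\neq y$. Each step computes local majorities in polynomial time, and only polynomially many bits are stored.

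\emph{Hardness.} We reduce the complement of $\ICVP$ to $\SSmC$ via $G=B(M(L(C,x,i)))$. Since $\PSPACE$ is closed under complement and $\ICVP$ is $\PSC$~\cite{gmst16}, this suffices. Every transformation is polytime: $L$ by Definition~\ref{def:global_atractor}, $M$ by Lemma~\ref{lemma:monotone_construc_poly}, and $B$ by Definition~\ref{def:depth_one_to_MBAN}, which is linear in the size of its input. Correctness is a chain of equivalences, applied with $C_1=L(C,x,i)$, which has $2n$ bits.
\begin{itemize}[nosep]
\item If $C,x,i$ is negative, Proposition~\ref{prop:coICVP_reduce_EXPC} gives $C_1$ a limit cycle of length $2^n>1$. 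Proposition~\ref{prop:EXPCDO_reduce_MEXPC} transfers this cycle to $M(C_1)$, and Proposition~\ref{prop:CDOLC_reduce_SSmC} then gives $A_G$ a limit cycle of length $d\,2^n\geq 2$. So $G$ is a positive instance.
\item If $C,x,i$ is positive, Proposition~\ref{prop:coICVP_reduce_EXPC} makes $1^{2n}$ a global attractor of $C_1$. The last sentence of Proposition~\ref{prop:CDOLC_reduce_SSmC}, applied to the circuit $C_1$, says that all limit cycles of $A_G$ are fixed points. So $G$ is a negative instance.
\end{itemize}

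\emph{Main obstacle.} The delicate point is matching the hypotheses of the composed propositions. Proposition~\ref{prop:CDOLC_reduce_SSmC} is stated for an arbitrary Boolean circuit with $G=B(M(\cdot))$, and I apply it to $C_1$, which is not monotone but is a legitimate Boolean circuit. I must check that the dichotomy holds: either a long cycle exists or $1^{2n}$ is a global attractor. This is exactly what Proposition~\ref{prop:coICVP_reduce_EXPC} provides, so the case of a circuit that neither cycles nor has a global attractor never arises.

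I also need to confirm that $G$ is a valid MBAN instance. Definition~\ref{def:depth_one_to_MBAN} gives odd in-degrees, namely $1,3,5,7$. If an odd number of vertices is also wanted, one isolated self-looped vertex can be added without creating any new cycle.
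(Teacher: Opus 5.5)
Your proof is correct and takes essentially the same route as the paper. Hardness comes from the composed reduction $G=B(M(L(C,x,i)))$ via Propositions~\ref{prop:coICVP_reduce_EXPC}, \ref{prop:EXPCDO_reduce_MEXPC} and~\ref{prop:CDOLC_reduce_SSmC}, and membership from exhaustive iteration in polynomial space; you are also slightly more careful than the paper in stating explicitly that the map swaps yes and no answers, so it is a reduction from the complement of \ICVP{}, which suffices because $\PSPACE$ is closed under complement.
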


From the intermediate steps of the construction, one can derive that many variants
of iterated circuit value problem are also {\PSC}.
These variations in the restrictions on the circuit and in the question
may be useful as a starting point for other hardness results,
therefore we state some meaningful ones explicitly as follows.

\begin{corollary}\label{coro:SSmC_PSC}
    The following problems are {\PSC}.
    \begin{itemize}[nosep]
        \item \emph{Circuit Global Fixed Point ({\CGlobalFP})}\\
        Given a Boolean circuit $C:\{0,1\}^n\to\{0,1\}^n$, does the dynamics induced by $C$ admits $1^n$ as a global attractor ?
        \item \emph{Monotone Circuit Two Fixed Points ({\MCtwoFP})}\\
        Given a monotone Boolean circuit $C:\{0,1\}^n\to\{0,1\}^n$, does the limit dynamics induced by $C$ admits only the two fixed points $0^n$ and $1^n$ ?
        \item \emph{Monotone Circuit Limit Cycle ({\MCLC})}\\
        Given a monotone Boolean circuit $C:\{0,1\}^n\to\{0,1\}^n$, does the dynamics induced by $C$ admits a limit cycle of length at least $2$ ?
        \item \emph{Monotone Circuit Exponential Limit Cycle ({\MCExpLC})}\\
        Given a monotone Boolean circuit $C:\{0,1\}^n\to\{0,1\}^n$, does the dynamics induced by $C$ admits a limit cycle of length at least $2^{n/2}$ ?
        \item \emph{Circuit Orbit $0^n$ to $1^n$ ({\Czerotoone})}\\
        Given a Boolean circuit $C:\{0,1\}^n\to\{0,1\}^n$, does there exist $t\in\N$ such that $C^t(0^n)=1^n$ ? That is, is $1^n$ in the orbit of $0^n$ ?
        %\item \emph{Circuit $k$ Connected Components} (C-$k$CC) for any $k\geq 1$\\
        %Given a Boolean circuit $C:\{0,1\}^n\to\{0,1\}^n$, does the dynamics induced by $C$ admits exactly $k$ connected components ?
        %\item \emph{Circuit at least $k$ Connected Components} (C-$\geq k$CC) for any $k\geq 2$\\
        %Given a Boolean circuit $C:\{0,1\}^n\to\{0,1\}^n$, does the dynamics induced by $C$ admits at least $k$ connected components ?
        \item \emph{Monotone Circuit $k$ Connected Components ({\MCkCC}) for any $k\geq 2$}\\
        Given a monotone Boolean circuit $C:\{0,1\}^n\to\{0,1\}^n$, does the dynamics induced by $C$ admits exactly $k$ connected components ?\\
        Also for $k=1$ without the monotonicity condition.
        \item \emph{Monotone Circuit $\geq k$ Connected Components ({\MCgekCC}) for any $k\geq 3$}\\
        Given a monotone Boolean circuit $C:\{0,1\}^n\to\{0,1\}^n$, does the dynamics induced by $C$ admits at least $k$ connected components ?\\
        Also for $k=2$ without the monotonicity condition.
    \end{itemize}
\end{corollary}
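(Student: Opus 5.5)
Membership in \PSPACE{} is immediate for every item: iterating a circuit for at most $2^n$ steps from each of the $2^n$ configurations (or from the single configuration $0^n$), recording periods and attractors, uses polynomial space. Counting connected components of the dynamics also fits in polynomial space, by counting limit cycles through their lexicographically smallest configuration. Hardness for each item comes from the intermediate transformations $L$ and $M$ of Propositions~\ref{prop:coICVP_reduce_EXPC} and~\ref{prop:EXPCDO_reduce_MEXPC}, using that \ICVP{} and its complement are both \PSC{} since \PSPACE{} is closed under complement.

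For {\CGlobalFP}, I take $C_1=L(C,x,i)$. By Proposition~\ref{prop:coICVP_reduce_EXPC}, $C_1$ has global attractor $1^{2n}$ if and only if $(C,x,i)$ is positive for \ICVP{}. For {\MCtwoFP} and {\MCLC}, I apply $M$ to $C_1$. By Proposition~\ref{prop:EXPCDO_reduce_MEXPC}, $M(C_1)$ has only the two fixed points exactly when the instance is positive, and it has a limit cycle of length $2^n\ge 2$ otherwise; the latter gives the complement reduction for {\MCLC}. For {\MCExpLC}, $M(C_1)$ has $4n$ bits and a cycle of length $2^n=2^{4n/4}$. I would either pad $C$ so that the counter block is longer than the $y$ block, making the dimension $N$ satisfy $2^{\text{counter length}}\ge 2^{N/2}$, or lengthen the counter in $L$ to $3n$ bits; the duplication by $M$ then has to be checked against the target $2^{N/2}$. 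Getting this padding right is the main bookkeeping obstacle, and I would adjust the counter length until the inequality holds.

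For {\Czerotoone}, I would modify $L$ so that the reset writes $x$ and starts from $counter=0^n$. Starting from $0^{2n}$, the first step writes $C(x)$ and sets $counter$ to $1$. The orbit reaches $1^{2n}$ if and only if some $C^t(x)_i=1$; otherwise it cycles forever without reaching $1^{2n}$, because $y$ never contains a $1$ at position $i$.

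For connected components, each component contains exactly one limit cycle. For $k=1$ (general circuits), the dynamics of $C_1$ has one component if and only if the instance is positive. Otherwise it has at least two, namely $1^{2n}$ and the cycle, which gives $k=2$ for the ``at least'' version. For monotone circuits, $0$ and $1$ are always two fixed points, so $M(C_1)$ has exactly $2$ components if and only if the instance is positive. For general $k$, I would add $k-2$ disjoint isolated toggling or fixed gadgets. This is delicate for monotone circuits, since adding extra monotone fixed points can merge components with the others, and I would take disjoint products with care. Instead, I would take the disjoint union of $M(C_1)$ with a monotone circuit having a known number of components, for instance identity bits, which contribute multiplicatively. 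Calibrating exact counts under products is the second obstacle; I would use a product with a small fixed monotone system whose component count is known, and choose the parameters so that the count equals $k$ exactly in the positive case and exceeds $k$ in the negative case, yielding both the ``exactly $k$'' and ``at least $k$'' reductions.
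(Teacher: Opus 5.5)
Your treatment of {\CGlobalFP}, {\Czerotoone}, {\MCtwoFP}, {\MCLC} and of the base cases of the component problems matches the paper's proof. The base cases are exactly $1$ and at least $2$ via $L(C,x,i)$, and exactly $2$ and at least $3$ via $M(L(C,x,i))$. Two of your remaining steps, however, cannot work as proposed.

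\textbf{{\MCExpLC}.} Your count is right, and it points at something the paper's proof passes over. The paper simply invokes $M(L(C,x,i))$, which has $N=4n$ bits and a long cycle of length $2^n=2^{N/4}$. But no padding within the pipeline $M\circ L$ can fix this. With $n$ bits of $y$ and $k$ bits of counter, $N=2(n+k)$ and the cycle has length $2^k<2^{N/2}$, whatever $k$. More generally, every configuration on a cycle of length $\geq 2$ of $M(C_1)$ has the form $y\overline{y}$, since otherwise, by Lemma~\ref{lemma:behavior_of_monotone}, its image is the fixed point $0^N$ or $1^N$. Moreover $y\neq 1^{N/2}$ by Rule~\ref{cons:test1}, so there are only $2^{N/2}-1$ such configurations. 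The double-rail encoding spends half of the bits, so reaching the exponent $N/2$ requires a genuinely denser monotone encoding of the counter. One candidate is to work inside the middle slice of $\{0,1\}^N$, where negations can be simulated by monotone threshold functions. What your padding does yield is hardness for thresholds $2^{cN}$ with any constant $c<\frac12$, by taking $k\geq\frac{2cn}{1-2c}$.

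\textbf{Component counts for general $k$.} A product of $M(C_1)$ with an independent monotone system $G$ cannot be tuned to an arbitrary $k$. In the positive case the limit cycles of $M(C_1)$ are just the fixed points $0$ and $1$. A fixed point paired with a limit cycle of $G$ gives exactly one limit cycle, so the product has exactly $2c$ components, where $c$ is the number of components of $G$. This count is always even, so ``exactly $k$'' for odd $k$ is out of reach. ``At least $4$'' is missed too: $2c<4$ forces $c=1$, which means no extra bits at all, because a monotone $G$ on at least one bit has the fixed points $0$ and $1$; and then the negative count is only $3$.

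The missing idea, which is the paper's, is a coupling that is not a product. Add $m$ extra bits $z$ that stay frozen while the main block $w$ is neither $0$ nor $1$. Once $w=0$ (resp.\ $w=1$), they relax to a prescribed number $a$ (resp.\ $b$) of fixed points. This stays monotone if the relaxation is deflationary at $0$ and inflationary at $1$, e.g.\ $z'=h_0(z)\vee\bigl((\bigvee_j w_j)\wedge z\bigr)\vee\bigl((\bigwedge_j w_j)\wedge h_1(z)\bigr)$ with monotone $h_0(z)\leq z\leq h_1(z)$. The positive case then has exactly $a+b$ components, which can be made any $k\geq 2$. The negative case has $2^m\geq 1$ more, namely the long cycle paired with each frozen value of $z$. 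This gives both the ``exactly $k$'' and the ``at least $k+1$'' reductions.
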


\begin{proof}
    The membership to $\PSPACE$ still follows a naive search algorithm as for {\OCCM}.
    The $\PSPACE$-hardness are obtained as follows.
    {\CGlobalFP} and {\Czerotoone} are immediate by Proposition~\ref{prop:coICVP_reduce_EXPC},
    which also gives the result {\MCkCC} for $k=1$ and {\MCgekCC} for $k=2$
    when the monotonicity condition is dropped.
    These three problems are trivial for monotonous circuits because there always is
    at least the two fixed points $0^n$ and $1^n$ in their limit dynamics.
    All other problems are in {\PSPACE} even when restricted to monotonous circuits.

    \medskip
    The transformation $M(L(C,x,i))$ proves,
    by Propositions~\ref{prop:coICVP_reduce_EXPC} and~\ref{prop:EXPCDO_reduce_MEXPC},
    the hardness of {\MCtwoFP}, {\MCLC}, {\MCExpLC}, {\MCkCC} for $k=2$ and {\MCgekCC} for $k=3$.
    One can artificially increase the number of connected components in the dynamics
    induced by a monotone circuit, by adding input/output bits.
    Given that the two original fixed points are $0^n$ and $1^n$,
    one can for example add $m$ bits which basically remain fixed
    (so far the number of fixed points is multiplied by $2^m$),
    except when the configuration has reaches one of the two original fixed points
    then the $m$ additional bits converge to a predefined number of fixed points
    (which may be different for the two fixed points,
    hence obtaining an arbitrary number of additionnal connected components).
    This settles the complexity of {\MCkCC} for any $k\geq 2$ and of {\MCgekCC} for any $k\geq 3$.
    %\qed
\end{proof}

\section{Conclusion and perspectives}

We have characterized when does an MBAN $A_G$ solves the DCT, by three forbidden patterns:
it does if and only if $G$ contains no leader (potentially changing the majority),
nor self-sufficient (potentially leading to a non-trivial fixed point),
nor self-sufficient $m$-cycle for any $m\geq 2$
(potentially leading to a limit cycle of length $\geq 2$).
Then, we have settled the complexity of recognizing each of these three patterns:
cheking whether $G$ admits a leader or a self-sufficient is {\NPC},
and whether $G$ admits a self-sufficient $m$-cycle with $m\geq2 $ is {\PSC}.
Corollary~\ref{coro:SSmC_PSC} gives larger consequences of interest.
Nevertheless, our results do not classify the complexity of \BCTP{}.
In particular, the construction of Theorem~\ref{th:SSmC_PSC} always has two self-sufficient subsets,
namely $T$ and $F$.

We conjecture that \BCTP{} is {\PSC}.
To design a polytime reduction, the current difficulty lies in constructing positive instances,
which is not easy since few families are known~\cite{ao20}.
A direction of research consists in the discovery of more graphs $G$ avoiding the three forbidden patterns,
\emph{i.e.}~such that $A_G$ solves the DCT.
Going beyond automata networks, one may further try to bring back some more uniformity to the construction,
for example $k$-regular graphs (for some small odd integer $k$)
so that the dynamical systems get closer to cellular automata.
More broadly, the discovery of other tasks that can be solved by some fixed local rule
in the spirit of the global tasks, depending on the structure of the graph, may be fruitful.

%%%%%%%%%%%%%%%%%%%%%%%%%%%%%%%%
\section*{Acknowledgment}

The authors received support from projects
ANR-24-CE48-7504 ALARICE and
HORIZON-MSCA-2022-SE-01 101131549 ACANCOS.

%%%%%%%%%%%%%%%%%%%%%%%%%%%%%%%%
\bibliographystyle{plain}
\bibliography{biblio}
	
%%%%%%%%%%%%%%%%%%%%%%%%%%%%%%%%
%\newpage
%\appendix

%%%%%%%%%%%%%%%%%%%%%%%%%%%%%%%%
%\section{Omitted proofs}\label{a:proofs}

\end{document}